\documentclass[11pt,reqno]{amsart}
\usepackage{a4wide}
\usepackage{amssymb,amsmath,amsthm,enumerate,amsfonts}
\usepackage{newlfont}
\usepackage{dsfont}
\usepackage{amsbsy}
\usepackage[dvips]{graphicx}
\usepackage{verbatim} 
\usepackage{float}
\usepackage{color}

\def\CC{\mathrm \hbox{C\hstpace{-1.em}\raisebox{.47ex}{
         \textrm{\mbox{$\scriptscriptstyle |$}}}\hspace{+0.8 em} }}
\def\CC{\hskip.2em\raisebox{.22em}{\makebox[0pt][c]{$\scriptscriptstyle |$}}
\hskip-0.25em\mathrm{C}}

\def\hcboxcm#1#2{\hbox to #1{\hfill #2 \hfill}}

\newcommand{\n}{\noindent}
\newcommand{\erre}{\mathbb{R}}

\def\null{\hbox{}}
\let\eps\varepsilon

\def\tn1{\widetilde n_1}
\def\tn2{\widetilde n_2}
\def\tn{\widetilde n }

\let\ds\displaystyle

\let\la\langle  \let\ra\rangle

\def\be{\begin{equation}}
\def\ee{\end{equation}}
\def\bea{\begin{eqnarray}}
\def\eea{\end{eqnarray}}
\let\vect\overrightarrow
\def\bean{\begin{eqnarray*}}
\def\eean{\end{eqnarray*}}

\def\Tr{\mathrm{ Tr}}

\def\indic {{\mathds 1}} 

\def\mt{\mathcal}

\def\tpsi{\tilde \psi}

\let\ssc\scriptscriptstyle

\def\={\, = \, }

\def\sch{{Schr\"odinger}\,\,}

 \def\OO{\rm \hbox{O\kern-.34em\raise.47ex
         \hbox{$\scriptscriptstyle |$}\kern-.46em\raise.47ex
         \hbox{$\scriptscriptstyle |$}\kern+0.5 em }}
\def\RR{{\mathbb R} }

\def\S{{\cal S}}
\def\T{{\cal T}}

\def\L{{\mathcal L}}
\def\I{{\mathcal I}}
\def\T{{\mathcal T}}

\def\CC{\mathbb{C}}

\def\OO{\mathbb{O}}

\def\RR{\mathbb{R}}

\def\mt{\mathcal}
\def\tpsi{\widetilde \psi}

\newcommand{\f}{\frac}
\newcommand{\ep}{\varepsilon}

%
%
\def\Box{\leavevmode\vbox{\hrule
     \hbox{\vrule\kern4pt\vbox{\kern4pt}%
           \vrule}\hrule}}
\def\blackbox{\leavevmode\vrule height 5pt width 4pt depth 0pt\relax}
\catcode`@=11

\def\eqalign#1{\null\,\vcenter{\openup1\jot \m@th
   \ialign{\strut \hfil$\displaystyle{##}$ & $\displaystyle{{}##}$\hfil
      \crcr#1\crcr}}\,}
%
%
\def\eqalignrll#1{\null\,\vcenter{\openup1\jot \m@th
   \ialign{\strut \hfil$\displaystyle{##}$ & $\displaystyle{{}##}$\hfil
    & $\displaystyle{{}##}$\hfil
      \crcr#1\crcr}}\,}
\def\eqalignrcl#1{\null\,\vcenter{\openup1\jot \m@th
   \ialign{\strut \hfil$\displaystyle{##}$ &\hfil $\displaystyle{{}##}$\hfil
    & $\displaystyle{{}##}$\hfil
      \crcr#1\crcr}}\,}
\def\eqalignno#1{\displ@y \tabskip\@centering
  \halign to\displaywidth{\hfil$\@lign\displaystyle{##}$\tabskip\z@skip
    &$\@lign\displaystyle{{}##}$\hfil\tabskip\@centering
    &\llap{$\@lign##$}\tabskip\z@skip\crcr
    #1\crcr}}
\newcounter{appendix}
\newcounter{sectionz}
\setcounter{appendix}{0}
\def\appendix{\advance\c@appendix by 1
\def\thesectionz {\Alph{appendix}}
\def\thesection{\Alph{appendix}} 
   \ifnum\c@appendix=1 \setcounter{section}{-1} \fi
   \@startsection {section}{1}{\z@}{-3.5ex plus -1ex minus 
  -.2ex}{2.3ex plus .2ex}{\large\bf}}

\catcode`@=12
%
\newtheorem{lemme}{Lemma}[section]  

\newtheorem{theorem}[lemme]{Theorem}

\newtheorem{corollary}[lemme]{Corollary}

\newtheorem{definition}[lemme]{Definition}

\newtheorem{proposition}[lemme]{Proposition}

\newtheorem{remark}[lemme]{Remark} 

\def\deblem{\begin{lemme}\it }
\def\finlem{\end{lemme}}
\def\debthm{\begin{theorem}\it }
\def\finthm{\end{theorem}}
\def\debprop{\begin{proposition} \it}
\def\finprop{\end{proposition}}
\def\debcor{\begin{corollary}\it }
\def\fincor{\end{corollary}}
\def\debdef{\begin{definition}\it}
\def\findef{\end{definition}}
\def\debrem{\begin{remark}\em}
\def\finrem{\null\hfill\blackbox\end{remark}}
%

%

%
\newtheorem{thm}{Theorem}  
\newtheorem{lem}[thm]{Lemma}
\newtheorem{defi}[thm]{Definition}
\newtheorem{prop}[thm]{Proposition}











 








\title[Decoherence for a heavy particle interacting with a light particle]
{Decoherence for a heavy particle interacting with a light particle: new
  analysis and numerics}


\author[R. Adami, M. Hauray, C. Negulescu]{Riccardo Adami, Maxime Hauray,
Claudia Negulescu}

\address{Riccardo Adami, Dipartimento di Scienze Matematiche
  G.L. Lagrange, Politecnico di Torino, Torino,
Italia.}

\email{riccardo.adami@polito.it}

\address{Maxime Hauray, Universit\'e d'Aix-Marseille \& CNRS \& \'Ecole 
Centrale Marseille, LATP UMR 7353, F-13453
Marseille, France.}

\email{maxime.hauray@univ-amu.fr}

\address{Claudia Negulescu, Universit\'e de Toulouse \& CNRS, UPS, Institut de Math\'ematiques de Toulouse UMR 5219, F-31062 Toulouse, France.}

\email{claudia.negulescu@math.univ-toulouse.fr}

\date{\today}

\subjclass[2010]{35Q41, 65M06, 81S22}

\keywords{ Quantum mechanics, 
Schr\"odinger equation, Heavy-light particle scattering, Interference,
Decoherence, Asymptotic Analysis, Numerical discretization.
}

\begin{document}

\maketitle

{\sc Abstract.} {\footnotesize
We study the dynamics of a
quantum heavy particle undergoing a repulsive interaction with a light
particle. The main motivation is the detailed description of the loss of
coherence induced on a quantum system  (in our model, the heavy particle) by 
the interaction with the environment (the light particle). 

The content of the paper is analytical and
numerical. 

Concerning the analytical contribution, we show that an approximate description 
of the dynamics of the heavy particle can be carried out in two steps: first 
comes the interaction, then the free evolution. 
In particular, all effects of the interaction can be embodied in the
action of a collision operator
that acts on the initial state of the heavy particle.
With respect to previous analytical results on the same topics, we
turn our focus 
from the M{\o}ller wave operator to the full
scattering operator, whose analysis proves to be simpler.

Concerning the numerical contribution, 
we exploit the previous analysis to construct an efficient numerical scheme
that turns the original, multi-scale, two-body problem into two
one-body problems
which can be solved separately. This leads to a considerable gain in
simulation time. We present and interpret some simulations carried out on specific
one-dimensional systems by using the new scheme.

According to simulations, decoherence is produced by
an interference-free bump which arises from the 
initial state of the heavy particle immediately after the collision.
 We support such a picture  by numerical evidence as well as
by an approximation theorem.
}


\bigskip

\maketitle

\section{Introduction} \label{SEC1}
In the present paper we describe, both through a theoretical analysis
and numerical simulations,
the following idealized experiment: a
quantum particle lies in a state given by the superposition of two
localized wave functions (``bumps''), initially separated and moving
towards each other. At 
a certain time, the
particle interacts with another particle that is
considerably lighter. As a consequence, the quantum interference
arising when the two bumps corresponding to the heavy
particle  eventually meet is damped. The damping of the interference  
is called 
\emph{decoherence}, and provides a description of the transition from the
quantum to the classical world
(\cite{giulini,Joos-Zeh,schlosshauer,petruccione,caldeira,hornberger1,hornberger2,hornberger3,omnes}).     
Despite the conceptual relevance of decoherence to the foundations of quantum
mechanics as well as in applications (e.g.\  in quantum computation) 
and, more generally, in the understanding of the classical picture of
the macroscopic world, a rigorous and exhaustive description 
of this phenomenon is still at its beginnings; nevertheless, in the last decade
many important steps have been accomplished 
(see e.g.\  \cite{spohn,dft,adami_1,adami_2,adami_3,clark,carlone}).
\\

According to the
  principles of quantum mechanics, the time evolution of the wave
  function 
$\psi_\ep (t,X,x)$ 
representing the two-body quantum system is given by
the Schr\"odinger equation
\begin{equation} \label{SCH2D}
\left\{
\begin{array}{l}
\ds i \partial_t \psi_\ep  \ =  \ - \f 1 {2M} \Delta_X \psi_\ep   
- \f 1 {2   \varepsilon M} \Delta_x \psi_\ep + \f {1 +  \varepsilon}
\varepsilon V (x - X)\,  \psi_\ep ,\\[3mm]
\ds \psi_\ep (0,X,x)=\psi_\ep^0 (X,x)\,,
\end{array}
\right.
\end{equation}
 where we used units in which $\hbar = 1$,
$M$ is the mass and $X$ is the spatial coordinate of the heavy
particle, while $\eps M$ is the mass and $x$ is the spatial coordinate
of the light one. So $\eps$ is the ratio between the mass of the light
particle and the mass of the heavy one, and we study the
regime $\eps \ll 1$, which we call the \emph{small mass ratio}
regime.

The interaction is described by the potential $\f {1 +  \varepsilon}
\varepsilon V$;
the uncommon
coupling constant is chosen to be of order $\eps^{-1}$ so that even a
single collision leaves an observable mark on
the heavy particle; furthermore, the factor $1 + \eps$ hardly affects
the dynamics and simplifies some expressions.
We shall always choose a {\em factorized} initial state,
  i.e. $\psi_\eps^0$ will be the product of functions depending only on the variable
  $X$ and  the variable $x$, respectively (see
  \eqref{due}). Physically, this means that initially
the two particles are
uncorrelated. We shall always assume that $\psi_\eps^0$, and
consequently $\psi_\eps (t)$, is normalized in $L^2(\RR^{2d})$. \\

The aim of the present paper is threefold: first, we rigorously 
derive a {\em collisional dynamics}
  for the heavy particle as an approximation of the underlying quantum evolution
  \eqref{SCH2D} in the limit
  $\eps \to 0$ (Sections 3 and 4); second, we
  employ such a collisional dynamics in order to build up an efficient numerical
  scheme (Sections 5.1 and 5.2); third, we observe the appearance of
  decoherence through numerical simulation (Section 5.3). Eventually,
  simulations show an unpredicted mechanism for the
  occurrence of decoherence, which we are able to derive rigorously
  (Sections 5.4 and 6.3).

The emergence of a collisional dynamics, well-known since \cite{Joos-Zeh}
and rigorously deduced already in
\cite{spohn,dft,adami_1,adami_2,adami_3,clark,carlone}, 
can be explained by the fact that
the characteristic evolution time is of order
one for the heavy particle and
of order $\ep$ for the light one, so
the light particle diffuses almost instantaneously, while, during the
interaction, the 
heavy particle hardly moves.
Thus, the main effect of
the interaction on the heavy particle
is the reduction of the quantum interference 
among the two bumps.  This, roughly 
speaking, is the
content of the celebrated Joos-Zeh's heuristic formula
(see e.g.\  formula (3.43) in \cite{Joos-Zeh}), which establishes that
 the state of the
heavy particle has hardly changed, while the state of the light particle
is transformed by the action of a
suitable scattering operator.

In order to give a mathematical description to this scenery, in
Section 3 we introduce a {\em collision operator} $\mt I_\chi$, 
 whose action consists in multiplying 
the kernel $\rho^M (X,X')$ of the density operator $\rho^M$ of the
heavy particle by the {\em collision function}
\[
I_\chi(X,X') := \la S^{X'} \chi | S^{X} \chi \ra,
\] 
where, following the physicist's habit, the Hermitian product $\la
\cdot | \cdot \ra$ is anti-linear in the first factor and linear in
the second.
Furthermore,
$S^X$ stays for the one-particle scattering operator constructed assuming
that the interaction potential is $V (\cdot - X)$.
Notice that $0 \leq |I_\chi| \leq 1$; we will show
in Section 4 that decoherence arises precisely when $ I_\chi$
is not identically equal to one.

Several novelties are present 
with respect to previous achievements on analogous problems
(see \cite{adami_3,adami_2,carlone,dft}). 
  First, we use a different initial state for the light particle in 
order to replace the  M{\o}ller wave operator
with the  scattering operator (Theorem \ref{thm:geneS}); this
choice makes it possible to provide explicit formulas for the
function $I_\chi$ (Section
3). Second, in the present work the convergence
of the two-body dynamics to the two separated one-body
dynamics is given in Theorem  \ref{thm:geneS}, but the convergence
rate in $\eps$ is not 
explicitly specified (see formula \eqref{estim:4C}).
However, given an interaction potential $V$, one can compute the
related scattering operator $S$ and then use formula \eqref{def:C1} in
order to find the convergence rate. 
Third, we generalize Theorem \ref{thm:geneS} to 
the case of density operators in
Theorem \ref{thm:geneRho}. Fourth, we give explicit formulas for the
momentum and energy transfers between the two particles.  
To this regard, 
we remark that, even though the incoming particle has a negligible
mass, the transfer of energy and momentum is not trivial, 
 since in the limit $\eps \to 0$ the colliding light particle has finite
momentum and infinite energy.

\medskip

Concerning the numerical part of the paper, we recall that in
\cite{Cla_Riccardo} 
the authors exhibited some numerical simulations aimed at checking the
Joos-Zeh's 
approximation formula 
from a quantitative point of view: indeed, the error in such 
formulae, as computed in \cite{dft,adami_3,adami_2},
contains a multiplicative constant whose optimal size is
unknown (see e.g.\  estimate (2.2) in \cite{adami_2}). The numerical
simulations in \cite{Cla_Riccardo} show  
that,
in spite of this indeterminacy, the approximation in \cite{adami_2,
  adami_3} can be successfully employed  at least 
under some hypotheses on the interaction potential
 (for details see Section 3.2 in \cite{Cla_Riccardo}).
Those numerical results were achieved by using
 standard discretization arguments and a splitting 
(Peaceman-Rachford) procedure.
The main drawback of such a method was
that, for fixed grids, the 
precision was sensitive to the value of $\eps$, 
so that, in order to follow the fast evolution of the light particle,
one had to employ tiny meshes both in time and space, 
and the computations became expensive
in time and memory.

Conversely, in the present approach
the role of the 
light particle is limited to the computation of the collision function
$I_\chi$.
The focus of the analysis is the dynamics of the
heavy particle that, in our approximation, becomes free after the collision.
In this way, the computational problem drastically simplifies
and the numerical cost is 
considerably diminished, both in memory and in time; moreover, it
becomes possible to simulate an experiment with
many colliding light particles, which is crucial for the sake of studying 
the continuous damping of the interference.

As already stressed, our simulations lead to a description of
decoherence that, at least to our knowledge, was never put in evidence before. 
Indeed, simulations show that, 
  if the light particle initially has a non-vanishing mean momentum,
  then after the 
  collision a fraction of the wave function of the heavy
  particle organizes itself into a bump that moves independently of
  the rest (see the first picture in Figure \ref{fringes1}). Moreover,
  the newborn bump is uncorrelated with the other components of the
  state of the heavy particle, so it does not take part in the
  interference. Thus, the damping of the interference pattern can be
  explained by the fact that a fraction of the initial wave function
  decouples from the rest. We give a rigorous result which portrays
  this phenomenon,
  if some hypotheses on the involved spatial scales are satisfied
  (Theorem \ref{prop:last_approx}).

For simplicity, our numerical treatment is limited to the one-dimensional case,
even though the general idea and the theoretical results apply to
systems in arbitrarily 
high dimension.

\medskip

This paper is more concerned with a precise
  estimate on the dynamics of the heavy particle, than on
  interpretation of decoherence in terms of the foundations of quantum
  mechanics; nevertheless, some words on the conceptual background are
  in order.

In \cite{Cla_Riccardo} we introduced and discussed an
interpretation of decoherence based on the analysis of the
configuration space of the system. 
According
to such an interpretation, the two bumps representing the 
state of the heavy particle can be plotted as two bumps that, in the
absence of the light particle, move one towards the other: 
the simulation shows that when the centres of the two bumps coincide,
the overlap between the two bumps is complete and then  interference
is  maximal (see the last picture of
Figure 9 in \cite{Cla_Riccardo}).

On the other hand, if the light particle is
present, then its position appears as an additional dimension in 
the configuration space of the two-particle system. Now, since both
bumps of the heavy particle undergo a collision with the light one,
they will be perturbed in two different ways, so that 
the eventual overlap will not be complete and thus the interference is 
only partial
(to this regard, see the second
picture in the first row of Figure 9 in \cite{Cla_Riccardo}).
For the details of this explanation of decoherence
we refer to Section 4 of \cite{Cla_Riccardo} and to Remark~\ref{Rk:entang} of the present article: what we would like to
stress here is the presence of a portion of the two-body wave function
that, in the (full) configuration space, is prevented from
overlapping and hence from producing interference. 

Such a description has the advantage of being clear and simple, both
from the physical point of view and for the mathematics involved: the
only mathematical object that is needed is the wave function. On the
other hand, in the present paper we aim at getting rid of the
coordinates of the light particle, reducing then the number of
variables to consider, which is often important for numerical
computations. The price we have to pay is that we lose the
enlightening picture in the coordinate space and we have to deal with
more complicated mathematical objects, like density operators. 

\noindent
The
description of the decoherence phenomenon that emerges from our analysis can be summarized as follows, according to formula
(\ref{eq:ApproxDensMat}): each bump of the heavy particle interacts
with the light particle only through the portion of its wave function corresponding to the
reflection coefficient of the interaction. Then,
the density operator of the heavy particle
after the interaction turns out to be a convex combination of three
density operators: the one corresponding to the
suitably damped initial state, that did not interact when the light particle was transmitted, 
the one corresponding to the 
left bump when the light particle was reflected from the left, and the one corresponding to the right bump when the light particle was reflected from the right.
Only the first one has preserved the capability to produce
interference, while the two others did not. The overall interference
is then damped due to the fact that the portion of the heavy particle
that underwent interaction cannot interfere any more. 
The correspondence with the non-overlapping components of the two-body wave
function, as displayed in the analysis in the configuration space, is explained briefly in Remark~\ref{Rk:entang}. Actually, we plan to investigate that correspondence in more details in a future work.

\medskip

The outline of the present paper is the following. 
In Section \ref{preliminaries} we introduce the mathematical framework
and fix some notation.  
 Section \ref{theorem} provides
the general approximation theorems, enabling one to replace the two-body
Schr\"odinger picture by a suitable
collisional dynamics.
Section
\ref{sec:Icalc} is devoted to the study of the 
collision function $I_\chi$: we give general formula, study some
properties and provide approximations for some particular choices
of the interaction potential $V$.  
In Section
\ref{SEC4} we describe the numerical method and present results obtained 
with that scheme. In particular, we carefully analyze
the decoherence effect carried on the heavy particle by
the interaction with the light one, showing the appearance, after the
collision, of an uncorrelated bump and explaining it theoretically. 
Finally, the last sections are
devoted to the rigorous mathematical proofs of the main theorems of
Sections \ref{theorem} and \ref{thexpla}.

\section{Preliminaries} \label{preliminaries}

Let us recall some elementary notions of quantum mechanics and fix some
notation.
The state of a pair of
quantum particles in space dimension $d$ can be represented
by a function $\psi$ in $L^2 (\mathbb{R}^{2d})$ called the ``wave
function'', whose norm equals one, to be
interpreted according to the well-known Born's rule: 
given $\Omega_1,
\Omega_2 \subseteq \mathbb{R}^d$, the quantity $\int_{\Omega_1} \int_{\Omega_2}  |\psi
(X, x) |^2\,dx\, dX $ is the joint probability of finding the heavy particle
in the domain $\Omega_1$ and the light one in $\Omega_2$ after a
measurement of their positions. 
In order to detect and measure the decoherence, one has to study the
probability density $\int_{\RR^d} | \psi (X,x)|^2\, dx$ of finding the
heavy particle at a point $X$,
averaged on
the position of the light one.

As already stated, we assume that the ratio of the masses of the two
particles is small, and  fix the mass of the heavy particle to $M=1$ for the analytical investigations, while the mass of the light particle will be denoted by $m = \varepsilon \ll 1$. In the numerical section (number 5) however we will choose different values for $M$ for scaling reasons.

We assume that the interaction between the two particles can be
modeled by a regular, rapidly decaying and 
non-negative  potential $V$, that depends on the distance between the two
particles only.  Due to the regime of small mass ratio, in
order to obtain a non-trivial evolution for the heavy particle, we 
suppose that the strength of the interaction is of the
order of the inverse of the mass ratio ($\varepsilon^{-1}$). More
precisely, as this choice simplifies the analysis and does not affect the results,
we define the coupling constant as
$\varepsilon^{-1} (\varepsilon + 1)$, which is the inverse of the
reduced mass of the two-body system.

Under such assumptions,
the time evolution of the two-body wave function $\psi_\ep (t,X,x)$ is
given by (\ref{SCH2D}), associated with the initial condition 
\begin{equation} \label{due}
\psi_\ep (0,X,x) \ = : \ \psi^0_\varepsilon (X,x) \ = 
\ \varphi (X) [
  U_0 (- \varepsilon^{- \gamma}) \chi] (x), 
\end{equation}
where $\varphi$ and $\chi$ are regular functions (see next section for
the exact regularity required) and $\gamma \in (0,1)$.  The  
presence of the  free propagator $ U_0 (- \varepsilon^{- \gamma})$ in the
definition of the initial state of the light particle  is useful to describe a
situation in which the light particle comes 
from infinity and reaches $x=0$ in a time of order $\eps^{1
  - \gamma}$. Furthermore, it makes it possible
to treat $\chi$ as an incoming state in the sense of the
scattering theory (see e.g.\  \cite{rs3}).

\subsection{Notation}

\mbox{}

\smallskip \n
\textbullet\,  For $p \in [1, \infty]$, the
norm in $L^p (\RR^d)$ or in $L^p (\RR^{2d})$
is denoted by $\| \cdot \|_p$: the context will always clarify the
domain we refer to.
 For the
norm in $H^s (\RR^d)$ for $s \in \RR$, we use the symbol $ \| \cdot \|_{H^s}$.
 
\smallskip \n
\textbullet\, We denote the free Hamiltonian operator in $L^2 (\RR^d)$ by
$$H_0 : = - \f 1 2 \Delta, \quad H_0:H^2(\RR^d) \subset L^2(\RR^d) \rightarrow L^2(\RR^d)\,,$$
which is self-adjoint in
$L^2 (\RR^d)$ and
generates the free Schr\"odinger propagator, denoted in the following by $U_0 (t)$. The family of such operators is a strongly
continuous unitary group (for more details, see e.g.\  \cite{rs1}, Ch. 8). At 
fixed
$t$, $U_0 (t)$ acts as the convolution with the integral kernel 
\[ 
U_0 (t, x) \ = \ \f 1 {(2 \pi i t)^{d/2}} e^{i \f { | x |^2}{2t}}, \qquad x
\in \RR^d.
\]

\smallskip \n
\textbullet\, Whenever a tensor product appears, the
first factor refers to the heavy particle or to its state, while the
second refers to the light particle or to its state.
The convention applies to operators and wave functions. 
Given a wave function $\chi_X$ for the light particle, where the
  coordinate $X$ of the heavy particle enters as a parameter, $\varphi \otimes \chi_X$ will denote the wave
function defined by 
\[
[\varphi \otimes \chi_{_X} ] (X,x):= \varphi(X) \chi_{_X}(x).
\]
Of course, this is an abuse of notation since 
$\chi_{_X}$ depends on $X$, but it will be useful and unambiguous in the 
sequel.

\smallskip \n
\textbullet\,The interaction between the light and the heavy particle is
described by the potential $V : \RR^d \longrightarrow \RR$. In 
Theorems \ref{thm:geneS} and \ref{thm:geneRho} the potential $V$ is required to
fulfill some
general hypotheses (see assumptions (H1)-(H3) and related
  comments). 
For the numerical analysis (See Section
\ref{sec:diffpot}) three different kinds of  $V$ are considered,
which share the features 
of being non negative and rapidly decreasing, but are different in terms of local
regularity.

\smallskip \n
\textbullet\,
We denote by $H_V$ the Hamiltonian
\[  
H_V : =-\frac 12 \Delta + V,
\]
where $V$ is the multiplication by $V(x)$.
 In all cases we consider, $H_V$ is self-adjoint, and $U_V(t)$ denotes the unitary group generated by $H_V$, i.e.
\[
U_V(t) : = e^{-i  H_V t}.
\]

\smallskip \n
\textbullet\,  We denote by $S$ the scattering
operator between the self-adjoint operators $H_0$ and $H_V$, i.e.
\begin{equation} \label{defs}
S_V := s-\lim_{t,t' \to +\infty}  S_V(t,t')\,,
\quad  \text{where} \quad   S_V(t,t') : =
U_0(-t') U_V(t+t') U_0(-t),
\end{equation}
 and the limit holds in the strong operator topology. In all
  cases we consider, $S_V$ is well-defined and unitary.

\smallskip \n
\textbullet\, 
Consider the self-adjoint Hamiltonian operator $H_V$, its unitary group
$U_V$ and the related scattering operator $S_V$. Then,              
the shifts by any $X \in \RR^d$, denoted respectively 
by $H_V^X$, $U^X_V$ and $S_V^X$, are also well-defined
and share the properties of the unshifted ones. More
explicitly, 
\begin{align*}
H_V^X  & : = - \f 1 2 \Delta + V ( \cdot - X),    \qquad  U_V^X(t) : = e^{-i  H_V^X t},\\
S^X_V &:= s-\lim_{t,t' \to +\infty}  S^X_V (t,t'),
\quad \text{where} \quad S^X_V (t,t') \ : = \ U_0(-t') U_V^X(t+t') U_0(-t).
\end{align*}
When no confusion is possible, we will forget the subscript $V$ and use
the shorthand notation $H,S,U$ and $H^X,U^X,S^X$.  

\smallskip \n
\textbullet\, The two-particle free Hamiltonian operator and the
Hamiltonian operator containing the interaction among the two
particles shall be denoted respectively by
\[
H_\ep^f:=  - \f 1 2 \Delta_X - \f 1 {2 \eps} \Delta_x, \qquad
 H_\ep : =  - \f 1 2 \Delta_X - \f 1 {2 \eps} \Delta_x + \f {1 + \ep} \ep
V (|X-x |).
\]
Both are unbounded  self-adjoint operators on $L^2 (\RR^{2d})$.  The associated
unitary groups will be represented respectively by $U_\ep^f(t)$ and
$U_\ep(t)$. 

\noindent
The unitary group generated by $H_{\eps}^{f}$ factorizes
as
\[
U_\eps^f (t) \ = \ U_0 (t) \otimes U_0 (t/\eps).
\]

\smallskip \n
\textbullet\, The Fourier transform of a function $\phi \in L^2
(\RR^d)$ is denoted by $\widehat \phi$ and is defined by
\be \label{def:Fourier}
\widehat \phi (k) \ : = \ ( 2 \pi )^{- \f d 2} \int_{\RR^d} e^{-i k \cdot x} \phi
(x) \, dx,
\ee
where $k \cdot x$ is the Euclidean scalar product in $\RR^d$ between
the vectors $k$ and $x$.

\smallskip \n
\textbullet\, Given a functional space $H^s (\RR^d)$ (possibly with
$s=0$), we define the {\em translation operator $\theta_X$} by
\[
\theta_X \phi (x) \ = \ \phi (x - X),
\]
for any $\phi \in H^s (\RR^d)$. It turns out that $\theta_X$ is a
unitary operator.

\smallskip \n
\textbullet\, The space of self-adjoint trace-class operators (see
\cite{rs1}) on $L^2 (\RR^d)$ or in $L^2 (\RR^{2d})$ is denoted by
${\mathcal L}^1$ and the norm of a generic 
  element $\rho$ in that space is given by 
\[
\| \rho \|_{{\mathcal L}^1} \ : = \ {\mbox{Tr}} | \rho |\,, \quad
\forall \rho  \in {\mathcal L}^1, 
\]
where $\mathrm{Tr}$ denotes the trace functional (see \cite{rs1}, Ch. VI). 
The subspace of the positive elements of ${\mathcal L}^1$ is
denoted by 
${\mathcal L}^1_+$, without
specifying whether the operator of interest acts on $L^2
(\RR^{2d})$ or on $L^2
(\RR^{d})$. Anyway, the context will always be unambiguous: if $\rho$
is the density operator of a single particle, then  $\| \rho
\|_{{\mathcal L}^1}$ denotes its trace class norm as an operator on $L^2
(\RR^{d})$. Conversely, if  $\rho$
is the density operator of a two-particle system, then  $\| \rho
\|_{{\mathcal L}^1}$ denotes its trace class norm as an operator on $L^2
(\RR^{2d})$.

\smallskip \n
\textbullet\, We shall make occasional use of the so-called Dirac's bra-ket
notation: for example, the state of the heavy particle will be denoted
by $| \varphi \rangle$, while the state of the light particle by 
$| \chi \rangle$. A scalar product between two states of the light
particle shall be denoted by $\langle \chi' | \chi \rangle$, while the
orthogonal projector along the span of $| \chi \rangle$ will be
represented by $| \chi \rangle\langle \chi |$.

\smallskip \n
\textbullet\, We will always assume that wave functions $\varphi$,
$\chi$ and density operators $\rho$  are normalized, i.e.
$$
\| \varphi \|_2 = \| \chi \|_2 = 1, \qquad  \rho \in \L^1_+ \quad \text{and} \; \| \rho \|_{\L^1} = 1.
$$

\subsection{Assumptions} 

\noindent
We introduce three hypotheses that
 we shall use in Theorems~\ref{thm:geneS} and~\ref{thm:geneRho}.
\begin{itemize}
\item[(H1)] The Hamiltonian $H_V$ is self-adjoint on $L^2(\RR^d)$, its point
spectrum is empty and zero-energy resonances are absent.
\item[(H2)]   Asymptotic completeness holds for the couple of self-adjoint
  operators $H_0$ and $H_V$, and the scattering operator $S_V$ is well-defined and unitary in $L^2 (\RR^d)$.
\item[(H3)] There exist $s \in
\RR$ and a constant $C_s >0$ such that 
\[
 \forall \chi \in  L^2(\RR^d), \quad  \| |x| \, S_V \chi \|_2 \leq  \|
 |x| \, \chi \|_2 + C_s \|
\chi \|_{H^s}.
\]
\end{itemize}

\medskip
Let us comment on these hypotheses. The first one,
(H1),  requires self-adjointness of the Hamiltonian operator, which
  provides well-posedness of the associated 
Schr\"odinger equation and unitarity of the propagator; bound states, as
well as zero-energy
resonances are to be avoided for the wave
operators to be well-defined.
The second hypothesis (H2) prescribes the
unitarity of the scattering operator.
The third one (H3) is less common, and is 
a regularity assumption on the scattering operator $S_V$. 
For $d=1$, (H3) can be replaced by the stronger assumption
\begin{itemize}
\item[(H3')] There exists an $s \in
\RR$ and a constant $C_s >0$ such that the reflection and
transmission amplitudes $r_k$ and $t_k$ 
(see Section \ref{sec:scatmat})
satisfy 
\be
| \partial_k t_k| + | \partial_k r_k| \leq C_s (1 + |k|^2)^{\frac s2}.
\ee
\end{itemize}
The fact that (H3') implies (H3) is proven in Lemma~\ref{lem:ScatMom1D}.

Roughly speaking, hypotheses (H1)-(H2)-(H3) are
fulfilled by non
negative, regular potentials that decay fast enough at infinity. 
In dimension one, (H1)-(H2)-(H3') are satisfied, among others, by
the repulsive Dirac's delta potential and
 potential barriers, for which
the transmission and reflection amplitudes are explicitly known. See
Section~\ref{sec:Icalc} for more details. 

\section{Analytical results} \label{theorem}

In this section we give the analytical results that provide an approximate solution
to the problem \eqref{SCH2D}, \eqref{due} in the regime $\eps \ll 1$. 
In Theorem \ref{thm:geneS} the case of a
pure state (i.e. a wave function) is considered, and we give an
approximate solution in which
the evolution of the heavy particle is decoupled from the evolution of
the light one,
provided that the initial state has been suitably modified. 
In Theorem  \ref{thm:geneRho} we generalize the result to the case of
a mixed state (i.e. a density operator), in which the problem~\eqref{SCH2D},
\eqref{due} is replaced by the operator differential equation 
\eqref{SCHRhoScat}. Theorem \ref{thm:geneRho} provides an approximate
density operator for the heavy particle whose dynamics is governed by
a free evolution problem with modified initial data. The modification of
the initial data is given by the action of the collision operator
${\mt I}_\chi$.

For the convenience of the reader,  proofs are postponed 
to Section~\ref{appa}.

Theorems~\ref{thm:geneS} and \ref{thm:geneRho} supply the theoretical
basis of the numerical method that will be introduced in Section \ref{SEC4}.

\begin{defi}
\label{def:Sepp}
Given $\ep > 0$, the operator
$\S_\ep$, acting 
on $L^2(\RR^{2d})$  
is the unique unitary extension of
\be \label{def:Sep}
 \S_\ep (\varphi \otimes \chi)  := \varphi \otimes \, \bigl
   [U_0(-\ep^{-\gamma}) \, S^X \chi \bigr] , \quad \forall
   \varphi,\chi \in L^2(\RR^d); 
\ee
Furthermore, the operator $\widehat{\S}$, 
acting on $L^2(\RR^{2d})$,   is the unique unitary extension of
\[ 
\widehat{\S} (\varphi \otimes \chi) \ = \ \varphi \otimes S^X \chi.
\]
\end{defi} 
Notice that, with our notation,
$\S_\ep =  \bigl[ {\mathbb I} \otimes  U_0 (-\ep^{-\gamma})  \bigr] \widehat{\S}$.

\begin{thm} \label{thm:geneS}
Assume that the potential $V$ is such that hypotheses
(H1)--(H3) are satisfied and denote by $s$ a real number for which $(H3)$ 
holds.
Choose $\varphi \in H^1 (\RR^d)$ such that $| X | \varphi \in H^1
(\RR^d)$, and $\chi \in  H^{s+1} (\RR^d)$ such that  $| x | 
\chi \in H^1
(\RR^d)$.

Let $\psi_\ep (t)$  denote the solution to
\eqref{SCH2D} with $M=1$ and the initial condition \eqref{due}; moreover let
$\psi^{a}_\eps(t)$ denote the solution to the free two-body Schr\"odinger
equation 
\be \label{SCHlimscat}
\begin{cases}
i \, \partial_t \psi_\eps^a=-  \frac12 \Delta_X \psi_\eps^a - \frac 1 {2\eps} 
\Delta_x \psi_\eps^a    = H_\ep^{f} \psi_\ep^a \\
\psi_\ep^a(0) =   
\varphi \otimes U_0(-\ep^{-\gamma}) \, S^X \chi  = \S_\ep ( \varphi \otimes \chi).
\end{cases}
\ee

Then, the following estimate holds
\be \label{estim:4C}
\| \psi_\eps (t) - \psi_\eps^a(t) \|_2 \leq  C_1\left( \frac{1+\ep} \ep \,t -
\ep^{-\gamma}, \ep^{-\gamma}  \right) +   C_2 \, \ep+ C_3 \, \ep^{1-\gamma},
\ee
where the constants are given by
\bea 
C_1(\tau,\tau')  &:= &\left\| \varphi  [ S(\tau,\tau') - S] \chi(\cdot -X) \right\|_2
\label{def:C1}  \\
\label{def:C2}
 C_2 &:=& 
2\sqrt 2  \bigl( \| \nabla \varphi\|_2  \| |x|    \chi \|_2 + 
 \| |X| \varphi \|_2 \| \nabla \chi \|_2 + 
 \| X \cdot \nabla \varphi \|_2 +  \| x \cdot   \nabla \chi \|_2 \bigr) \\
&& \nonumber \hspace{1cm} + \, \sqrt 2 C_s \bigl(
\|\nabla \varphi \|_2 \| \chi \|_{H^s} +  \|
\chi \|_{H^{s+1}} \bigr) \\
\label{def:C3}
 C_3 &:=&  2 \sqrt2 \bigl(
\| \nabla \varphi\|_2  \|\nabla \chi\|_2 
+ 2 \| \Delta \chi \|_2 \bigr),
\eea
with $s $ and $C_s$ defined by the hypothesis (H3).
\end{thm}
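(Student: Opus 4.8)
The plan is to pass to the interaction picture and reduce the genuine two-body problem to a one-body scattering problem by a center-of-mass/relative change of variables. Since $U_\ep^f(t)$ is unitary, proving \eqref{estim:4C} is equivalent to bounding $\|U_\ep^f(-t)U_\ep(t)\psi_\ep(0)-\psi_\ep^a(0)\|_2$, i.e. to comparing the interaction-picture propagator $U_\ep^f(-t)U_\ep(t)$ applied to the product datum of \eqref{due} with the scattered datum $\S_\ep(\varphi\otimes\chi)$. Introducing the relative coordinate $r=x-X$ and the center of mass $R=(X+\ep x)/(1+\ep)$, the interacting Hamiltonian separates exactly,
\[
H_\ep = -\frac{1}{2(1+\ep)}\Delta_R + \frac1\mu H_V(r), \qquad \mu=\frac{\ep}{1+\ep},
\]
and likewise $H_\ep^f$ with $H_V$ replaced by $-\frac12\Delta_r$. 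Hence $U_\ep^f(-t)U_\ep(t)$ acts trivially on the center of mass and, on the relative variable, as the finite-time wave operator $U_0(-t/\mu)U_V(t/\mu)$. This exact factorization, together with the appearance of the scaled time $t/\mu=\frac{1+\ep}\ep t$, is the structural fact on which everything rests.

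First I would extract the scattering operator from the relative evolution. Treating the prepared light state $U_0(-\ep^{-\gamma})\chi$ as an incoming state and setting $\tau=\frac{1+\ep}\ep t-\ep^{-\gamma}$, $\tau'=\ep^{-\gamma}$ so that $t/\mu=\tau+\tau'$, a direct manipulation of the group law identifies $U_0(\ep^{-\gamma})\,U_0(-t/\mu)U_V(t/\mu)\,U_0(-\ep^{-\gamma})$ with the finite-time scattering operator $S(\tau,\tau')$ of \eqref{defs}. Replacing $S(\tau,\tau')$ by its strong limit $S$ is precisely the passage from the true relative evolution to the scattered state $S^X\chi$; this step is legitimate by asymptotic completeness and the absence of bound states and zero-energy resonances, i.e. (H1)--(H2), and the resulting discrepancy is, by construction, the term $C_1(\tau,\tau')=\|\varphi[S(\tau,\tau')-S]\chi(\cdot-X)\|_2$ of \eqref{def:C1} (the shift $\chi(\cdot-X)$ encoding the covariance $S^X=\theta_X S\theta_{-X}$). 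Since $\tau,\tau'\to+\infty$ as $\ep\to0$ for fixed $t>0$, this contribution vanishes in the limit.

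The remaining two error terms come from restoring the product structure that the relative picture destroys: the heavy coordinate enters the relative dynamics through $X=R-\mu r$, so the datum of \eqref{due} and the target $\S_\ep(\varphi\otimes\chi)$ are entangled in $(R,r)$ even though they are products in $(X,x)$. I would remove this coupling by Taylor expanding $\varphi(R-\mu r)$ about the decoupled configuration. The first-order recoil correction carries the factor $\mu=O(\ep)$ and is controlled, via the first position moments, by $\|\nabla\varphi\|_2\||x|\chi\|_2+\||X|\varphi\|_2\|\nabla\chi\|_2+\|X\cdot\nabla\varphi\|_2+\|x\cdot\nabla\chi\|_2$; this is the origin of $C_2\,\ep$ in \eqref{def:C2}. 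Crucially, once $S$ has acted one needs the first moment of $S\chi$, which is not automatically finite: here hypothesis (H3) enters, trading $\||x|S\chi\|_2$ for $\||x|\chi\|_2+C_s\|\chi\|_{H^s}$ and producing the $C_s$ contributions to $C_2$. The second-order remainder of the expansion, together with the error accumulated over the short pre-collision time $\mu\,\ep^{-\gamma}\approx\ep^{1-\gamma}$, is bounded by the second derivatives $\|\nabla\varphi\|_2\|\nabla\chi\|_2$ and $\|\Delta\chi\|_2$, yielding the dominant correction $C_3\,\ep^{1-\gamma}$ of \eqref{def:C3}.

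The main obstacle is this last, bookkeeping-heavy step: propagating position localization through both the interacting relative dynamics and the scattering operator, and verifying that the recoil and preparation remainders assemble into exactly the constants $C_2,C_3$ rather than merely into $O(\ep)+O(\ep^{1-\gamma})$, uniformly in $t$. Because $S$ generically destroys spatial decay, the estimate cannot close without the moment-regularity trade-off of (H3), and the care required in commuting $|x|$ past $S$ and past $U_V(t/\mu)$ while tracking the precise numerical factors is where the real work lies. By contrast, the variable separation and the extraction of $C_1$ are exact and essentially algebraic, so I would present them first and reserve the bulk of the analysis for the moment propagation and the two Taylor remainders.
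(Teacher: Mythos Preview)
Your proposal is correct and matches the paper's proof: the paper passes to center-of-mass/relative coordinates via a unitary $\T_\ep$, splits the error through two intermediate states obtained by replacing $\T_\ep$ with its $\ep=0$ limit $\T_0$ (this is your Taylor step, realized as the Fourier-side bound $\|\T_\ep\psi-\T_0\psi\|_2\le\ep\,\|r\cdot\nabla_R\T_0\psi\|_2$), and then extracts $S(\tau,\tau')$ in the relative variable, invoking (H3) exactly where you indicate to control $\|\,|x|\,S\chi\|_2$. One small correction on mechanism: the $\ep^{1-\gamma}$ contribution does not arise from a second-order Taylor remainder but from commuting the position weight past the preparation propagator via $x\,U_0(\tau)=U_0(\tau)(i\tau\partial+x)$, which trades one factor of $|x|$ for $\ep^{-\gamma}\nabla$ and is what brings in $\|\Delta\chi\|_2$.
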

For the proof see Section \ref{appa}.

\begin{remark} {\em
\mbox{}

 {\it i)} 
The first term in \eqref{estim:4C} is quite implicit, 
nevertheless hypotheses
 (H1)-(H2) guarantee
$$
\lim_{\tau,\tau' \rightarrow + \infty} C_1(\tau,\tau') = 0.
$$
Indeed, the existence of the strong limit that defines the scattering
operator (see \eqref{defs}) implies that,
for fixed $X$ and $\chi$, $\bigl\| S^X(\tau,\tau') \chi - S^X \chi
\bigr\|_2 \rightarrow 0$ as $\tau,\tau' \rightarrow +\infty$,  and 
therefore, observing that
$$C_1(\tau,\tau')^2 \leq |\varphi(X) |^2\, \bigl\| S^X(\tau,\tau')
\chi - S^X \chi \bigr\|_2^2 \le 4 
|\varphi(X) |^2,$$
 by dominated convergence one has
$C_1(\tau,\tau') \to 0$ as $\tau, \tau' \to \infty$. 

Notice that in order to explicitly estimate $C_1(\tau,\tau')$, 
one needs to study the one-body scattering of the
light particle by the potential $V$. 
See Proposition~\ref{prop:ScatDirac} for an example.

\medskip
{\it ii)} The constant $C_2$ in Theorem~\ref{thm:geneS} depends
on the regularity properties of the scattering operator through 
the assumption (H3). If this hypothesis is not satisfied, 
then one can prove that the constant $C_2$ may be replaced by
\[
C_2' :=\Bigl\|   \, |x | \left[  \nabla \varphi  \otimes  S^X \, \chi
+ \varphi \otimes  S^X  \, \nabla
\chi   \right]  \Bigr\|_2
+  5 \left\|   \, |x-X | | \nabla \psi^0|  \right\|_2.
\]
}
\end{remark}

\begin{remark} \label{cor:Dirac1D}
{\em Matching Theorem \ref{thm:geneS} with Proposition
~\ref{prop:ScatDirac}, one has that
for the one-dimensional system  with $V = \alpha \delta_0$,
$\alpha >0$, 
the solution $\psi_\ep$ to~\eqref{SCH2D} is well-approximated
by the solution $\psi^a_\ep$ to~\eqref{SCHlimscat}. More
precisely, for 
any initial condition of the type treated in Theorem \ref{thm:geneS}, there exists a
constant $C$ depending on $\varphi$ and $\psi$ such that 
\[
\forall \, t \ge 2\,  \ep^{1-\gamma}, \quad \| \psi_\eps (t) - \psi_\eps^a(t) \|_2
\leq C \,\Bigl[  \Bigl(\frac \ep t\Bigr) ^{\frac14} +    \ep^{1-\gamma} \Bigr].
\]
}
\end{remark}

\begin{remark} 
{\em There are some differences with respect to the previously 
known results \cite{adami_3, adami_2}.
First, we modified the initial state for the light particle 
by inserting the operator $U(-\eps^{-\gamma})$. Physically, this
means that in our idealized experiment the light particle enters
the system at time $t = - \infty$ and immediately becomes
entangled with the heavy one. On the other hand, in the
physical situation depicted in \cite{adami_2, adami_3, carlone,
dft} each light particle is injected in the system at time
zero. The mathematical consequence of our choice is that
the initial state of the light particle is 
(approximately) transformed via the action of the scattering operator instead of
the M{\o}ller wave operator. This is consistent with the original 
Joos-Zeh's formula (\cite{Joos-Zeh}).

\n
The main advantage of our choice is that, in general, the operator $S$ is rather simple to write in
Fourier variables as it involves the Fourier transform only, while the 
M{\o}ller operator involves a different (and usually implicit) eigenfunction expansion.
As a consequence, the scattering operator is better suited for a
direct analytical study and for numerical
simulations too.
}
\end{remark}

\begin{remark} {\em
Theorem~\ref{thm:geneS} can be formally restated as follows:
\be \label{thmSrough}
U_\ep(t) \, ( {\mathbb I}  \otimes U_0 (-\ep^{-\gamma})) \  \approx \ U^{f}_\ep(t) \,
\S_\ep
= \ U^{f}_\ep(t) \, (  {\mathbb I} \otimes  U_0 (-\ep^{-\gamma}) )\, \widehat{\S}
\ee
for times of order one. 
Pictorially, (\ref{thmSrough}) states that for small $\ep$
the light particle is instantaneously scattered away  by the
heavy one, which may be considered as fixed during the
interaction. 
}
\end{remark}

Let us generalize Theorem~\ref{thm:geneS} to  
the formalism of density operators. Such a step is necessary in order to
describe the dynamics of the heavy particle
 when interacting with several light particles:
indeed,
as we can see from 
\eqref{SCHlimscat},
the initial condition for the limit
model is not factorized, so after one
collision the heavy particle lies in a mixed state that has to be described by the
appropriate
density operator.

Assume that the initial state of the heavy particle is given by the
density operator $\rho^{\ssc
M}(0) \in \L^1_+$, while, as before, the light particle
at time zero lies in the state represented by the wave function $
U_0(-\ep^{-\gamma})  \chi$. 
Then, the density operator $\rho_\ep (t)$ that represents the state of the
two-body system at time $t$ solves the operator differential equation
\bea \label{SCHRhoScat}
\begin{cases}
i \partial_t \rho_\ep (t) & =  [ H_\ep , \rho_\ep (t) ] 
\\ 
\rho_\ep(0) &:= \rho^{M}(0) \otimes \, | U_0(-\ep^{-\gamma})  \chi \ra \la 
 U_0(-\ep^{-\gamma}) \chi | ,
\end{cases}
\eea
where the symbol $[A_1, A_2]$ denotes the commutator of the operators
$A_1$ and $A_2$.

For the sake of studying the dynamics of the heavy particle,
the interesting quantity is the density operator of the heavy
particle, which is denoted by $\rho^M_\ep (t)$ and defined as
\be
\rho^M_\ep (t) \ : = \ \Tr_m \rho_\ep (t)  = \ \sum_j \langle
\chi_j | \rho_\ep (t) | \chi_j \rangle,
\ee
where $\{ \chi_j \}_{j \in {\mathbb N}}$ is a complete orthonormal set for
the space $L^2 (\RR^d)$, and $\Tr_m$ denotes the so-called {\em partial trace 
w.r.t.\  the light particle}.

Let us be more precise on how to compute such a partial trace.
As $\rho_\ep (t)$ is compact, it
can be represented as an integral operator whose kernel can be
denoted, with a slight abuse of notation, by 
$ \rho_\ep (t, X, X', x, x')$. The integral kernel of the reduced density matrix for the
heavy particle then reads
\be 
\rho^M_\ep (t,X,X') \ : = \ \int_{\RR^d}  \rho_\ep (t, X, X', x, x) \, dx.
\ee
There does not exist a closed equation for the time evolution of
  $\rho^M_\ep $, but, as we shall see, as $\ep$ goes to
zero and for any $t \neq 0$, the operator $\rho^M_\ep (t)$ converges to
an operator $\rho^{M,a} (t)$ that satisfies a closed equation.  In
order to state this result  properly, we need to introduce a further operator on
$\L_1$ which we call the \emph{collision operator}. 

\begin{defi}[Collision operator] 
Suppose that the hypotheses (H1)-(H2) are satisfied. Then,
we define the {\em collision operator} 
\be \label{def:OperatorI}
\I_\chi :  \L^1(\RR^d)  \to \L^1(\RR^d)\,, \quad 
  \rho^M  \mapsto  \Tr_m \bigl( \rho^M \otimes |S^X \chi \rangle 
\langle S^{X'} \chi | \bigr).
\ee
\end{defi}

\begin{remark} {\em
It can be verified that the operator $\I_\chi$ is  well-defined and completely positive (in particular it preserves positivity). 
Moreover, it satisfies the estimate
\be \label{stimatraccia}
 \Tr \, | \I_\chi \rho^M  | \leq \Tr | \rho^M| \quad \text{with equality if }
\rho^M \in
\L^1_+.
\ee}
\end{remark}

\begin{remark} {\em
In terms of integral kernels, the action of the collision operator reads
\be \label{eq:KerI}
[\I_\chi \rho](X,X') = \rho(X,X')\, I_\chi(X,X'),
\ee 
where the  {\em{collision function}} $I_\chi$ is defined by
\be  \label{def:DecoFunction}
 I_\chi (X,X') \ : = \ \langle S^{X'} \chi | S^X \chi \rangle, \qquad X, X'
 \in \RR^d .
\ee}
\end{remark}

\noindent
Notice that the function $I_\chi$ reaches its maximum modulus at $X=
X'$, where it equals one.  
 
\begin{thm} \label{thm:geneRho}
Assume that the potential $V$ is s.t.\  the hypotheses (H1)--(H3) are
satisfied, 
choose $\rho^M (0) \in {\mt L}^1_+$ s.t.\  $\nabla  \rho^M
(0)\nabla $ and $| \cdot | \nabla \rho^M
(0) \nabla | \cdot | \in {\mt L}^1_+ $; choose $\chi \in H^s (\RR^d)$
for some $s \geq 1$.
Denote by $\rho_\eps (t)$ the solution to equation \ref{SCHRhoScat} and by
$\rho^{M,a} (t)$
the unique solution to the problem
\be \label{SCHRhoScatLim-rid}
\begin{cases}
i \partial_t \rho^{M,a} (t) & =  [ H_0, \rho^{M,a} (t) ]  \\ 
\rho^{M,a}(0) &:= {\mt I}_\chi \rho^{ \ssc M}(0).
\end{cases} 
\ee
Then, the following estimate holds
\be \label{estim:4CRho-rid}
 \| \rho^M_\ep(t) - \rho^{M,a} (t) \|_{\L^1} \leq  \widetilde 
C_1\left( \frac{1+\ep} \ep \,t -
\ep^{-\gamma}, \ep^{-\gamma}  \right) +   \widetilde  C_2 \, \ep+
\widetilde  C_3 \, \ep^{1-\gamma}  ,
\ee
where the constants are given by
\bean
\widetilde C_1(\tau,\tau')  &:= & 2 \bigl\| \rho^M(0)  |
           [S(\tau,\tau') - S] \chi(\cdot -X') \rangle  
\langle  [S(\tau,\tau') - S] \chi(\cdot -X) | \bigr\|_{\L^1}^{\frac12}
\\
\widetilde  C_2 &:=& 
4\sqrt 2  \bigl( \| \nabla \rho^M(0) \nabla \|_{\L^1}^{\frac12}  \| |x|    \chi \|_2 + 
 \| |X| \rho^M(0) |X| \|_{\L^1}^{\frac12} \| \nabla \chi \|_2  \\  && \hspace{1cm}
+  \bigl\| |X|   \nabla \rho^M(0)  \nabla |X| \bigr\|_{\L^1}^{\frac12} +  \| |x|  \nabla \chi \|_2 \bigr) \\
&& \hspace{1cm} + \,2 \sqrt 2 C_s \bigl(
\|  \nabla \rho^M(0) \nabla \|_{\L^1}^{\frac12} \| \chi \|_{H^s} +  \|
\chi \|_{H^{s+1}} \bigr) \\
\widetilde  C_3 &:=&  4 \sqrt2 \bigl(
\| \nabla \rho^M(0) \nabla \|_{\L^1}^{\frac12}  \|\nabla \chi\|_2 
+ 2 \| \Delta \chi \|_2 \bigr).
\eean
\end{thm}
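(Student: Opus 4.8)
The plan is to deduce Theorem~\ref{thm:geneRho} from the pure-state estimate of Theorem~\ref{thm:geneS} by diagonalising the initial density operator. Since $\rho^M(0)\in\L^1_+$ has unit trace, I would fix its spectral decomposition $\rho^M(0)=\sum_j\lambda_j|\varphi_j\rangle\langle\varphi_j|$ with $\lambda_j\ge0$, $\sum_j\lambda_j=1$, and $\{\varphi_j\}$ orthonormal. The moment hypotheses $\nabla\rho^M(0)\nabla,\,|X|\nabla\rho^M(0)\nabla|X|\in\L^1_+$ are precisely the conditions guaranteeing that each $\varphi_j$ (with $\lambda_j>0$) meets the regularity demanded in Theorem~\ref{thm:geneS} and, crucially, that the weighted sums $\sum_j\lambda_j\|B\varphi_j\|_2^2$ are finite for $B\in\{i\nabla,|X|,|X|i\nabla\}$; for a positive operator these equal $\|B\rho^M(0)B^*\|_{\L^1}=\Tr(B\rho^M(0)B^*)$, which are exactly the trace-norm quantities entering $\widetilde C_2,\widetilde C_3$. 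The assumptions on $\chi$ are inherited from Theorem~\ref{thm:geneS}.

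By linearity of the von Neumann flow $\rho_\eps(t)=U_\eps(t)\rho_\eps(0)U_\eps(t)^*$, the exact two-body state is $\rho_\eps(t)=\sum_j\lambda_j|\psi_{\eps,j}(t)\rangle\langle\psi_{\eps,j}(t)|$, where each $\psi_{\eps,j}(t)$ solves \eqref{SCH2D} with datum $\varphi_j\otimes U_0(-\eps^{-\gamma})\chi$, i.e. exactly the problem treated in Theorem~\ref{thm:geneS}. I would then introduce the assembled approximation $\rho_\eps^a(t):=\sum_j\lambda_j|\psi_{\eps,j}^a(t)\rangle\langle\psi_{\eps,j}^a(t)|$ built from the $\psi_{\eps,j}^a$ of \eqref{SCHlimscat}. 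The first structural step is to check that its partial trace is \emph{exactly} the limiting solution, $\Tr_m\rho_\eps^a(t)=\rho^{M,a}(t)$, with no residual $\eps$-error. This holds because $U_\eps^f(t)=U_0(t)\otimes U_0(t/\eps)$, so conjugation by the heavy factor commutes with $\Tr_m$ and produces $U_0(t)(\cdot)U_0(t)^*$, while the light factor $U_0(t/\eps)$ and the incoming shift $U_0(-\eps^{-\gamma})$ cancel inside the light-particle inner product; computing the kernel gives $\Tr_m|\psi_{\eps,j}^a(t)\rangle\langle\psi_{\eps,j}^a(t)|=U_0(t)\bigl[\varphi_j(X)\overline{\varphi_j(X')}\langle S^{X'}\chi|S^X\chi\rangle\bigr]U_0(t)^*$, and summing in $j$ together with \eqref{eq:KerI}--\eqref{def:DecoFunction} yields $U_0(t)\I_\chi\rho^M(0)U_0(t)^*$, the unique solution of \eqref{SCHRhoScatLim-rid}.

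All remaining error is then purely pure-state. Using that the partial trace contracts the trace norm, $\|\Tr_m A\|_{\L^1}\le\|A\|_{\L^1}$, the triangle inequality, and the rank-one bound $\||\psi\rangle\langle\psi|-|\phi\rangle\langle\phi|\|_{\L^1}\le2\|\psi-\phi\|_2$ valid for normalised $\psi,\phi$ (both $\psi_{\eps,j}$ and $\psi_{\eps,j}^a$ are normalised, being unitary images of normalised data), I obtain
\[
\|\rho^M_\eps(t)-\rho^{M,a}(t)\|_{\L^1}\ \le\ 2\sum_j\lambda_j\,\|\psi_{\eps,j}(t)-\psi_{\eps,j}^a(t)\|_2 .
\]
Theorem~\ref{thm:geneS} bounds each summand by $C_1^{(j)}+C_2^{(j)}\eps+C_3^{(j)}\eps^{1-\gamma}$ with the constants \eqref{def:C1}--\eqref{def:C3} evaluated at $\varphi=\varphi_j$, so it only remains to convert $2\sum_j\lambda_j C_i^{(j)}$ into $\widetilde C_i$.

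For $\widetilde C_2,\widetilde C_3$ this is a direct Cauchy--Schwarz step: every factor $\|B\varphi_j\|_2$ obeys $\sum_j\lambda_j\|B\varphi_j\|_2\le(\sum_j\lambda_j\|B\varphi_j\|_2^2)^{1/2}=\|B\rho^M(0)B^*\|_{\L^1}^{1/2}$ (since $\sum_j\lambda_j=1$), with $B=i\nabla,\,|X|$; the term $\|X\cdot\nabla\varphi_j\|_2\le\||X|\nabla\varphi_j\|_2$ is absorbed by $B=|X|i\nabla$, and the $\chi$-only terms pass through unchanged (using $\|x\cdot\nabla\chi\|_2\le\||x|\nabla\chi\|_2$), which reproduces both the square-root trace-norm factors and the doubling of the numerical constants coming from the factor $2$ above. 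The genuinely delicate point, and the main obstacle, is $\widetilde C_1$: the pure-state constant only sees the diagonal, $(C_1^{(j)})^2=\int|\varphi_j(X)|^2\|h_X\|_2^2\,dX$ with $h_X:=[S^X(\tau,\tau')-S^X]\chi$, whereas $\widetilde C_1$ is the trace norm of an off-diagonal operator. I would again apply Cauchy--Schwarz, $2\sum_j\lambda_j C_1^{(j)}\le2(\sum_j\lambda_j(C_1^{(j)})^2)^{1/2}$, then recognise $\sum_j\lambda_j(C_1^{(j)})^2=\Tr T$ where $T=\Tr_m[\rho^M(0)\otimes|h_{X'}\rangle\langle h_X|]$ has kernel $\rho^M(0)(X,X')\langle h_X|h_{X'}\rangle$, and finally dominate the trace by the trace norm, $\Tr T\le\|T\|_{\L^1}$ (legitimate since $\Tr T\ge0$), to arrive at $\widetilde C_1=2\|T\|_{\L^1}^{1/2}$. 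The care throughout lies in this operator/partial-trace bookkeeping — matching diagonal moment sums of the eigenfunctions with trace-class norms of $\rho^M(0)$ and controlling the off-diagonal $C_1$ structure by the elementary trace-versus-trace-norm inequality — with convergence of every spectral sum guaranteed exactly by the stated moment hypotheses.
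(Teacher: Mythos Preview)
Your proposal is correct and follows essentially the same route as the paper's own proof: spectral decomposition of $\rho^M(0)$, application of Theorem~\ref{thm:geneS} to each eigenfunction, the rank-one bound $\||\psi\rangle\langle\psi|-|\phi\rangle\langle\phi|\|_{\L^1}\le 2\|\psi-\phi\|_2$, contraction of the partial trace, and Cauchy--Schwarz to convert the weighted sums $\sum_j\lambda_j\|B\varphi_j\|_2$ into $\|B\rho^M(0)B^*\|_{\L^1}^{1/2}$. The only cosmetic difference is that for $\widetilde C_1$ the paper identifies the relevant operator as \emph{positive} (so that trace equals trace norm), whereas you invoke the weaker inequality $\Tr T\le\|T\|_{\L^1}$; both arrive at the same bound.
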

The proof is given in Section \ref{appa}.

The last step in our theoretical framework consists in the possibility
of extending the previous procedure to the case of many light
particles to be injected in the system one after another.
To this purpose, one should use an approximation result analogous to
Theorem~\ref{thm:geneRho}, but adapted to a multi-particle system with light
particles arriving at different times. Instead of following this approach, which is
fully rigorous but
cumbersome and very difficult to handle (See for instance \cite{adami_2} for a
result with many simultaneous ``collisions''), we will 
  repeatedly
use  the approximation given 
by Theorem~\ref{thm:geneRho}. This means that we treat the heavy particle as if it were interacting with only one light particle at a time.

Under that approximation, the multiple use of the collision operator ${\mt 
I}_\chi$
is justified, provided 
that the constants $\widetilde C_1$,
$\widetilde  C_2$ and $\widetilde  C_3$ appearing in Theorem~\ref{thm:geneRho}
do not explode  when computed for $\rho^{M,a}(t)$ instead of
$\rho^M(0)$. The behavior of such constants can be shown to 
depend on the regularity properties of the collision function
$I_\chi$ only. In particular, the calculation of the kinetic energy of
$\rho^{M,a}(0)$ in terms of $\rho^M(0)$ and $I_\chi$, done in
Proposition~\ref{prop:Exchange}, may guarantee the correct
  behavior of the constants $\widetilde C_1$,
$\widetilde  C_2$ and $\widetilde  C_3$, but we will not go into such details.

\section{One-dimensional systems. Computation of  $\I_\chi$} \label{sec:Icalc}

In this section we restrict to one-dimensional problems and
provide a general expression for the collision function
$I_\chi$ (see \eqref{eq:IDecomp}, \eqref{def:Theta},
\eqref{def:Gamma}) whose form shows that
$I_\chi$ depends on the 
reflection and transmission amplitudes associated to the potential $V$
and on the wave 
function of the light particle. Using this expression 
we compute the energy and momentum transfer occurring in a 
two-body collision.

Furthermore, assuming that the state of the light particle is represented
by a Gaussian wave packet with a
narrow spectrum in momentum, we prove an  
approximation of $\Theta_\chi$
(see~\eqref{eq:ThetaApprox}) to be used in Section \ref{thexpla}.

\subsection{Scattering operator, reflection and transmission amplitudes. } 
\label{sec:scatmat}

Consider a particle moving on a line under the action of  the potential $V$,
and assume hypotheses (H1)-(H3). 
We define the {\em transmission amplitude $t_k$} and the
{\em reflection amplitude $r_k$} as the two complex coefficients s.t.\ 
the action of the scattering operator $S$, defined in \eqref{defs}, reads
\be \label{cohen}
(S \chi)(x)=\frac1 {\sqrt{2 \pi}} \int_{\RR} \left[ {t_k} \widehat
  \chi(k) + r_{-k} \widehat \chi(-k)  \right] e^{ i k x} \, dk\, , 
  \quad \forall \,x \in \RR,
\ee
for any $\chi \in L^2 (\RR)$. We stress that definition \eqref{cohen} 
corresponds to the following formal action on plane waves
\[
S(e^{ikx}) = r_k
e^{-ikx} + t_k e^{ikx},
\]
which, in turn, agrees with the definition of reflection and
transmission amplitudes usually found in physics textbooks, namely,
$t_k$ and $r_k$ are the two complex coefficients s.t.\  the generalized
eigenfunction 
$\psi_k$ of the
operator $H_V$ corresponding to the generalized eigenvalue  $E = \f
{k^2} 2 \neq 0, k > 0$ fulfills the asymptotics
\be \begin{split}
\psi_k (x) \ & \sim \ \f 1 {\sqrt{2 \pi}} \left( e^{ikx} + r_k
e^{-ikx} \right), \qquad x \to - \infty, \\
\psi_k (x) \ & \sim \ \f 1 {\sqrt{2 \pi}} t_k  e^{ikx},  \qquad x \to + \infty.
\end{split} \ee
\noindent
It proves useful to represent the action of $S$ through the $2 \times 2$
matrices
\begin{equation} \label{sfourier}
S(k) := 
  \begin{pmatrix}  t_k & r_{-k} \\  
  r_k  & t_{-k}  \end{pmatrix}, \qquad k > 0,
\end{equation}
that act on the vectors $( \widehat \chi(k), \widehat
\chi(-k)\,)_{k >0}$ as follows
\be \label{SinFourier}
\forall k>0, \quad   \begin{pmatrix}  \widehat{S \chi}(k) \\  \widehat{S
\chi}(-k) 
\end{pmatrix} =
  S(k)
  \begin{pmatrix}  \widehat \chi(k) \\  \widehat \chi
    (-k)  \end{pmatrix}.
\ee 
Moreover, the unitarity of $S$ implies, for $k \neq 0$,
\begin{equation} \label{eq:relRetT}
|t_k|^2 + |r_k|^2=1, \quad 
 r_k \overline{t_{-k}} + t_k \overline{r_{-k}} =0
 , \quad |r_k | = |r_{-k} |.
\end{equation}
The fact that $S$ commutes with the Laplacian, together with its unitarity,
gives
\[
\| S \chi \|_{H^s} = \| \chi \|_{H^s}, \qquad \forall s \in \RR,
\ \forall \chi \in H^s(\RR).
\]
We are ready to prove that, as stated in Section 2,
the  condition $(H3')$ in dimension one implies condition $(H3)$.
\begin{lem} \label{lem:ScatMom1D}
Suppose that for some $s \in \RR$ and $C_s >0$
 the transmission and reflection coefficients satisfy 
\be \label{eq:reghypforS}
| \partial_k t_k| + | \partial_k r_k| \leq C_s (1 + |k|^2)^{\frac s2} =: C_s
\la k
\ra^s.
\ee
Then, for all $\chi \in H^s(\RR)$
\[
\|x S \chi  \|_2 = \| \partial_k [\widehat{S \chi}] \|_2 \leq  \| x \chi 
\|_2 + 2 C_s \| \chi \|_{H^s}.
\]
\end{lem}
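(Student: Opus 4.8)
The plan is to move everything to the Fourier side and reduce the statement to an $L^2$-bound on $\partial_k[\widehat{S\chi}]$. With the convention \eqref{def:Fourier} one has $\widehat{x\phi}(k)=i\,\partial_k\widehat\phi(k)$, and since the Fourier transform is unitary on $L^2(\RR)$, Plancherel's theorem immediately gives the equality $\|xS\chi\|_2=\|\partial_k[\widehat{S\chi}]\|_2$ already recorded in the lemma. Hence it only remains to estimate $\|\partial_k[\widehat{S\chi}]\|_2$. (The bound is vacuous unless $x\chi\in L^2$, so I may assume $\partial_k\widehat\chi\in L^2$; I would first carry out the computation for Schwartz $\chi$ and extend by density, which also disposes of the isolated point $k=0$ where the coefficients need not be differentiable.)

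Next I would differentiate the explicit Fourier action \eqref{SinFourier}. Writing $\psi:=\widehat\chi$ and treating $k>0$ and $k<0$ separately — keeping track of the chain rule $\frac{d}{dk}\psi(-k)=-\psi'(-k)$ — one finds that the two regions yield one and the same expression: for every $k\neq0$,
\be
\partial_k[\widehat{S\chi}](k)=A(k)+B(k),\qquad B(k):=t_k\,\psi'(k)-r_{-k}\,\psi'(-k),
\ee
where $A(k)$ collects the terms in which $\partial_k$ falls on the coefficients $t_k,r_{-k}$ and $B(k)$ the terms in which it falls on $\psi$. For $A$, the regularity hypothesis \eqref{eq:reghypforS} (together with $\la -k\ra=\la k\ra$) gives the pointwise bound $|A(k)|\le C_s\la k\ra^s(|\psi(k)|+|\psi(-k)|)$; squaring, using $(a+b)^2\le2(a^2+b^2)$ and the substitution $k\mapsto-k$, I would obtain $\|A\|_2^2\le4C_s^2\int_\RR\la k\ra^{2s}|\psi(k)|^2\,dk=4C_s^2\|\chi\|_{H^s}^2$, i.e. $\|A\|_2\le2C_s\|\chi\|_{H^s}$.

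The term $B$ is where the structure of the problem really enters. Pairing its values at $\pm\kappa$ for $\kappa>0$, a short computation gives
\be
\begin{pmatrix}B(\kappa)\\ B(-\kappa)\end{pmatrix}=\begin{pmatrix}t_\kappa&-r_{-\kappa}\\-r_\kappa&t_{-\kappa}\end{pmatrix}\begin{pmatrix}\psi'(\kappa)\\ \psi'(-\kappa)\end{pmatrix}=\mathrm{diag}(1,-1)\,S(\kappa)\,\mathrm{diag}(1,-1)\begin{pmatrix}\psi'(\kappa)\\ \psi'(-\kappa)\end{pmatrix}.
\ee
Since $\mathrm{diag}(1,-1)$ is unitary and $S(\kappa)$ is unitary by \eqref{eq:relRetT}, the $2\times2$ matrix above is unitary and preserves the Euclidean norm of each pair; thus $|B(\kappa)|^2+|B(-\kappa)|^2=|\psi'(\kappa)|^2+|\psi'(-\kappa)|^2$ for a.e. $\kappa>0$, and integrating over $\kappa>0$ gives $\|B\|_2=\|\psi'\|_2=\|x\chi\|_2$. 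The triangle inequality $\|\partial_k[\widehat{S\chi}]\|_2\le\|A\|_2+\|B\|_2$ then produces exactly the asserted bound $\|x\chi\|_2+2C_s\|\chi\|_{H^s}$.

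The only genuinely delicate step is the bookkeeping of parity in the last paragraph: one must get the signs in $B$ right (they come from differentiating $\psi(-k)$) and recognize that the resulting matrix is a unitary conjugate of $S(\kappa)$ — it is precisely this unitarity that produces the sharp constant $1$ in front of $\|x\chi\|_2$ rather than a larger factor. Everything else reduces to Plancherel, the elementary pointwise bound on $A$, and the triangle inequality.
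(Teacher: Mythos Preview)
Your proof is correct and follows essentially the same line as the paper's: split $\partial_k[\widehat{S\chi}]$ via the product rule into a coefficient-derivative part bounded by the hypothesis and a $\psi'$-part handled by the unitarity of the $2\times2$ scattering matrix $S(k)$. The only cosmetic difference is that the paper differentiates the vector identity \eqref{SinFourier} directly for $k>0$, so the $\psi'$-term appears immediately as $S(k)$ acting on a vector and the $\mathrm{diag}(1,-1)$ conjugation you identify never has to be written out.
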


\begin{proof}[Proof of Lemma~\ref{lem:ScatMom1D}]
Since
\[
\partial_k \begin{pmatrix}
              \widehat{S \chi}(k) \\  \widehat{ S \chi} (-k)
             \end{pmatrix}    =
[\partial_k S(k) ] 
\begin{pmatrix} \widehat \chi(k) \\  \widehat \chi (-k) \end{pmatrix}
+
S(k) \begin{pmatrix} \partial_k \widehat \chi(k) \\  \partial_k \widehat \chi
(-k) \end{pmatrix},
\]
one gets
\[
\bigl\| \partial_k \widehat{S \chi} \bigr\|_2 
\le
 \left( \int_0^{+\infty}
\biggl|[\partial_k S(k) ] 
\begin{pmatrix} \widehat \chi(k) \\  \widehat \chi (-k) \end{pmatrix}
\biggr|^2 \,dk \right)^{\frac12}
+
 \left( \int_0^{+\infty}
\biggl| S(k) \begin{pmatrix} \partial_k \widehat \chi(k) \\  \partial_k \widehat
\chi (-k)
\end{pmatrix} \biggr|^2 \,dk \right)^{\frac12}.
\]
By unitarity of $S(k)$, the second term in the r.h.s.
equals $\| \partial_k \widehat \chi \|_2  = \|x \chi
\|_2$. Furthermore, by \eqref{eq:reghypforS},
\bean
\int_0^{+\infty}
\biggl|[\partial_k S(k) ] 
\begin{pmatrix} \widehat \chi(k) \\  \widehat \chi (-k) \end{pmatrix}
\biggr|^2 \,dk
&\le&
 4 C_s^2 \int_0^{+\infty} \la k \ra^{2s} \bigl(|\widehat \chi(k)|^2 + 
|\widehat \chi (-k)|^2 \bigr) \,dk
\\
& \le & 4 C_s^2 \| \chi \|_{H^s}^2.
\eean 
This implies the claimed result.
\end{proof}

\paragraph{\bf The effect of translation} 
If the potential $V$ is translated
by a quantity
$X$, then
the reflected wave is delayed by a phase equal to $2kX$ and the
transmitted one remains unchanged. As a consequence, one has the following

\begin{lemme} \label{coefftransl}
Let $V$ be s.t.\  the Hamiltonian operator $H_V = - \f 1 2 \partial_x^2 +
V$ satisfies assumptions (H1)-(H3). Then, the translated Hamiltonian
operator $H_V^X  = - \f 1 2 \partial_x^2 + 
V (\cdot - X)$ satisfies (H1)-(H3) and, denoting the
corresponding reflection and transmission amplitudes by $r^X_k$ and $t^X_k$, one has 

\be \label{eq:delayedRT}
r^{\ssc X}_k = e^{2i k X} r_k, \qquad t_k^{\ssc X} = t_k, \quad
\forall k \in \RR \backslash \{ 0 \} .
\ee
\end{lemme}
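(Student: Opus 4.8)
The plan is to realize the shifted objects as unitary conjugates of the unshifted ones by the translation operator $\theta_X$ from the notation, and then to read off the amplitudes in Fourier variables. First I would check the operator identity $H_V^X = \theta_X H_V \theta_{-X}$, which follows at once from $\theta_{-X} = \theta_X^{-1}$ together with the facts that multiplication by $V(\cdot - X)$ equals $\theta_X\,(V\,\cdot)\,\theta_{-X}$ and that the Laplacian commutes with translations. Since $\theta_X$ is unitary, $H_V^X$ is self-adjoint and unitarily equivalent to $H_V$; in particular its point spectrum is again empty and, as translating the potential merely translates any putative zero-energy resonance function, no such resonance appears. This gives (H1) for $H_V^X$.

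Next I would propagate the same conjugation through the definition \eqref{defs} of the scattering operator. Using $U_V^X(s) = \theta_X U_V(s)\theta_{-X}$ (functional calculus) and the commutation $[U_0(s),\theta_X]=0$ (again because $H_0$ is translation invariant), one obtains $S_V^X(t,t') = \theta_X S_V(t,t')\theta_{-X}$ for every $t,t'$, whence, letting $t,t'\to+\infty$ in the strong topology,
\be
S_V^X = \theta_X\, S_V\, \theta_{-X}.
\ee
This identity immediately transfers asymptotic completeness and unitarity from $S_V$ to $S_V^X$, yielding (H2). For (H3) I would combine $\||x|f\|_2 = \|x f\|_2$ with the conjugation rule $x\,\theta_{\pm X} = \theta_{\pm X}(x\pm X)$, which rewrites $\|x S_V^X\chi\|_2 = \|(x+X) S_V \theta_{-X}\chi\|_2$; a triangle inequality, (H3) for $S_V$ applied to $\theta_{-X}\chi$, the identity $\|x\theta_{-X}\chi\|_2 = \|(x-X)\chi\|_2$, and the translation invariance of $\|\cdot\|_2$ and $\|\cdot\|_{H^s}$ then produce the bound with a constant of the form $C_s + 2|X|$, so (H3) persists.

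The substantive step is the computation of the amplitudes, for which I would pass to Fourier variables, where $\theta_{\pm X}$ act as multiplication by $e^{\mp i k X}$. Writing $S_V^X = \theta_X S_V \theta_{-X}$ and inserting the scattering-matrix relation \eqref{SinFourier} for $S_V$, one gets, for $k>0$,
\be
\widehat{S_V^X\chi}(\pm k) = e^{\mp i k X}\, \widehat{S_V(\theta_{-X}\chi)}(\pm k), \qquad \widehat{\theta_{-X}\chi}(\pm k) = e^{\pm i k X}\,\widehat\chi(\pm k).
\ee
Substituting the second pair of identities into the first and collecting the coefficients of $\widehat\chi(k)$ and $\widehat\chi(-k)$, the phases $e^{\pm ikX}$ attached to the transmitted components cancel, while those attached to the reflected components combine into $e^{\pm 2ikX}$. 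Comparing with the defining relation for $t^X_k$ and $r^X_k$ then yields $t^X_k = t_k$ and $r^X_k = e^{2ikX} r_k$ for all $k\neq 0$, which is exactly \eqref{eq:delayedRT}.

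I expect no genuine obstacle here: the only place requiring care is the bookkeeping of the signs of the exponential phases in the last step (keeping the $k>0$ rows of \eqref{SinFourier} consistent with the final convention $r^X_k = e^{2ikX} r_k$ for both signs of $k$), and checking that the extra $|X|$-dependent term produced in (H3) can be absorbed into the $H^s$-norm, which holds since $\|\chi\|_2 \le \|\chi\|_{H^{\max(s,0)}}$.
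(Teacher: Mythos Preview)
Your proof is correct and follows essentially the same route as the paper: both arguments conjugate by the translation operator $\theta_X$ to obtain $S_V^X = \theta_X S_V \theta_{-X}$, and then read off the amplitudes in Fourier variables. Your treatment of (H3) is in fact more explicit than the paper's, which simply asserts that the three hypotheses are preserved by unitary equivalence; your observation that the resulting constant may depend on $X$ (and that one may have to replace $s$ by $\max(s,0)$) is a point the paper elides.
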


\begin{proof}
According to the notation of Section 2, we denote by
$\theta_X$ the translation operator s.t.\  $\theta_X \chi = \chi
(\cdot - X)$. Then, one easily gets
\be \label{translevo}
U_V^X (t) \ = \ \theta_X U_V (t) \theta_{-X},
\ee
so that $U_V^X (t)$ and $U_V (t)$
 are unitarily equivalent and assumptions
(H1)-(H3) are preserved by translation. 
Furthermore, \eqref{translevo} implies
\be 
S_V^X  \ = \ \theta_X S_V  \theta_{-X}.
\ee
By direct computation
$ \widehat{\theta_{-X} \chi} (k) \ = \ e^{ikX} \widehat{\chi} (k)$, 
so one finally gets
\[
\widehat {S_V^X \chi} (k) \ = \ e^{-2ikX} r_{-k} \widehat \chi (-k) +
 t_k \widehat \chi (k) 
\]
and the proof is complete.
\end{proof}

\begin{corollary} \label{cor:ScatMatTrans}
The matrix $S^X_V$ reads
\be \label{eq:ScatMatTrans}
S^{\ssc X}_V (k) =
  \begin{pmatrix}   
  t_k & e^{-2i kX}\,r_{-k} \\  
  e^{2ikX} \, r_k  & t_{-k}
\end{pmatrix}.
\ee
\end{corollary}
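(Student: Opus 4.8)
The plan is simply to instantiate the general scattering-matrix template \eqref{sfourier} at the translated operator and then feed in the transformation rules just proved in Lemma~\ref{coefftransl}. First I would apply the definition \eqref{sfourier} verbatim to $S^{\ssc X}_V$, which by construction reads in terms of its own amplitudes as
\be
S^{\ssc X}_V(k) = \begin{pmatrix} t^{\ssc X}_k & r^{\ssc X}_{-k} \\ r^{\ssc X}_k & t^{\ssc X}_{-k} \end{pmatrix}, \qquad k > 0.
\ee
At this stage there is nothing left to do but evaluate the four entries.

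The substitution step uses \eqref{eq:delayedRT} directly. Since the transmission amplitudes are untouched by the translation, $t^{\ssc X}_k = t_k$ and $t^{\ssc X}_{-k} = t_{-k}$ fill the diagonal. For the lower-left entry I would insert $r^{\ssc X}_k = e^{2ikX} r_k$ unchanged, while for the upper-right entry I would apply the same rule at the argument $-k$, namely $r^{\ssc X}_{-k} = e^{2i(-k)X} r_{-k} = e^{-2ikX} r_{-k}$. Assembling these four expressions gives precisely \eqref{eq:ScatMatTrans}, which completes the argument.

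There is no genuine obstacle here: the corollary is a one-line consequence of the preceding lemma. The only point requiring a moment of care is the sign of the phase in the off-diagonal slots, since the factor $e^{2ikX}$ of Lemma~\ref{coefftransl} becomes $e^{-2ikX}$ once it is evaluated at $-k$, and this is exactly why the two off-diagonal exponentials end up carrying opposite signs in the final matrix. Everything else is immediate substitution, so I would keep the write-up to just these two displays.
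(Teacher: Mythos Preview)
Your proposal is correct and matches the paper's approach: the paper states the corollary without proof, treating it as an immediate consequence of Lemma~\ref{coefftransl} and the definition~\eqref{sfourier}, which is exactly the substitution you carry out. Your remark about the sign flip in the upper-right entry when evaluating $r^{\ssc X}_{-k}$ is the only point worth flagging, and you handle it correctly.
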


Lemma~\ref{coefftransl} (and Corollary~\ref{cor:ScatMatTrans}) allow us to get a 
rather simple expression for the collision function $I_\chi$.
\begin{proposition}
For a one-dimensional  two-particle  system, endowed with an interaction
potential $V$ such that the hypotheses (H1)-(H3) are verified, the
collision function $I_\chi$ defined in \eqref{def:DecoFunction} can
be expressed as
\begin{equation}  \label{eq:IDecomp}
I_\chi (X,X')  =   1 - \Theta_\chi(X-X') + i\, \Gamma_\chi(X) - i\,
\Gamma_\chi(X'),
\end{equation}
with the definitions
\begin{align} \label{def:Theta}
\Theta_\chi(Y) &:= 
  \int_\RR   \!  \left( 1  -  e^{2ikY}\right) |r_k|^2  | \widehat
\chi(k)  |^2 \,dk  ,\\
 \label{def:Gamma}
\Gamma_\chi(X) &:=  i \int_\RR  e^{2i kX} \, \overline{r_{-k}} \,
t_{k}  \, \overline{\widehat \chi(-k)} \widehat \chi(k)  \,dk.
\end{align} 
\end{proposition}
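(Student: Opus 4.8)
The plan is to compute $I_\chi(X,X')=\langle S^{X'}\chi\,|\,S^X\chi\rangle$ directly in Fourier variables, exploiting the explicit form of the translated scattering matrix from Corollary~\ref{cor:ScatMatTrans}. First I would record, combining \eqref{SinFourier} with \eqref{eq:ScatMatTrans} (equivalently, the final formula in the proof of Lemma~\ref{coefftransl}), the identity
$$
\widehat{S^X\chi}(k) \ = \ t_k\,\widehat\chi(k) + e^{-2ikX}\,r_{-k}\,\widehat\chi(-k), \qquad k\in\RR ,
$$
valid for almost every $k$. Since the Fourier transform \eqref{def:Fourier} is unitary, Plancherel's theorem gives
$$
I_\chi(X,X') \ = \ \int_\RR \overline{\widehat{S^{X'}\chi}(k)}\;\widehat{S^X\chi}(k)\,dk ,
$$
and inserting the two displayed expressions expands the integrand into four products: a transmission--transmission term, a reflection--reflection term, and two mixed transmission--reflection cross terms.

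Next I would treat the two \emph{diagonal} contributions. The transmission term yields $\int_\RR |t_k|^2|\widehat\chi(k)|^2\,dk$, while the reflection term equals $\int_\RR e^{2ik(X'-X)}|r_{-k}|^2|\widehat\chi(-k)|^2\,dk$; after the change of variable $k\mapsto -k$ in the latter and using $|r_{-k}|=|r_k|$ from \eqref{eq:relRetT}, it becomes $\int_\RR e^{2ik(X-X')}|r_k|^2|\widehat\chi(k)|^2\,dk$. Adding the two and substituting $|t_k|^2 = 1-|r_k|^2$, together with the normalization $\int_\RR|\widehat\chi(k)|^2\,dk=\|\chi\|_2^2=1$, produces exactly $1-\Theta_\chi(X-X')$ with $\Theta_\chi$ as in \eqref{def:Theta}.

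The two cross terms are where the key algebraic input enters. The term carrying the phase $e^{2ikX'}$ is, by direct inspection, equal to $-i\,\Gamma_\chi(X')$ with $\Gamma_\chi$ as in \eqref{def:Gamma} (recall $1/i=-i$). For the remaining cross term, $\int_\RR e^{-2ikX}\,\overline{t_k}\,r_{-k}\,\overline{\widehat\chi(k)}\,\widehat\chi(-k)\,dk$, I would first apply $k\mapsto -k$ to rewrite it as $\int_\RR e^{2ikX}\,\overline{t_{-k}}\,r_{k}\,\overline{\widehat\chi(-k)}\,\widehat\chi(k)\,dk$, and then invoke the second unitarity relation in \eqref{eq:relRetT}, namely $r_k\,\overline{t_{-k}}=-t_k\,\overline{r_{-k}}$, to replace $\overline{t_{-k}}\,r_k$ by $-t_k\,\overline{r_{-k}}$. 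This identifies the integral with $i\,\Gamma_\chi(X)$. Collecting the four pieces yields the decomposition \eqref{eq:IDecomp}.

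The main obstacle is purely bookkeeping: one must carefully track the phases $e^{\pm 2ikX}$, the arguments $\pm k$ of $\widehat\chi$ and of the amplitudes $t_k,r_k$, and perform the sign-consistent changes of variable, so that the correct entry of the unitarity relations \eqref{eq:relRetT} can be applied in each slot. No genuine analytic subtlety arises beyond Plancherel's identity and the almost-everywhere-defined action of $S^X$ on $L^2(\RR)$ established in Lemma~\ref{coefftransl}.
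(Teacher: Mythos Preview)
Your argument is correct and follows exactly the approach the paper indicates: the paper's own proof is a one-line sketch stating that the result is an elementary computation using the definition of $I_\chi$, formula~\eqref{cohen}, the unitarity relations~\eqref{eq:relRetT}, and Lemma~\ref{coefftransl}, which is precisely the route you spell out in detail. One trivial remark: after the substitution $k\mapsto -k$ in the reflection--reflection term, $|r_{-k}|^2$ already becomes $|r_k|^2$, so the appeal to $|r_k|=|r_{-k}|$ there is redundant (though harmless).
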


\begin{proof}
The proof
is an elementary computation to be carried out  
using definition~\eqref{def:DecoFunction}, the equation~\eqref{cohen}, 
the relations~\eqref{eq:relRetT}, and Lemma~\ref{coefftransl}.  
\end{proof}

\begin{remark}{\em
By the change of variable $k \to -k$ in the integral defining
$I_\chi$ and the relations~\eqref{eq:relRetT},
 one immediately finds that $\Gamma_\chi (X)$ is real for any $X$.
}
\end{remark}

\medskip
\paragraph{\bf Effect of the collision operator
on kinetic energy and momentum of the
heavy particle.}

In order to interpret the functions $\Theta_\chi$ and $\Gamma_\chi$ we
study the transfer of energy and momentum between the heavy and the
light particle.

 We recall that
for a particle in the mixed state $\rho$ lying in a
$d$-dimensional space,  the average momentum and kinetic energy 
are given by 
\be \label{def:Moment}
P(\rho) = \Tr \Bigl( \frac12 \bigl[ (-i \nabla) \rho + \rho (-i \nabla) \bigr]
\Bigr)
\qquad \text{or } \;
P(\rho) = \frac i2 \int_{\erre^d} (\nabla_2 - \nabla_1) \rho(X,X) \,dX,
\ee
\be \label{def:Ekin}
E_{kin}(\rho) = \frac12\, \Tr ( -i\nabla \cdot \rho [-i\nabla]  ) 
\qquad \qquad \text{or } \;
  E_{kin}(\rho) = \frac12\, \int_{\erre^d}   \nabla_2 \cdot \nabla_1 \rho(X,X) \,dX.
\ee
The probability current $\vect j$ is defined, in terms of the density
operator, by
\be
\vect j :=  \frac12  \bigl[ \rho(-i \nabla) + (-i\nabla)\rho \bigr]   \qquad
\text{or }\; \vect j(X,X') := \frac i 2 ( \nabla_2  - \nabla_1)
\rho(X,X').
\ee
Remark that $P(\rho) = \Tr \vect j$. For the sake of interpreting
  the forthcoming proposition, one can consider that, if $\rho$ is the
  density operator representing the state of the heavy particle {\em
    before the collision}, then, in our approximation, ${\mt I}_\chi \rho$
  is the density operator representing the state of the heavy particle {\em
    after the collision}.

\begin{proposition} \label{prop:Exchange}
The momentum and the kinetic energy of a particle moving on a line, as it lies
in the mixed state represented by the density operator ${\mt I}_\chi \rho$, are given by
\begin{align}
 \label{eq:DeltaMoment}
P\bigl({\mt I}_\chi \rho \bigr)  & =   P(\rho) + i \Theta_\chi'(0) +
\f 12 \Tr
\Bigl( \Gamma_\chi'  \rho + \rho  \Gamma_\chi' 
\Bigr),
\\
\label{eq:DeltaEnerKin}
 E_{kin}( \mt I_\chi\rho) &=  E_{kin}(\rho) +  i  \Theta_\chi'(0) 
P(\rho)
    +  \frac12\, \Theta_\chi''(0)  +  \frac12\,  \Tr \bigl(  \Gamma_\chi' j +
j \Gamma_\chi' \bigr),
\end{align}
where $\Gamma_\chi'$ denotes the operator whose action is the
multiplication by the derivative of $\Gamma_\chi$ and $j$ is the only
component of the current $\vec j$ that is present in the
one-dimensional case.
\end{proposition}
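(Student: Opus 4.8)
The plan is to reduce everything to a direct computation with integral kernels, exploiting the multiplicative action of $\mt I_\chi$ recorded in \eqref{eq:KerI} together with the decomposition \eqref{eq:IDecomp}. Writing $\widetilde\rho := \mt I_\chi \rho$, its kernel is $\widetilde\rho(X,X') = \rho(X,X')\,I_\chi(X,X')$, and I would substitute $I_\chi(X,X') = 1 - \Theta_\chi(X-X') + i\Gamma_\chi(X) - i\Gamma_\chi(X')$. Feeding this kernel into the diagonal formulas \eqref{def:Moment} and \eqref{def:Ekin} for $P$ and $E_{kin}$, the whole argument becomes a Leibniz expansion followed by restriction to the diagonal $X'=X$. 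The only diagonal data I need are the values of $I_\chi$ and of its derivatives there, all of which follow from $\Theta_\chi(0)=0$; in particular $I_\chi(X,X)=1$.

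For the momentum, I would apply $(\nabla_2-\nabla_1)=(\partial_{X'}-\partial_X)$ to $\widetilde\rho$. The needed inputs are $I_\chi(X,X)=1$ and $(\partial_{X'}-\partial_X)I_\chi\big|_{X'=X} = 2\Theta_\chi'(0) - 2i\Gamma_\chi'(X)$. Substituting and integrating, the term carrying $I_\chi(X,X)=1$ reproduces $P(\rho)$, the constant $2\Theta_\chi'(0)$ pairs with $\int \rho(X,X)\,dX = \Tr\rho = 1$ to give $i\Theta_\chi'(0)$, and the $\Gamma_\chi'$ term becomes $\int \Gamma_\chi'(X)\rho(X,X)\,dX = \frac12\Tr(\Gamma_\chi'\rho + \rho\Gamma_\chi')$, since $\Gamma_\chi'$ is a multiplication operator and the trace is cyclic. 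This yields \eqref{eq:DeltaMoment}.

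For the kinetic energy I would apply the mixed derivative $\partial_X\partial_{X'}$ (i.e. $\nabla_1\cdot\nabla_2$) to $\widetilde\rho$, producing four Leibniz terms, and evaluate on the diagonal using $\partial_X I_\chi\big|_{X'=X} = -\Theta_\chi'(0)+i\Gamma_\chi'(X)$, $\partial_{X'}I_\chi\big|_{X'=X} = \Theta_\chi'(0)-i\Gamma_\chi'(X)$, and $\partial_X\partial_{X'}I_\chi\big|_{X'=X} = \Theta_\chi''(0)$. The term with $I_\chi=1$ gives $E_{kin}(\rho)$; the term with $\Theta_\chi''(0)$ gives $\frac12\Theta_\chi''(0)$ after $\Tr\rho=1$; and the two cross terms combine, via the factor $(-\Theta_\chi'(0)+i\Gamma_\chi'(X))$ acting on $(\partial_{X'}-\partial_X)\rho = -2i\,j$, into $\bigl(2i\Theta_\chi'(0) + 2\Gamma_\chi'(X)\bigr)\,j(X,X)$. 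Integrating with the prefactor $\frac12$, the constant piece gives $i\Theta_\chi'(0)\,\Tr\vec j = i\Theta_\chi'(0)\,P(\rho)$ (using $P(\rho)=\Tr\vec j$), and the $\Gamma_\chi'$ piece gives $\frac12\Tr(\Gamma_\chi' j + j\Gamma_\chi')$. This is exactly \eqref{eq:DeltaEnerKin}.

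The computation is elementary, so the care needed is analytic rather than algebraic, and this is where the only real obstacle lies. I would first justify that all objects are well defined: differentiating under the integral in the diagonal formulas, restricting to the diagonal, and the finiteness of $\Theta_\chi'(0)$, $\Theta_\chi''(0)$ and $\Gamma_\chi'$, which amounts to integrability of $k\,|r_k|^2|\widehat\chi(k)|^2$ and $k^2\,|r_k|^2|\widehat\chi(k)|^2$ — guaranteed by $|r_k|\le 1$ together with $\chi$ sufficiently regular in $H^s$. I would also invoke the trace-class hypotheses on $\rho$ used in Theorem~\ref{thm:geneRho} (e.g. $\nabla\rho\nabla\in\L^1$) to ensure that $P(\rho)$, $E_{kin}(\rho)$ and the pairings against $\Gamma_\chi'$ converge. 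The reality of $\Gamma_\chi$ (noted in the remark following the Proposition) and of $i\Theta_\chi'(0)$ then makes the resulting momentum and energy manifestly real, as a consistency check.
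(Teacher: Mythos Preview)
Your proposal is correct and follows essentially the same route as the paper: both use the kernel formulas \eqref{def:Moment}--\eqref{def:Ekin}, the decomposition \eqref{eq:IDecomp}, and a Leibniz expansion restricted to the diagonal, with the cross terms in the energy computation rewritten via the current $j$. The only difference is that you add a paragraph of analytic justification (finiteness of $\Theta_\chi'(0)$, $\Theta_\chi''(0)$, trace-class conditions) which the paper leaves implicit, treating the proof as a purely formal calculation.
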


\begin{proof}[Proof of Proposition~\ref{prop:Exchange}.]
From
decomposition~\eqref{eq:IDecomp}, one immediately gets
\bean
 \partial_1 I_\chi(X,X') &=&   -  \Theta_\chi'(X-X') + i 
\Gamma_\chi'(X) \\
 \partial_2 I_\chi(X,X') &=&   \Theta_\chi'(X-X') - i  
\Gamma_\chi'(X') ,
\eean
where $\partial_j$ denotes the derivative
w.r.t.\  the $j^{th}$ argument. By exploiting the second identity in 
\eqref{def:Moment}, a straightforward computation yields
\bean
P\bigl( \mt I_\chi \rho \bigr) = 
 P(\rho) + i \Theta_\chi'(0) +  \int_{\erre}  \Gamma_\chi'(X) 
\rho(X,X) \,dX, 
\eean
which may be rewritten as~\eqref{eq:DeltaMoment}.

\noindent
Concerning kinetic energy, by the second identity in \eqref{def:Ekin}
one gets
\bean
2\, E_{kin}( \mt I_\chi\rho) & = &  2\, E_{kin}(\rho) + \int_{\erre}  \bigl[ 
(\partial_1 I_\chi)(X,X)
(\partial_2 \rho)(X,X) + 
 (\partial_2 I_\chi)(X,X) (\partial_1\rho)(X,X) \bigr]  \,dX \\
 && \hspace{1cm} + \; \int_{\erre} \bigl( \partial_2 \partial_1 I_\chi \bigr) (X,X) 
\rho(X,X) \,dX.
\eean
Using decomposition~\eqref{eq:IDecomp}, one finally has
\bean
E_{kin}({\mt I}_\chi\rho) 
 & = &  E_{kin}(\rho) - \frac12\, \Theta_\chi'(0)  \int_{\erre}  \bigl[  
(\partial_2\rho)(X,X) -  (\partial_1\rho)(X,X) \bigr] \,dX \\
  &&  \hspace{.5cm} + \;\frac i2 \,  
 \int_{\erre}    \Gamma_\chi'(X)  \bigl[   (\partial_2\rho)(X,X) -
(\partial_1\rho)(X,X) \bigr]  \,dX 
  +  \frac12\,  \Theta_\chi''(0).
\eean
This finally leads to~\eqref{eq:DeltaEnerKin}.
\end{proof}

\begin{remark}{\em
First, by \eqref{def:Theta}, one has 
$\Theta_\chi(0) =0$, ${Re (\Theta_\chi'(0)) = 0}$, and  ${Im (\Theta_\chi''(0)) 
=
  0}$, so that all quantities in Proposition \ref{prop:Exchange} are real.
In particular, notice that
\be \label{eq:itheta'}
 i \Theta_\chi'(0) = 2 \int_{\erre} k |r_k|^2 |\widehat \chi(k) |^2 \, dk,
\ee
which is in general different from zero, so that a transfer of
  momentum and energy is possible even though one could intuitively
  suspect that the light particle is in fact too light in order to
  exchange momentum or energy with the heavy one.
In order to understand this fact, recall that
the light particle has a  momentum independent of $\eps$ and a kinetic energy of order 
 $\ep^{-1}$.  Thus the collision occurs
between two particles with momentum of the same order, for which exchanges
of momentum and energy can take place.
 
 Besides, the above formula~\eqref{eq:itheta'} has a relatively
 simple interpretation. The plane wave $e^{ikx}$
 has a probability $|r_k|^2$ of being reflected, i.e. to gain a momentum
 $-2k$. 
Since the state of the incoming
 particle can be understood as a superposition of plane waves with
 weight $\widehat \chi (k)$,
the average gain in momentum amounts to $-2 \int_\RR k \, |r_k|^2 |
\widehat \chi (k) |^2 dk$ for the light particle.
By conservation of momentum, the average gain in momentum for the heavy 
 particle equals the r.h.s. of \eqref{eq:itheta'}. 

On the other hand, the last term in \eqref{eq:DeltaMoment} does not have, at least to our
concern, a clear interpretation. This is due to the fact that it takes
into account the interference between the reflected and the
transmitted waves, so that there is no classical counterpart to provide
some understanding.

For the kinetic energy the situation is analogous: the sum of the
second and the third term in the r.h.s. of \eqref{eq:DeltaEnerKin}
\begin{align*}
 i \Theta_\chi '(0) P(\rho)  + \frac12\, \Theta_\chi ''(0) 
 &=  2 \int_{\erre}
(k+P(\rho))k |r_k|^2 |\widehat \chi(k) |^2 \,dk \\
&= \frac12 \int_{\erre} \bigl[ (2\, k+P(\rho))^2 - P(\rho)^2 \bigr] |r_k|^2 |\widehat \chi(k) |^2 \,dk
\end{align*}
can be understood similarly to the first term in the r.h.s. of \eqref{eq:DeltaMoment},
while the last term
is due to a superposition effect between transmitted and
reflected waves and its meaning is therefore less transparent. 
}
\end{remark}

\medskip
\paragraph{ \bf The case of an initial Gaussian state for the light
  particle.}

 Let us specialize to the case in which
the initial state of the incoming light particle is represented by a Gaussian
wave function, {\it i.e.} 
\be \label{gauschi}
\chi(x) = \frac1{(2\pi\sigma^2)^{1/4}} e^{-\frac{(x-x_l)^2}{4
\sigma^2}+ i p x},
\ee
where $x_l \in \RR$ is the  centre of the wave packet, $\sigma$
its spread, and $p$ its mean
momentum. Then,
\[
 \widehat \chi(k)  =  
  \left(\frac {2 \sigma^2} \pi \right)^{1/4} e^{- \sigma^2(k-p)^2  -
i (k-p) x_l}.
\]
We shall make this choice of state for the light particle in
Section \ref{SEC4}, when dealing with numerical simulations.
For this reason, we give simplified expressions for $\Theta_\chi$ and
$\Gamma_\chi$ and we provide some related approximation formulas that
prove easy to 
handle.  
In fact, for the Gaussian case
definitions (\ref{def:Theta}) and (\ref{def:Gamma}) yield
\begin{equation} 
\begin{array}{lll}
\Theta_{\sigma,p}(Y)
 &=& \sigma  \sqrt{\frac 2 \pi }  \int_\RR   \!  \left( 1  -  e^{2ikY}\right)
|r_k|^2  e^{-2\sigma^2(k-p)^2 }  \,dk \,,\\[3mm]
\Gamma_{\sigma,p}(X) &=&  i \sigma  \sqrt{\frac 2 \pi }
e^{-2\sigma^2 p^2} \int_\RR  \, t_k
\overline{r_{-k}}  \, e^{-2\sigma^2 k^2 + 2 i k(X- x_l) }  \,dk.
\label{eq:GammaG1}
\end{array}
\end{equation}
If the wave
packet has a large spread in position, so that its support in momentum is small
compared to the scale at which $|r_k|^2$ varies, then we can
approximate $\Theta_{\sigma,p}$ by using $|r_{p}|^2$ instead of $|r_k|^2$ in
the integral, and get  the following
approximation
\begin{align}
 \Theta_{\sigma,p}^{app}(Y)
& :=  |r_{p}|^2 \left( 1  -  \sigma  \sqrt{\frac 2 \pi } \int_\RR   \! 
  e^{2ikY -2\sigma^2(k-p)^2 }  \,dk  \right)
  \nonumber  \\
\ & =  |r_{p}|^2 \left( 1  -   e^{2i p Y - \frac{Y^2}{2 \sigma^2 }} 
\right). 
\label{eq:ThetaApprox}
\end{align}

Approximating $\Gamma_\chi$ turns out to be more difficult.  However, as
a first step, assuming that the light particle has a large momentum, we can
approximate $\Gamma_\chi$ by $0$ since the factor $e^{-2\sigma^2 p^2}$ is negligible
for $\sigma\, p$ large enough. 

The approximations introduced here can be expressed in terms of
density matrices. Indeed, one has the following proposition:
\begin{prop} \label{prop:Theta_error}
For any positive, self-adjoint operator $\rho$ with $\Tr \rho = 1$,
the following estimate holds 
\begin{align*}
 \bigl\| \Theta_{\sigma,p}(X-X') \rho(X,X')  - 
\Theta_{\sigma,p}^{app}(X-X') \rho(X,X') \bigr\|_{\mt L^1} 
 & \le \sqrt{\frac 2 {\pi\sigma^2}  } \, \left\| \frac {d |r_k|^2} {dk}
\right\|_\infty, \\
\bigl\|  i [ \Gamma_{\sigma,p}(X) - \Gamma_{\sigma,p}(X')] \rho(X,X') 
\bigr\|_{\mt L^1}
& \le 2 \, e^{- 2 \sigma^2 p^2},
\end{align*}
where we denoted an operator by its integral kernel.
\end{prop}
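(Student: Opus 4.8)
The plan is to reduce both estimates to a single principle: conjugation of a trace-class operator by a unitary multiplication operator preserves the $\mt L^1$-norm, and multiplication of the kernel $\rho(X,X')$ by a function of $X-X'$ (resp.\ by $\Gamma_{\sigma,p}(X)-\Gamma_{\sigma,p}(X')$) can be expressed as a superposition (resp.\ a commutator) of such conjugations. Throughout I would write $w(k):=\sigma\sqrt{2/\pi}\,e^{-2\sigma^2(k-p)^2}$, which is a probability density, and denote by $M_\xi$ the unitary operator of multiplication by $e^{i\xi X}$, so that $e^{i\xi(X-X')}\rho(X,X')$ is the kernel of $M_\xi\rho M_\xi^\ast$ and hence $\|M_\xi\rho M_\xi^\ast\|_{\mt L^1}=\|\rho\|_{\mt L^1}=\Tr\rho=1$.

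For the first estimate I would first put both $\Theta$'s over the common density $w$. Using the Gaussian identity $\int_\RR e^{2ikY}w(k)\,dk=e^{2ipY-Y^2/(2\sigma^2)}$ together with $\int_\RR w=1$, both $\Theta_{\sigma,p}(Y)$ and $\Theta_{\sigma,p}^{app}(Y)$ from \eqref{eq:ThetaApprox} take the form $\int_\RR(1-e^{2ikY})(\cdot)\,w(k)\,dk$ (with weight $|r_k|^2$, resp.\ the constant $|r_p|^2$), so that
$$
\Theta_{\sigma,p}(Y)-\Theta_{\sigma,p}^{app}(Y)=\int_\RR\bigl(1-e^{2ikY}\bigr)\bigl(|r_k|^2-|r_p|^2\bigr)w(k)\,dk .
$$
Setting $Y=X-X'$ and multiplying by $\rho(X,X')$, the associated operator is the integral $\int_\RR(|r_k|^2-|r_p|^2)w(k)\,(\rho-M_{2k}\rho M_{2k}^\ast)\,dk$. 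Taking $\mt L^1$-norms under the integral and using $\|\rho-M_{2k}\rho M_{2k}^\ast\|_{\mt L^1}\le 2$ yields the bound $2\int_\RR\bigl||r_k|^2-|r_p|^2\bigr|\,w(k)\,dk$. I would finish with the mean value bound $\bigl||r_k|^2-|r_p|^2\bigr|\le\|\tfrac{d|r_k|^2}{dk}\|_\infty|k-p|$ and the first absolute moment of the Gaussian, $\int_\RR|k-p|\,w(k)\,dk=1/(\sigma\sqrt{2\pi})$; the constants then collapse exactly to $\sqrt{2/(\pi\sigma^2)}$.

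For the second estimate I would observe that, since $\Gamma_{\sigma,p}$ depends on a single variable, the kernel $[\Gamma_{\sigma,p}(X)-\Gamma_{\sigma,p}(X')]\rho(X,X')$ is that of the commutator $[\Gamma_{\sigma,p},\rho]$, where $\Gamma_{\sigma,p}$ now denotes the multiplication operator. Hence
$$
\bigl\|i[\Gamma_{\sigma,p},\rho]\bigr\|_{\mt L^1}\le\|\Gamma_{\sigma,p}\rho\|_{\mt L^1}+\|\rho\,\Gamma_{\sigma,p}\|_{\mt L^1}\le 2\,\|\Gamma_{\sigma,p}\|_\infty\,\|\rho\|_{\mt L^1}=2\,\|\Gamma_{\sigma,p}\|_\infty .
$$
It then remains to control the sup-norm. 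Pulling the modulus inside the integral in \eqref{eq:GammaG1}, bounding $|t_k|\,|r_{-k}|\le 1$, and evaluating $\sigma\sqrt{2/\pi}\int_\RR e^{-2\sigma^2k^2}\,dk=1$, I obtain $\|\Gamma_{\sigma,p}\|_\infty\le e^{-2\sigma^2p^2}$, whence the claimed $2e^{-2\sigma^2p^2}$. (Invoking $|t_k|\,|r_{-k}|\le 1/2$ from \eqref{eq:relRetT} would even halve the constant.)

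The computations above are routine; the single point that needs care is justifying that the formal manipulations on integral kernels are genuine identities between trace-class operators. Concretely, I must argue that the $k$-integrals are Bochner integrals of $\mt L^1$-valued maps — the integrands are strongly continuous in $k$ and absolutely integrable in $\mt L^1$-norm, owing to the Gaussian weight and to the invariance of the norm under conjugation by $M_{2k}$ — so that pulling $\|\cdot\|_{\mt L^1}$ inside the integral is legitimate and the ``kernel $=$ operator'' identifications are exact rather than merely formal. This is the main, albeit standard, technical obstacle; everything else is the bookkeeping of Gaussian constants.
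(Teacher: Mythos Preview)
Your proof is correct and reaches the same constants as the paper. The underlying estimates are identical --- the mean-value bound on $|r_k|^2-|r_p|^2$, the first absolute moment $\int|k-p|\,w(k)\,dk=1/(\sigma\sqrt{2\pi})$, and the sup-norm bound $\|\Gamma_{\sigma,p}\|_\infty\le e^{-2\sigma^2 p^2}$ --- but the packaging differs slightly. The paper first reduces to the rank-one case $\rho=|\varphi\rangle\langle\varphi|$, uses the explicit formula $\|\,|\varphi_1\rangle\langle\varphi_2|\,\|_{\mt L^1}=\|\varphi_1\|_2\|\varphi_2\|_2$ to evaluate $\|e^{2ikX}\varphi(X)e^{-2ikX'}\overline{\varphi(X')}\|_{\mt L^1}=1$ and $\|\Gamma_{\sigma,p}(X)\varphi(X)\overline{\varphi(X')}\|_{\mt L^1}=\|\Gamma_{\sigma,p}\varphi\|_2$, and then remarks that the general case follows by spectral decomposition and summation. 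You bypass this reduction by working directly at the operator level: recognising the multiplier $e^{2ik(X-X')}$ as unitary conjugation $M_{2k}\rho M_{2k}^\ast$ and the $\Gamma$-term as the commutator $[\Gamma_{\sigma,p},\rho]$, then invoking the ideal property $\|A\rho\|_{\mt L^1}\le\|A\|\,\|\rho\|_{\mt L^1}$. This is a modest but genuine streamlining --- it handles arbitrary positive trace-class $\rho$ in one stroke and makes the Bochner-integral justification (which you flag) transparent, whereas the paper's rank-one argument is more hands-on but requires the extra diagonalise-and-sum step.
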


\begin{proof}
We will use the following simple estimates: for a wave packet $\chi$ with 
center $x_0$, spread $\sigma$, and momentum $p$, we have
\begin{align}
\int_\erre \left| | r_k |^2 -  | r_p |^2 \right|  | \widehat \chi (k) |^2 \,dk  
 & \leq \left\| \frac {d |r_k|^2} {dk}
\right\|_\infty \int_\erre   | \widehat \chi
  (k) |^2  | k - p | \,dk \nonumber \\
 & =  \f 2 {\sigma \sqrt { 2 \pi}} \left\| \frac {d |r_k|^2} {dk}
\right\|_\infty \int_0^{+ \infty} k e^{-k^2} \, dk \nonumber \\
& = \f 1 { \sigma \sqrt{2
    \pi} }  \left\| \frac {d |r_k|^2} {dk} \right\|_\infty,
    \label{estim:theta_inf}
\end{align}
and
\begin{align}
| \Gamma_{\sigma,p}(X) |  \le \sigma \sqrt{\f 2 \pi} e^{-2 \sigma^2 p^2} \int_\erre 
e^{- 2\sigma^2 k^2}\,dk = e^{- 2 \sigma^2 p^2}.
\end{align}

We shall only perform the proof in the case where $\rho$ is a rank one 
projector~: $\rho(X,X') = \varphi(X) \overline{\varphi(X')}$, where $\| 
\varphi\|_2 = 1$.
The general case follows by diagonalisation of a general $\rho$ and summation 
of the error given in the rank one case. Using~\eqref{estim:theta_inf}, we get

\begin{align*}
 \bigl\|   ( & \Theta^{app}_{\sigma, p}  (X - X')  - \Theta_{\sigma,p} (X - X') 
)
 \rho (X,X')  \bigr\|_{{\mt L}_1} \\
 & \le \left\| \rho \right\|_{{\mt L}_1}
 \int_\erre  \bigl| | r_k |^2 -  | r_p |^2 \bigr|\,  | \widehat \chi  (k) |^2  \,dk  +
 \left\| \int_\erre (| r_k |^2 -  | r_p |^2) e^{2ik(X-X')} | \widehat \chi
  (k) |^2 \rho (X,X')\,  dk \right\|_{{\mt L}_1} \\
& \le   \f 1 { \sigma \sqrt{2
    \pi} }  \left\| \frac {d |r_k|^2} {dk} \right\|_\infty
    +  \int_\erre 
    \bigl\| e^{2ikX} \varphi (X)  e^{-2ikX'} \overline{\varphi (X')}
\bigr\|_{\mt L_1}\left|| r_k |^2 -  | r_p |^2 \right|  | \widehat \chi
  (k) |^2  \,dk \\
& \le \sqrt{\frac 2 {\pi\sigma^2}  } \, \left\| \frac {d |r_k|^2} {dk}
\right\|_\infty.
\end{align*}

Before going to the estimate on $\Gamma_{\sigma,p}$, we recall that for any rank one 
operator $\rho'$, i.e. operator with kernel of the form 
$\rho'(X,X')= \varphi_1(X) \overline{\varphi_2(X')}$, we have the equality $ \| \rho' \|_{\mt L^1} = \| \varphi_1 \|_2 \| \varphi_2\|_2$. If we 
 apply this to the rank one operators with kernel 
$\Gamma_{\sigma,p}(X) \varphi(X) \overline{\varphi(X')}$ and
$ \varphi(X) \overline{ \Gamma_{\sigma,p}(X') \varphi(X')}$, we get
\begin{align*}
\bigl\|  i [ \Gamma_{\sigma,p}(X) - \Gamma_{\sigma,p}(X')] \rho(X,X') 
\bigr\|_{\mt L^1} 
& \le 
\bigl\|  \Gamma_{\sigma,p}(X)  \varphi(X) \overline{\varphi(X')} \bigr\|_{\mt 
L^1} +
\bigl\|   \Gamma_{\sigma,p}(X') \varphi(X) \overline{\varphi(X')} \bigr\|_{\mt 
L^1} \\
& \le  2\, \| \Gamma_{\sigma,p} \varphi \|_2 \| \| \varphi \|_2 
\le 2 \, \| \Gamma_{\sigma,p} \|_\infty \le 2 \, e^{- 2 \sigma^2 p^2}.
\end{align*}
This concludes the proof.
\end{proof}

%
\subsection{Particular potentials of interest.} \label{sec:diffpot}

Here, we briefly introduce three
 particular potentials that we shall use in the numerical simulations.

\medskip
\paragraph{ \bf Dirac's delta potential}

In the case $V=\alpha \delta_0$, with $\alpha>0$, 
the reflection and transmission amplitudes are given by (see
Proposition \ref{prop:ScatDirac})
\be \label{def:RetTDirac}
r_k = - \frac{\alpha}{\alpha - i |k|}, \qquad t_k = -
\frac{i|k|}{\alpha - i
|k|}\,, \quad \forall k \in \RR.
\ee
 
In the next section, we will use \eqref{def:RetTDirac}
to compute the function $I_\chi$ numerically via 
(\ref{eq:IDecomp}), \eqref{def:Theta}, \eqref{def:Gamma}.
  To avoid the numerical integration, one can use
formula~\eqref{eq:ThetaApprox}, which gives 
\begin{equation} \label{eq:ThetaGauss}
 \Theta_{\sigma,p}^{\delta,app}(Y) =
\frac{\alpha^2}{\alpha^2 + p^2} \Bigl( 1  -   e^{2i p Y - \frac{Y^2}{2 \sigma^2 }}  \Bigr).
\end{equation}

\medskip
\paragraph{ \bf Potential barrier} 
A further potential for which 
the scattering matrix can be explicitly computed
is the potential barrier, i.e. 
$$
V(x) := V_0 \indic_{[-a,a]}, \qquad V_0 = \frac{\alpha}{2 a} \, \quad a >0,
$$
where ${\indic}$ denotes the characteristic function and $\alpha>0$
measures the strength of the interaction. Letting 
$E= \frac{k^2}2$ denote the energy of the incoming wave
and defining $k_0:= \sqrt{2(E-V_0)} \in \CC$,
the transmission
and reflection amplitudes have the forms
\bea \label{T}
t_k & = & \frac{4 k k_0 e^{-2i k a}}{(k + k_0)^2 e^{-2i k_0 a}  -
  (k-k_0)^2 e^{2ik_0a}}\,, \quad \forall k \in \RR \backslash \{ 0 \},
  \eea
\bea \label{R}
r_k & = & \frac{(k^2 -k_0^2) e^{-2i k a} (e^{-2ik_0a} -
  e^{2ik_0a})}{(k + k_0)^2 e^{-2i k_0 a}  - (k-k_0)^2 e^{2ik_0a}},
\quad \forall k \in \RR \backslash \{ 0 \} .
\eea
 
\medskip
\paragraph{ \bf Numerical approximation for more general
  potentials}
In the case of more general potentials, there is no analytic
expression for the amplitudes $r_k$ and $t_k$, however, we can compute
them numerically.

We assume that the potential $V$ rapidly
decreases  at infinity, 
and choose a sufficiently large $a$ such that we can approximate $V$
by $0$ on $\RR 
\backslash [-a,a]$. 
Let us shortly summarize
the classical procedure to calculate the reflection and
transmission amplitudes.

We look for generalized eigenfunctions $\psi_k$ of the Hamiltonian
$-\frac12 \Delta + V$  
associated to the eigenvalue $E=\frac{k^2}2$. Thanks to our approximation, 
these eigenfunctions are combinations of the free waves $e^{ikx}$ and $e^{-ikx}$ 
outside the interval $[-a,a]$. 
For $k > 0$ we look for solutions satisfying
\be \label{eq:scat_num}
\psi_k(x) :=
\begin{cases}
e^{ik(x+a)} + {r_k} e^{-ik(x+a)} & \text{for }\; x <-a, \\
{t_k} e^{ik(x-a)}& \text{for }\; x >a.
\end{cases}
\ee
In order to find the values of
$t_k$ and $r_k$, one must solve  the  
stationary Schr\"odinger equation associated with transparent 
 boundary conditions in the interval $[-a,a]$
\be  \label{SCH_stat}
\begin{cases} 
- \frac 1 2 \psi''_k(x) + V \psi_k = E \psi\,, \quad x \in [-a,a],\\
\psi'_k(-a) + ik \psi_k(-a) = 2ik, \\
\psi'_k(a) - ik \psi_k(a) = 0.
\end{cases}
\ee 
Transparent boundary conditions express the fact that the wave
function as well as its derivative are continuous at $\pm a$. 
Using the continuity of the wave function and of its derivative at
$x=\pm a$, it  
can be checked that the boundary conditions in~\eqref{SCH_stat} are indeed 
satisfied if and only if conditions~\eqref{eq:scat_num} are satisfied for some 
$r_k$ and $t_k$.  The reflection and
transmission amplitudes are then given by 
\be \label{TTRR}
t_k:= \psi_k(a) \,, \quad r_k:= \psi_k(-a)-1\,, \quad \forall k >0.
\ee
For a wave coming from the right, i.e. $k<0$, the procedure is
analogous.

\section{Numerical asymptotic resolution of the two-body
  Schr\"odinger system} \label{SEC4} 

In this section we use the approximations introduced in Sections \ref{theorem} and
\ref{sec:Icalc} in order to efficiently resolve  the two-body
Schr\"odinger equation (\ref{SCH2D})-(\ref{due}) in the regime $\eps \ll 1$.
The final aim is to quantify and study numerically the
decoherence effect induced on the heavy particle by the interaction
with the light one.
\subsection{Model and initial
data} \label{subsec41}

According to Theorem \ref{thm:geneRho},
for small  values of $\eps$ we can replace the
resolution of the two-body Schr\"odinger equation
(\ref{SCH2D})-(\ref{due}) or, equivalently, of equation \eqref{SCHRhoScat} for
density operators, by the resolution of 
system ~\eqref{SCHRhoScatLim-rid} for the reduced density operator
of the heavy particle. Rephrasing the latter as an equation for the
integral kernel $\rho^{M,a}(t,X,X')$ of the operator  $\rho^{M,a}(t)$,
one gets
\be \label{matrixlim}
\begin{cases}
\ds i \, \partial_t \rho^{M,a} (t,X,X')=-  \frac 1{2M} (\Delta_X
- \Delta_{X'})
 \rho^{M,a}  (t,X,X') \,, \quad
\forall (X,X') \in \RR^2\,, \,\,\, \forall t \in \RR^+ \\[3mm] 
\ds \rho^{M,a}(0,X,X') = \rho^{M}_0 (X,X')\,{I}_{\chi} (X,X') ,
\end{cases}
\ee
where the collision function $I_\chi$ is given by formulas
(\ref{eq:IDecomp}), (\ref{def:Theta}), (\ref{def:Gamma}), and
$\rho_0^M (X,X')$ is the integral kernel of the operator $\rho_0^M$,
which represents the state of the heavy particle before the collision. 
We set
\be \label{rom0}
\rho^M_0(X,X'):= \varphi(X)\, \overline{\varphi(X')},
\ee
where
\be \label{initwf}
\varphi(X): = N \left( \varphi_-(X) + \varphi_+ (X) \right)
\ee
with
\bea
 \label{def:phipm}
\varphi_\pm(X) & := & 
\frac 1 { (2 \pi)^{1/4}\, \sqrt{\sigma_H}}  
 e^{-\frac{(X \mp X_0)^2}{4 \sigma_H^2}} e^{ \mp i p_H X } \\
N & : = & \sqrt 2 \left( 1 + e^{- \f {X_0^2} {2 \sigma_H^2}} e^{- 2
    \sigma_H^2 p_H^2}            \right)^{\f 1 2}. \label{enne}
\eea
{The parameters $X_0$, $p_H$ and $\sigma_H$
  are positive.} 
\begin{figure}[htbp]
\begin{center}
\includegraphics[width=7.7cm]{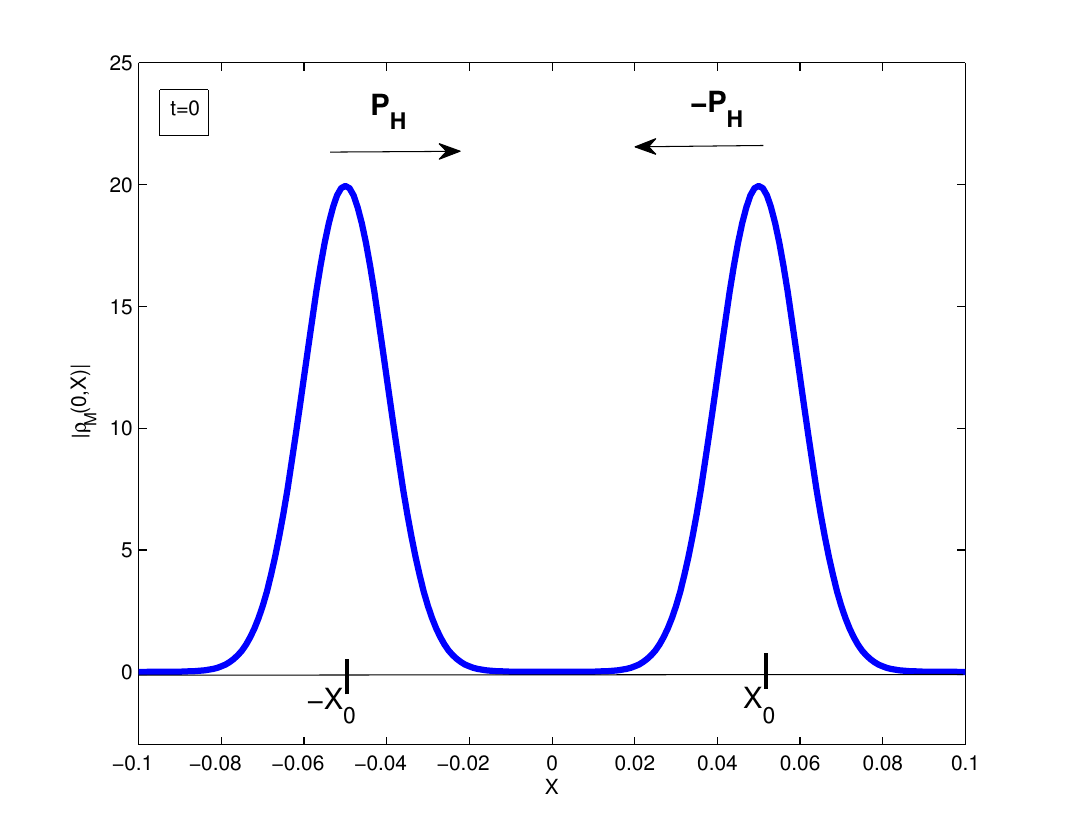}\hfill
\includegraphics[width=7.7cm]{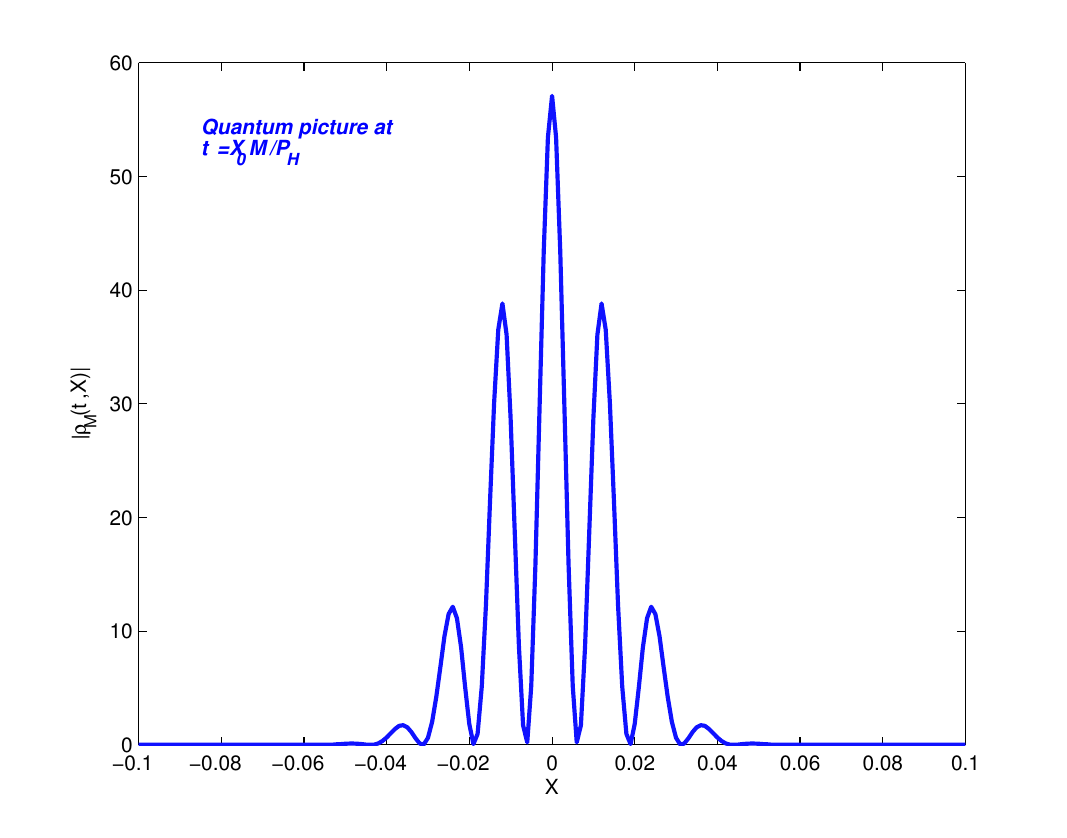}
\end{center}
\vspace{-0.3cm}
\caption{\label{Init} 
{\footnotesize Left: Probability density associated to the initial
  state of the heavy particle.  
Right: Probability density associated to the state of the heavy
particle in the case of no interaction, at the time of maximal overlap
of the two bumps}}
\end{figure}

\noindent
Then, the integral kernel \eqref{rom0}
can be rewritten as
\begin{align} \nonumber 
\rho^M_0(0, & X,X')  =   N^2  \bigl[  \varphi_-(X) + \varphi_+(X) \bigr] 
\bigl[ \overline{ \varphi_-(X')} + \overline{\varphi_+(X')} \bigr] 
 \\ 
 & =  N^2 \left[  \varphi_-(X)  \overline{ \varphi_-(X')} +
   \varphi_-(X)  \overline{ \varphi_+(X')} +
  \varphi_+(X)  \overline{ \varphi_-(X')} +   \varphi_+(X)  \overline{
    \varphi_+(X')} \right]. 
    \label{rhoinitapp}
\end{align}
The two terms $\varphi_\pm(X) 
\overline{\varphi_\pm(X')}$ are called {\em diagonal},
while the two terms
$\varphi_\pm(X) \overline{\varphi_\mp(X')}$ are called {\em
  antidiagonal}. In fact, in view of   
definition~\eqref{def:phipm} the products 
$\varphi_\pm(X) \overline{\varphi_\pm(X')}$ rapidly decay outside of a diagonal 
region $\{ |X-X'| \simeq \sigma_H \}$, while the products $\varphi_\pm(X) 
\overline{\varphi_\mp(X')}$ are essentially supported in the region
$\{ |X+X'| \simeq \sigma_H  
\}$. 
 
Physically, the density matrix ``before the collision'' $\rho_0^M$ or,
equivalently, 
the initial wave function \eqref{initwf},
 describes a state consisting of a
quantum superposition of two localized bumps centred respectively at
$\pm X_0$ and 
moving against 
each other with relative speed $2 p_H /M$, as illustrated in the left
plot of Figure \ref{Init}.  If no light particle or, more generally, no
  interaction is present, then one should use $\rho_0^M (X,X')$ as
  initial data in \eqref{matrixlim}. Thus, at time $M X_0 / p_H$ the
  non-diagonal terms in 
  \eqref{rhoinitapp} give rise 
 to  
an interference pattern, shown in the right plot of
Figure \ref{Init}. The emergence of     interference 
is due to the
non-diagonal terms in \eqref{rhoinitapp}.
On the other hand, due to the collision, the initial data in \eqref{matrixlim} is is replaced by $\rho_0^{M,a} (X,X') = I_\chi
(X,X')\rho_0^M (X,X')$. We will show in Section \ref{53} that the
presence of the factor $I_\chi$  dampens the interference.

\subsection{Numerical domain and discretization}

Here we give some brief explanation about the numerical resolution of
equation \eqref{matrixlim}.

First,
we truncate the spatial domain
$\RR^2$ to a bounded simulation domain $\Omega_X^2:=(-H,H) \times
(-H,H)$ and impose 
boundary conditions on $\partial \Omega_X$. To simplify 
computations, we choose homogeneous Neumann boundary
conditions, which prescribe that the particle is reflected at the
boundaries. However, if the domain is sufficiently
    large, the presence of the boundaries has negligible influence on the
    dynamics of the heavy particle.

Second, we discretize equation \eqref{matrixlim}.
For the discretization in time  we employ the 
Peaceman-Rachford scheme which is unconditionally stable and second-order accurate.
Let us explain in more detail the steps in the scheme.  We start by
discretizing the time interval
$[0,T]$ and the simulation domain of the 
heavy particle $\Omega_X=(-H,H)$. 
Let us introduce the time and space steps
\[
\Delta t = \frac T L >0     , \qquad h_X : = \f {2H} {J-1}>0 ,
\qquad {\textrm{with}} \ L,J \in {\mathbb{N}}
\]
and define the
homogeneous discretization
$t_l : = l \Delta t$, $X_j = -H + (j-1) h_X,$ so that
\[
0 = t_0 \le \cdots \le t_l \le \cdots \le t_L=T, \quad -H  = X_1 \le
\cdots \le X_j \le \cdots \le X_J=H. 
\]
Then, defining the operators $A,B:{\cal H} \subset L^2(\Omega_X) \rightarrow
L^2(\Omega_X)$  
\[
A:=- \frac 1 {2 M} \Delta_X, \quad  B:= \frac 1 {2 M}
\Delta_{X'},\quad {\cal H}:= \{ \phi \in H^2(\Omega_X)\,\, / \,\,
\partial_n \phi=0, \,\, \textrm{on}\,\, \partial \Omega_X\, \}, 
\]
where $\partial_n$ denotes the outward normal to the boundary
$\partial  \Omega_X$, the Peaceman-Rachford scheme for the system
(\ref{matrixlim}) writes
\be \label{sehrcrank}
\rho^{l+1}=(iId - \frac{\Delta t}2 A)^{-1}(iId + \frac{\Delta t} 2
B)(iId - \frac{\Delta t}2 B)^{-1}(iId + \frac{\Delta t} 2
A)\rho^{l}, \qquad  l=0,\cdots,L-1,
\ee
where $\rho^l$ (resp. $\rho^l_{ij}$) denotes the approximation of 
$\rho^{M,a}(t_l)$ 
(resp. $\rho^{M,a}(t_l,X_i,X_j)$).
Notice that \eqref{sehrcrank}
is a sequence of Euler-explicit,
Crank-Nicolson and Euler-implicit steps.
Equivalently, one performs a sequential resolution of two 1D
systems 
$$
(iId - \frac{\Delta t} 2 B)\rho^{l+1/2}=(iId + \frac{\Delta t} 2
A)\rho^{l} \,, \quad (iId - \frac{\Delta t} 2 A)\rho^{l+1}=(iId +
\frac{\Delta t} 2 B)\rho^{l+1/2}. 
$$
Finally, we discretize the
operators $A$ and $B$ in space via a standard
second-order centered method.

The 
parameters employed in the simulations are summarized in Table \ref{tab}.
 
\begin{table*}[htbp]
\begin{center} 
\begin{tabular}{|c|c||c|c|}
\hline
$2*H$&$2*10^{-1}$&$J$&$201$\\
\hline
$T$&$1.92*10^{-2}$&$L$&$120*20+1$\\
\hline
$\hbar$&$1$&$p_H$&$3.4*M$\\
\hline
$M$&$100$&$p$&$1.25;\,2.5;\, 3.5*10^{2}$\\
\hline
$\sigma_H, \sigma$&$10^{-2}, 2*10^{-2}$ &$X_0, x_l$&$5*10^{-2}, 2*10^{-1} $\\
\hline
$\alpha$&$0,\cdots,40 * 10^{2}$&&\\
\hline
\end{tabular}
\end{center}
\caption{{\footnotesize Parameters used in the numerical simulations.}}
        \label{tab}
\end{table*}
\vspace{0.3cm}

Let us briefly explain the reasons why the present numerical method is 
faster than the one previously employed in \cite{Cla_Riccardo}.

\noindent
First, thanks to Theorem \ref{thm:geneRho} all information on the interaction is
embodied in the collision operator ${\mt I}_\chi$ and is present in
problem\eqref{matrixlim} through the initial condition
only. Therefore,
one can get rid of any variable related to the light particle and thus
of the fast time scale. The initial multi-scale problem then reduces
to a one-scale problem, allowing a considerable gain in efficiency
as compared to the 
method employed in \cite{Cla_Riccardo}. 

\noindent
Second, the scheme is an alternating-direction implicit
(ADI) one, i.e. the actions of the two operators $A$ and $B$,
acting respectively on the variable $X$ and $X'$, are separated, so
  that, compared to a direct
resolution of (\ref{matrixlim}) via Crank-Nicolson method, the
computational costs are drastically reduced.

\subsection{Numerical results and interpretation} \label{53} 
Here we present some numerical results obtained via the 
resolution method of the evolution equation (\ref{matrixlim})
introduced in the previous section.
We give a detailed analysis for the case of a Dirac's delta
interaction potential, and then stress the main analogies with the
cases of a potential barrier and of a Gaussian potential. Finally, we
sketch the
case with
multiple light particles. For any choice of the interaction potential $V$, 
the reflection and transmission amplitudes are
computed as detailed in Section \ref{sec:Icalc} and the
corresponding collision function $I_\chi$ is calculated numerically
by formulas (\ref{eq:IDecomp}), (\ref{def:Theta}),
(\ref{def:Gamma}). 

\subsubsection{Dirac's delta potential} \label{SEC41}
Here we consider the case
$V(x)=\alpha  \delta_{0}(x)$, with $\alpha \in \RR^+$. 

 The left plot in Figure \ref{ima_rho_Dirac} shows the quantity
$ | \rho^{M}_0 (X,X') |$ (i.e. the state of the heavy particle before
 the collision with the light one).
Notice that the non-trivial values of 
  $\rho_0^M (X,X')$ are concentrated in four bumps. In accordance with
the terminology introduced in 
  Section \ref{subsec41}, the two bumps
located around the diagonal $X=X'$ are 
  called {\em diagonal} while the two others, located around the
  set $X = -X'$, are called {\em antidiagonal}. The diagonal bumps
  give the probability density associated to the state of the
  heavy particle, while the antidiagonal bumps are responsible for the
  interference. Diagonal and antidiagonal bumps share the
same shape and the same size.

\noindent 
The right plot in Figure \ref{ima_rho_Dirac} displays
$ |\rho^{M,a}(0,X,X')| =| {I}_\chi (X,X') \rho^{M}_0 (X,X')|$
(i.e. the state of the heavy particle immediately after the 
collision) in the test case $\alpha=10^3$. 
It is easily seen
that, as an effect of the
collision with the light particle, the antidiagonal bumps are damped,
thus providing the expected attenuation of the 
interference. 

\begin{figure}[htbp]
\begin{center}
\includegraphics[width=7.7cm]{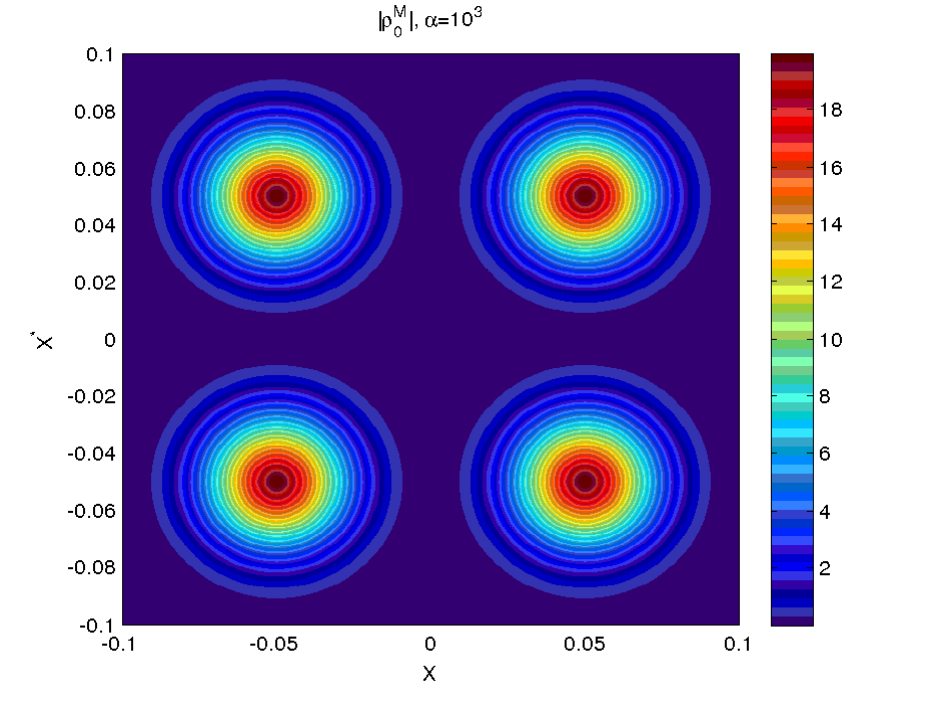}\hfill
\includegraphics[width=7.7cm]{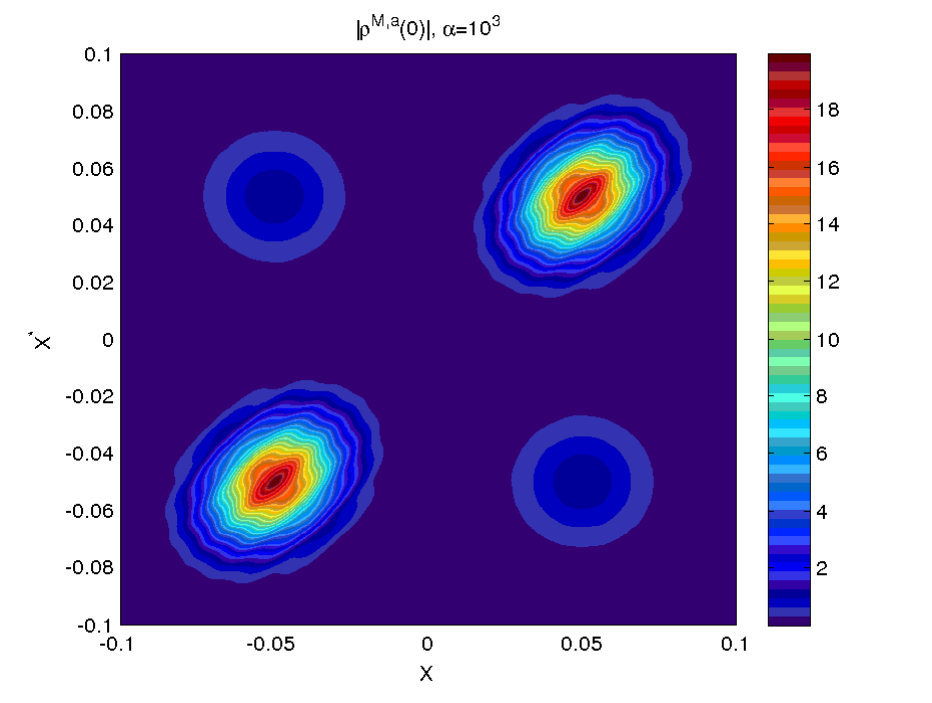}
\end{center}
\vspace{-0.3cm}
\caption{\label{ima_rho_Dirac} 
{\footnotesize  Test case: Dirac potential with $\alpha=10^3$. Left:
  Plot of $|\rho^{M}_0 (X,X')|$ before the
  collision; 
Right: Plot of $|\rho^{M,a}(0,X,X')|$ immediately after the collision}.} 
\end{figure}

Figure \ref{decoh_effect_Dirac} is devoted to  the collision function
$I_\chi$. In the left plot we
show  $|I_\chi(X,X')|$ corresponding to the right plot of 
 Figure \ref{ima_rho_Dirac}, while
in the right plot of Figure \ref{decoh_effect_Dirac}  
we give $|I_\chi(X,-X)|$ for different values of $\alpha$. 
One can observe that, as the strength of the potential varies,
the band width 
of $|I_\chi(X,-X)|$ remains unchanged;  on the other hand,
  notice that the more the
  potential is intense,
the more the quantity $|I_\chi(X,-X)|$ is reduced for large
values of $X$. 
It is precisely this reduction which causes the damping of the antidiagonal
bumps in Figure \ref{ima_rho_Dirac}.

\begin{figure}[htbp]
\begin{center}
\includegraphics[width=7.7cm]{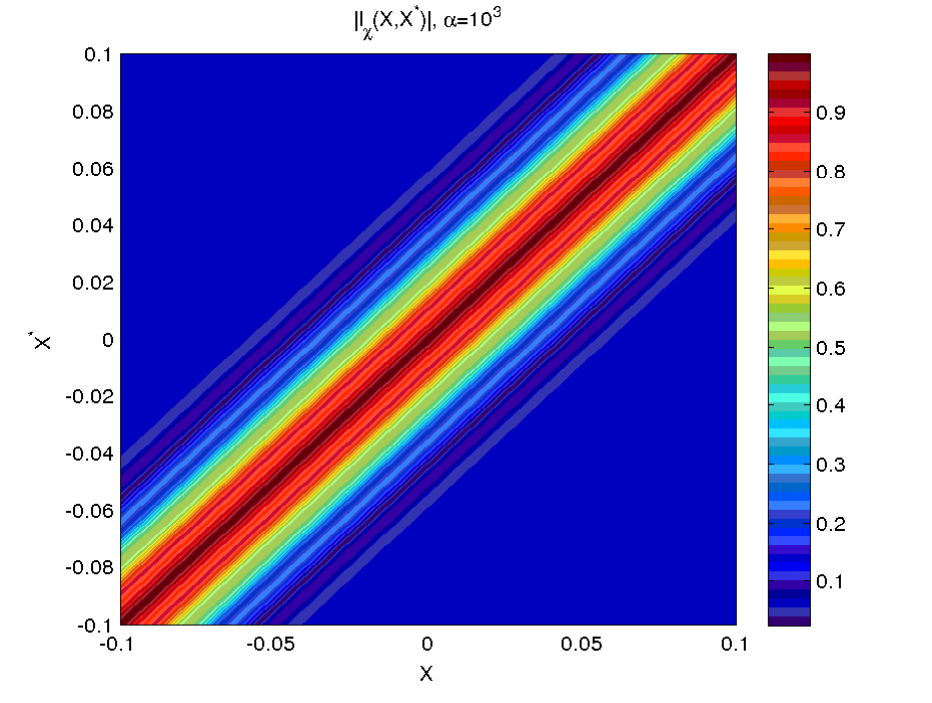}\hfill
\includegraphics[width=7.7cm]{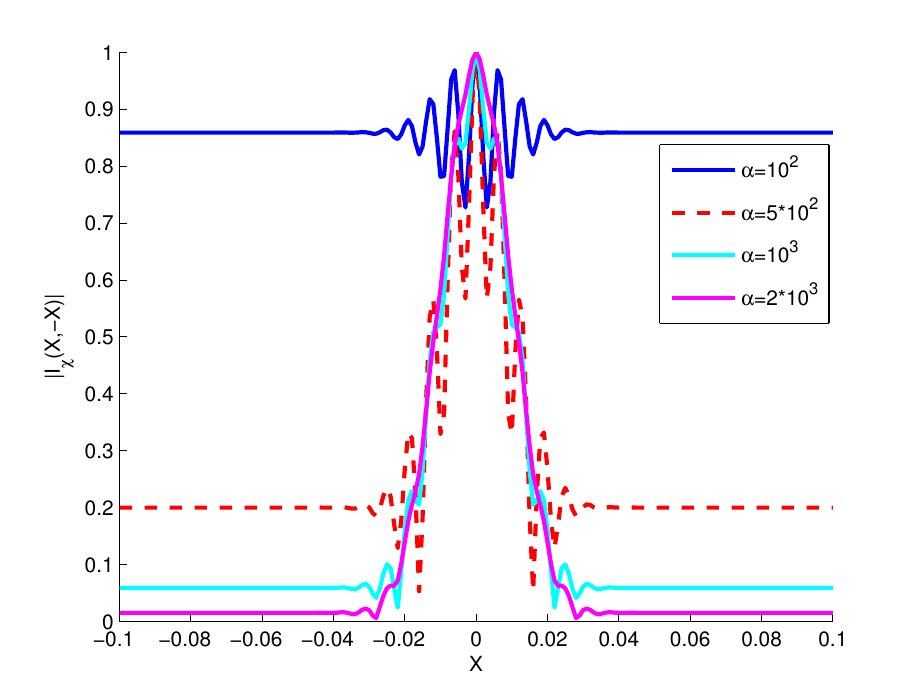}
\end{center}
\vspace{-0.3cm}
\caption{\label{decoh_effect_Dirac} 
{\footnotesize Left: Plot of $|I_\chi(X,X')|$ for $\alpha=10^3$. Right: Plot
  of $|I_\chi(X,-X)|$ for
  several values of $\alpha$.}}
\end{figure}

{In order to examine how the decoherence effect varies with the
  momentum of the light particle, in Figure
\ref{decoh_effect_vit} we plot 
$|I_{\chi}(0.05,-0.05)|$ for several values of $\alpha$ and three
different momenta $p$ of the light particle. We observe that the
larger  the momentum is, the smaller the decoherence
effect on the heavy particle is. This can be explained by the fact that
most of the light particle is transmitted when its momentum is large.}

\begin{figure}[htbp]
\begin{center}
\includegraphics[width=7.7cm]{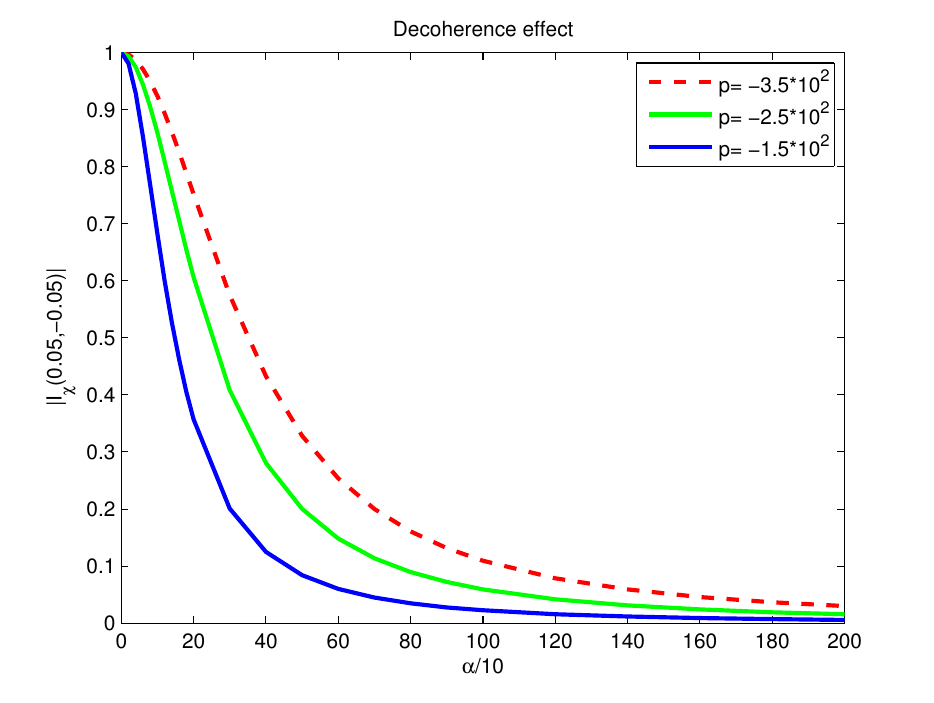}
\end{center}
\vspace{-0.3cm}
\caption{\label{decoh_effect_vit} 
{\footnotesize The quantity $|I_\chi (0.05, -0.05)|$ as a function of
  $\alpha$ for three different values of the momentum
 of the light
  particle.}} 
\end{figure}

Finally, in Figure \ref{fringes1} we display the 
probability density $\rho^{M,a}(t_*,X,X)$ associated to the state of
the heavy particle 
at the time $t_*=X_0 M 
/p_H$ of maximal overlap of the two diagonal bumps. 
The left plot in  Figure
\ref{fringes1} corresponds to a collision with a light particle
arriving from the right with mean momentum $p=-2.5*10^2$,
 for  several
potential strengths $\alpha$. One sees that the probability density
associated to the state of 
  the heavy particle splits into  a component that exhibits complete
  interference and a bump  that 
 travels with mean momentum $p_H+p>p_H$ towards the
right without experiencing interference. 
We refer to the component that displays interference as the
{\em coherent part}, while the component in which interference is
absent is referred to as the {\em decoherent part}.

\noindent
In the right plot, the light
particle has momentum $p=0$ and is located at the centre $x_l=0$.
The interference pattern exhibits a clear decoherence
  effect. In particular, notice that inside
  the pattern there are no points with zero probability. The
  corresponding plot is similar to the ones exhibited in
  \cite{Cla_Riccardo} through a direct use of the Joos-Zeh formula.
In fact, this plot too can be understood as the simultaneous presence
of a coherent and of a decoherent part, except that here,  since
the momentum of the decoherent 
part is zero, the two
components share 
the same support. 

A theoretical explanation of the appearance of the decoherent bumps is
given is Section \ref{thexpla}.

\begin{figure}[htbp]
\begin{center}
\includegraphics[width=7.7cm]{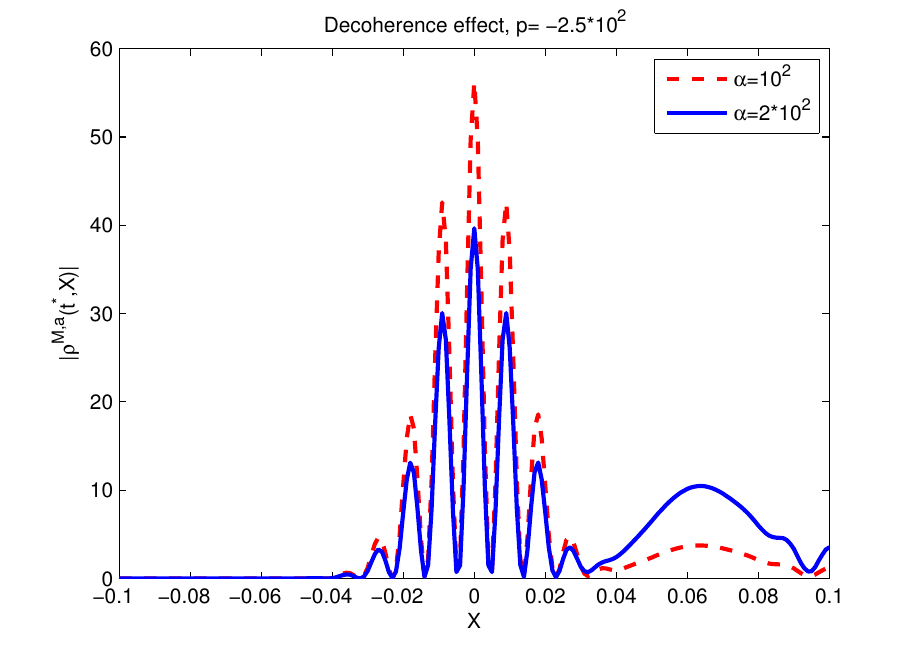}\hfill
\includegraphics[width=7.7cm]{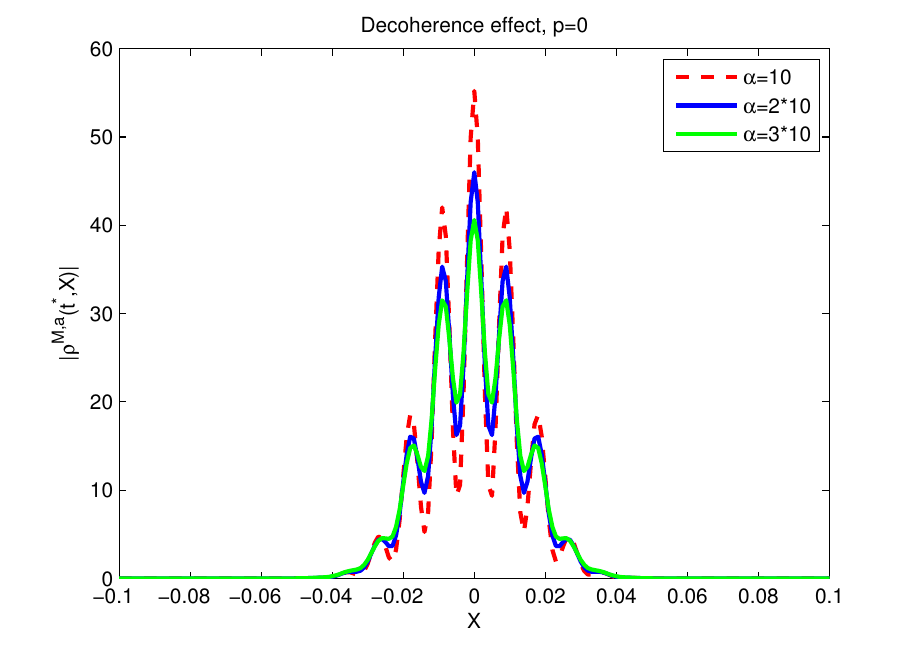}
\end{center}
\vspace{-0.3cm}
\caption{\label{fringes1} 
{\footnotesize Attenuation of the interference pattern of the heavy
  particle, in the case that the light particle comes from the left with
  $p=2.5*10^2$ (left Fig.) resp. $p=0$ (right Fig.)}} 
\end{figure}


\subsubsection{Potential barrier and Gaussian potential} \label{SEC42}
For the potential barrier
\be \label{potbar}
V(x) :=  V_0 {\indic}_{[-a,a]}\,, \quad V_0= \frac \alpha 
{2a}, \quad \alpha \in \RR^+\,, \quad a \in [10^{-4},10^{-2}],
\ee
as well as for the Gaussian potential
\be \label{potgau}
V(r):=V_0 e^{-\frac{r^2} {2 \sigma^2}}, \quad 
V_0= \frac \alpha {\sqrt{2 \pi} \sigma }, \quad \alpha \in \RR^+\,,
\quad \sigma \in [10^{-4},10^{-2}], 
\ee
we carried out computations and simulations following the line of
Section \ref{SEC41}.

For the former case, 
reflection and transmission amplitudes are given by formulas
(\ref{T})-(\ref{R}). For the latter case, we followed
the computation of the reflection and transmission amplitudes as
defined by the procedure detailed in (\ref{SCH_stat})-(\ref{TTRR}).

In both cases, the normalization constants $V_0$ have been chosen in order to guarantee 
that
\[
\int_{\RR} V(x)\, dx= \alpha, 
\]
so that the effects put in evidence
in this section can be compared with the effects carried out by the 
Dirac's delta potential $\alpha\, \delta_0$.

As far as scattering is concerned, the only consequence of the
interaction potential 
is the values of the reflection and transmission amplitudes. 
Thus, we just compare $r_k$, $t_k$ and
$I_\chi$ for the Dirac's delta, 
the potential barrier \eqref{potbar} and the Gaussian potential
\eqref{potgau}. The results are 
illustrated in Figures \ref{Coef_DBG} and \ref{Theta_DBG} for fixed
potential strength $\alpha=5*10^2$, momentum $p=-2.5*10^2$ and several
choices of
$a$ and $\sigma$. As expected, we found that the results obtained for
the potential barrier as $a \rightarrow 0$, as well as those obtained 
for the Gaussian potential as $\sigma \to 0$, approach those obtained 
using the Dirac's delta potential.

\begin{figure}[htbp]
\begin{center}
\includegraphics[width=7.7cm]{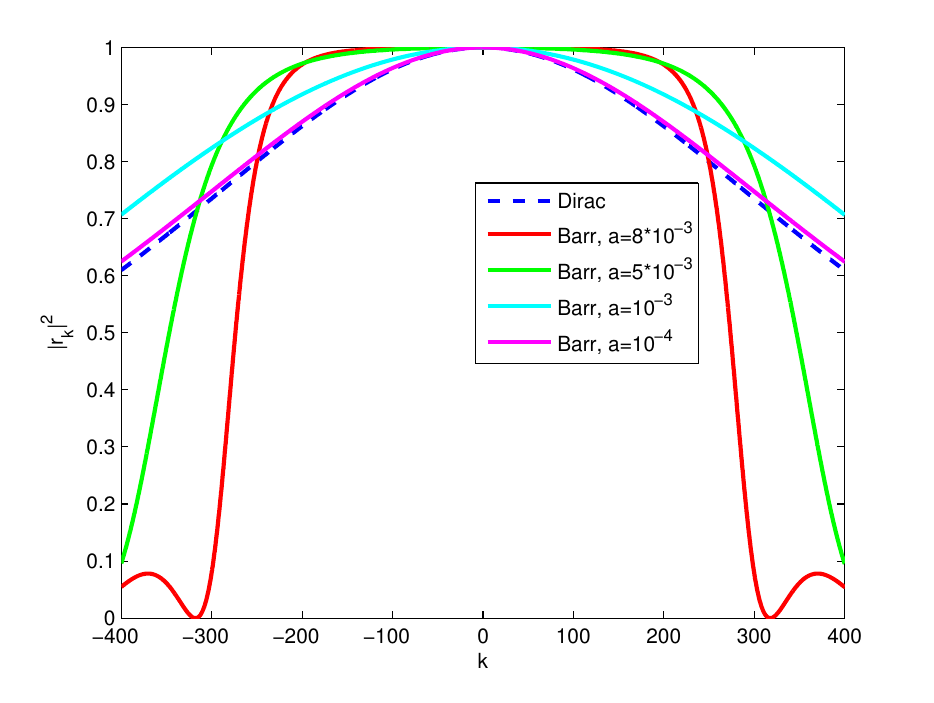}\hfill
\includegraphics[width=7.7cm]{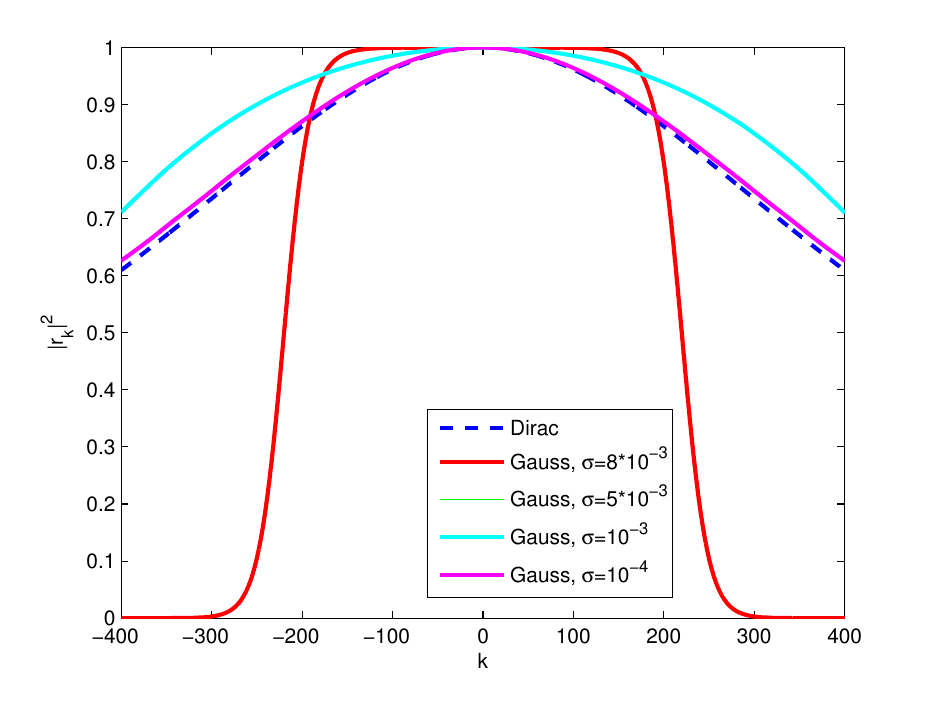}
\end{center}
\vspace{-0.3cm}
\caption{\label{Coef_DBG} 
{\footnotesize Comparison of the reflection amplitudes $|r_k|^2$
  corresponding to three different interaction potentials, with fixed
  potential strength $\alpha=5*10^2$ and various $a$ and $\sigma$
  values. Left: Dirac's delta and potential barrier. Right: Dirac's
  delta and Gaussian
  potential.}}  
\end{figure}

\begin{figure}[htbp]
\begin{center}
\includegraphics[width=7.7cm]{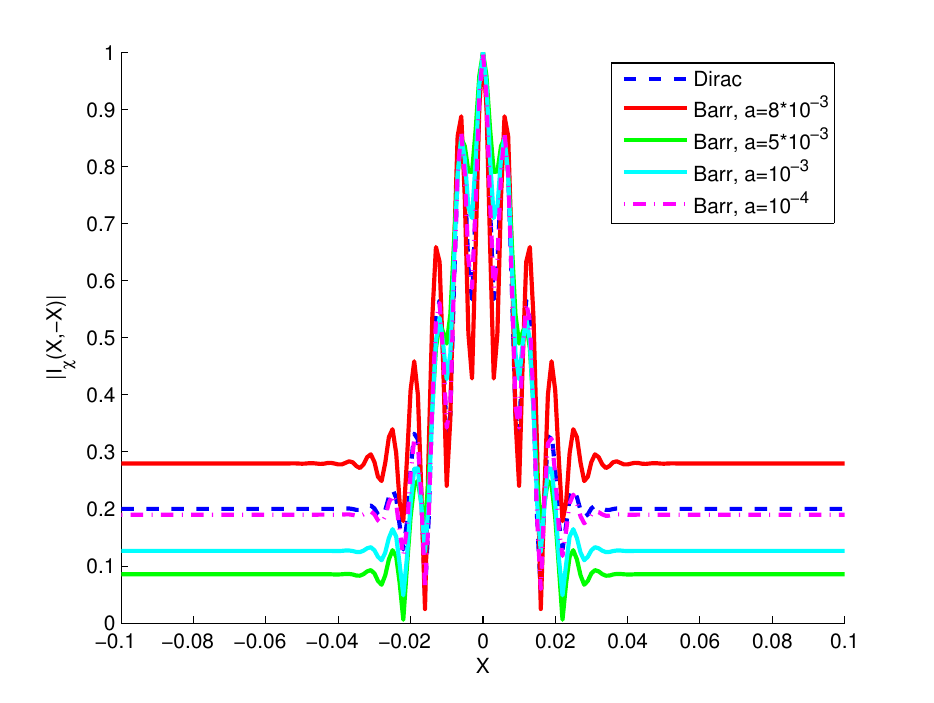}\hfill
\includegraphics[width=7.7cm]{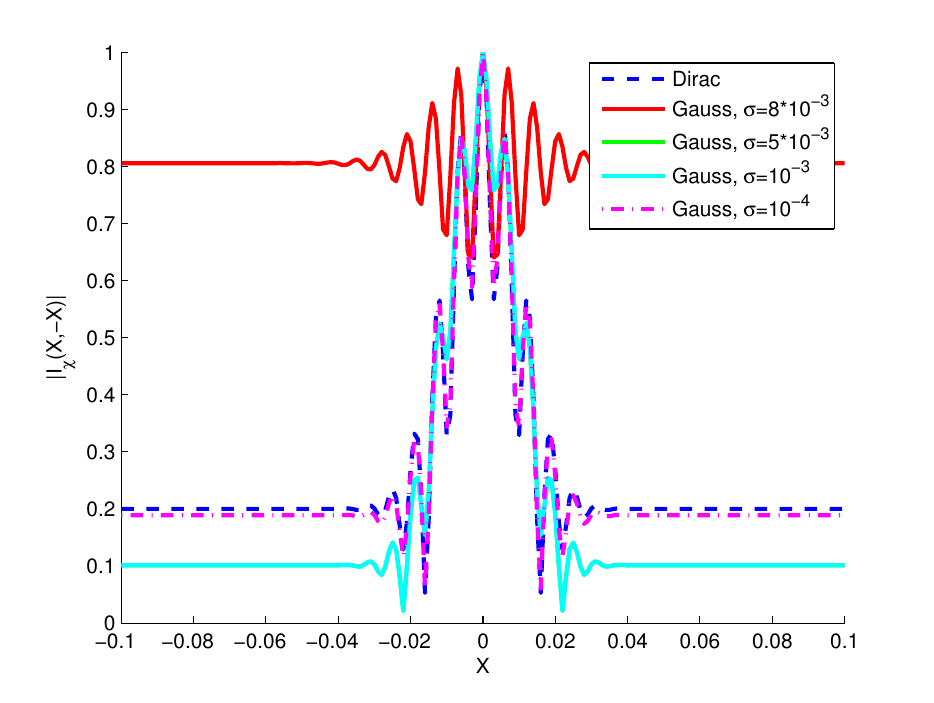}
\end{center}
\vspace{-0.3cm}
\caption{\label{Theta_DBG} 
{\footnotesize Comparison of the collision function $I_\chi(X,-X)$
  corresponding to three different interaction potentials, with fixed
  potential strength $\alpha=5*10^2$ and various $a$ and $\sigma$
  values. Left: Left: Dirac's delta and potential barrier. Right: Dirac's
  delta and Gaussian
  potential.}}  
\end{figure}
\subsubsection{Several light particles} \label{SEC43}

We suppose that many light particles are injected one-by-one into the
computation domain, in such a way that the heavy particle undergoes 
a finite sequence of collisions at times
$t_k:=4  k \, \Delta t$. At any collision, the state of the
  light particle is supposed to be the same, i.e., the $k.$th
  colliding light particle lies in the state represented by the wave
  function
$U_0 (t_k - \varepsilon^{- \gamma} ) \chi$.
Through any time interval $(t_k,t_{k+1})$ between two
collisions, the 
heavy particle evolves freely. 
The state of the heavy particle after each collision
$\rho(t_k^+)$ is then related to the state before collision $\rho(t_k^-)$
by 
$$
\rho(t_k^+) = {\mathcal I}_\chi [\rho(t_k^-)].
$$
On the left plot of Figure \ref{decoh_effect_coll} we show 
the probability density $\rho^{M,a}(t^*,X,X)$ associated to the state
of the heavy particle at the time of maximal overlap. The plot refers to 
the case of a Dirac's delta potential
with strength $\alpha=10$, momentum of the light particle $p=0$ and 
$N=1,2,3$ collisions. As expected, multiple collisions enforce the 
destruction of the interference pattern.

If one is interested in the limit of infinite incoming light
  particles, then a significant re-scaling of the potential 
should be $\alpha/\sqrt{N}$ with fixed $\alpha$: with this scaling, the decoherence 
effect
should remain of order one (see the right plot in Figure \ref{decoh_effect_coll}). A
detailed mathematical study of this effect in the case $N \rightarrow
\infty$ will be treated in a subsequent paper. 
\begin{figure}[htbp]
\begin{center}
\includegraphics[width=7.7cm]{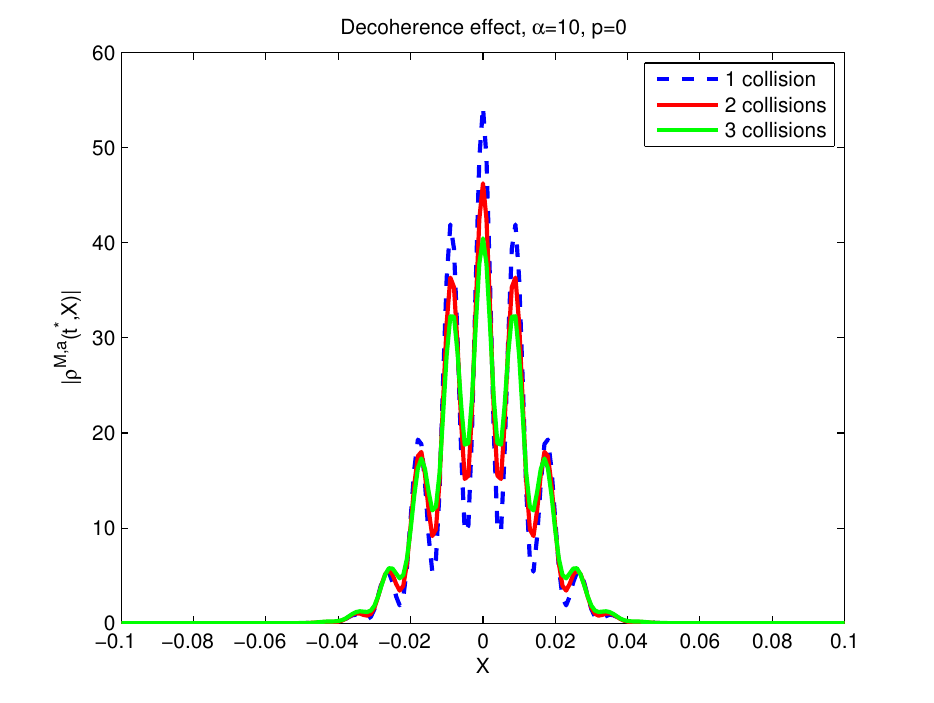}\hfill
\includegraphics[width=7.7cm]{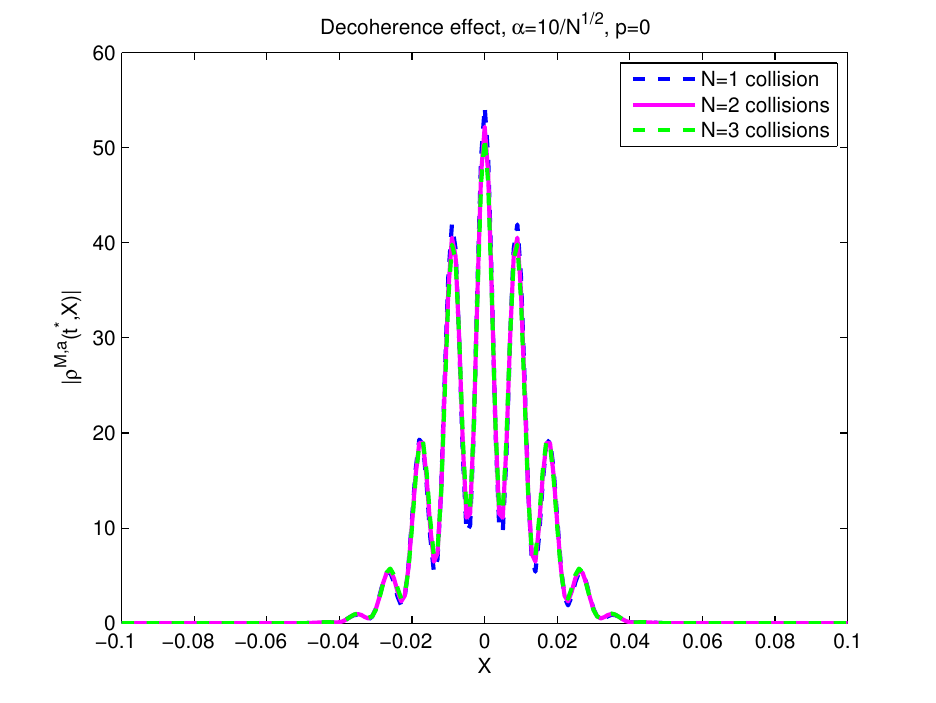}
\end{center}
\vspace{-0.3cm}
\caption{\label{decoh_effect_coll} 
{\footnotesize Attenuation of the interference pattern of the heavy
  particle in the case of several collisions. Test case: Dirac's
  delta potential, $p=0$. Left: Fixed $\alpha=10$ and several
  collisions $N=1,2,3$. Right: $N=1,2,3$ with $\alpha=10/\sqrt{N}$.}} 
\end{figure}

\subsection{Theoretical explanation} \label{thexpla}

Here we propose a theoretical explanation for the plots in Figure
\ref{fringes1}, 
described in Subsection \ref{SEC41} as the decomposition of the
probability density 
associated to the heavy particle into a coherent and a decoherent part.

To this purpose, we first assume
$\sigma p \gg 1$, which means that the light particle must travel fast
enough; as proven in Proposition \ref{prop:Theta_error}, this
assumption 
makes the function $\Gamma_\chi$, defined in \eqref{def:Gamma},
negligible. Besides, owing to this hypothesis, the normalization
constant $N$ defined in \eqref{enne} can be approximated by one.

Second, we 
suppose
$\left\| \f {d |r_k |^2} {dk} \right\|_\infty \ll \sigma$, so that the
variation of the reflection amplitude is slow, and $r_k$ can 
be always considered as equal to $r_p$.

By these assumptions one easily gets
\[
I_\chi(X,X') \approx 1 - \Theta_{\sigma,p}^{app}(X-X') =  1 - |r_p|^2 + |r_p|^2 
e^{2ip(X-X')- \frac{(X-X')^2}{2\sigma^2}},
\]
where $\Theta_{\sigma,p}^{app}$ was defined in \eqref{eq:ThetaApprox}.

Now, the main assumption states that
the following ordering holds between the spatial scales involved in the
collision: 
\begin{equation} \label{ordering}
\sigma_H \ll \sigma \ll X_0.
 \end{equation}
The physical meaning of \eqref{ordering} is
transparent: if $\sigma \ll X_0$, then the incoming light particle
can distinguish the two bumps of the heavy particle from each other; 
furthermore, if
$\sigma_H \ll \sigma$, then  any bump of the heavy particle
approximately acts on
the light particle as a pointwise scattering centre.

From Table~\ref{tab} one has that this condition
is satisfied if one replaces the symbol ``$\ll$'' by ``$<$'';
this fact suggests that the following explanation can hold also under
more relaxed hypotheses.

Thanks to \eqref{ordering}, referring to
the initial density matrix as expressed in \eqref{rhoinitapp}, 
the two diagonal terms
$\varphi_\pm(X) 
\overline{\varphi_\pm(X')}$ 
are essentially supported  in the region  $|X-X'| \ll \sigma$,
while the non-diagonal terms $\varphi_\pm(X) 
\overline{\varphi_\mp(X')}$ 
are essentially supported  in the region  $|X+X'| \ll \sigma$.
Therefore, we can further approximate $I_\chi$ in 
the two regions $\{ |X-X'| \ll \sigma \}$ and $\{ |X-X'| \gg \sigma \}$ by
\[
I_\chi(X,X') \approx \begin{cases}
                      1 - |r_p|^2 & \text{if } \; |X-X'| \gg \sigma,\\
                      1 - |r_p|^2 + |r_p|^2e^{2ip(X-X')} & 
                      \text{if } \; |X-X'| \ll \sigma.
                     \end{cases}
\]
so that, using also $N \approx 1$, one obtains $\rho^{M,a}(0,X,X') 
\approx \rho^{M,b}(0,X,X')$, where 
\begin{align} \label{eq:ApproxDensMat}
\rho^{M,b}(0,X,X') := |t_p|^2  \rho^M_0(X,X') + 
\frac{|r_p|^2}2 &e^{2i pX}\varphi_-(X) \overline{e^{2ipX'}\varphi_-(X')} \\
& + \frac{|r_p|^2}2  e^{2i pX} \varphi_+(X) \overline{e^{2i pX'}\varphi_+(X')}.
\nonumber
\end{align}
Thus, the approximated  initial state 
$ \rho^{M,b}(0)$ can be understood as the \emph{statistical mixing} of
three pure states:  
\begin{itemize}
  \item  the initial pure state 
$\rho^M(0)$ with weight $|t_p|^2$;
  \item the pure state represented by the wave 
function $e^{2ip \cdot}\varphi_-$ with weight $\frac12 |r_p|^2$;
  \item the pure state represented by the wave 
function $e^{2ip \cdot}\varphi_+$
with weight $\frac12 |r_p|^2$. 
\end{itemize}
We remark that the wave functions $ e^{2ip \cdot}\varphi_\pm $ show
the same spatial 
localization as $\varphi_\pm$, respectively, but their momentum has
increased by  
$2 p$. Therefore,
the wave 
function $e^{2ip \cdot}\varphi_-$ ($e^{2ip \cdot}\varphi_+$)
describes the heavy particle localized on 
the left (right) and accelerated by the reflection of the light one.

We are now ready to interpret Figure ~\ref{fringes1}. Let us evolve
$\rho^{M,b}$ according to the free dynamics \eqref{matrixlim}. 
At the time $t^\ast$ of maximal overlap of the two initial bumps, the
first pure state on the r.h.s. of \eqref{eq:ApproxDensMat}
shows the expected interference fringes as in 
Figure~\ref{Init}, but such fringes are damped by the factor
 $ |t_p|^2 $. This
explains the fringes in the picture on the left of 
Figure~\ref{fringes1}. We remark in particular that all the oscillations
reach the zero value, as it occurs when the heavy particle lies in a 
 pure state. At the same time, the pure states corresponding to the
 second and third terms in \eqref{eq:ApproxDensMat}
are also overlapping, since they are both accelerated by the same
quantity. But, since the r.h.s. of \eqref{eq:ApproxDensMat} is a
statistical mixture,  the two bumps will superpose classically,
 without giving rise to
interference fringes. This explains the bump on the right side of 
the left diagram in Figure~\ref{fringes1}. 

In the picture on the right of Figure~\ref{fringes1}, this
interpretation can still hold, but since in that case $p=0$, the
interference fringes of the first state are also superposed with the two
bumps created by the second and third states of the mixing, so the
decomposition ~\eqref{eq:ApproxDensMat} is not so clearly readable.

The content of the present subsection can be made rigorous by proving
that the approximation ~\eqref{eq:ApproxDensMat} holds in the
trace-class norm. This is indeed the case since we have the following
result:
\begin{thm} \label{prop:last_approx}
Let the initial state $\varphi$ of the heavy particle 
 have the form stipulated in~\eqref{initwf}-\eqref{enne},
 and let the incoming state $\chi$ of the light particle be chosen as in 
\eqref{gauschi}.
Define
the density matrix $\rho^{M,a} (0)$ 
as in \eqref{SCHRhoScatLim-rid}, 
 and let  $\rho^{M,b} (0)$ denote
the density matrix with integral kernel having the form\eqref{eq:ApproxDensMat}.

Then, the following estimate holds:
\be 
\| \rho^{M,a} (0) - \rho^{M,b} (0) \|_{\mt L^1} \leq C \left( e^{- \sigma^2 p^2} +
\f 1 \sigma \left\| \frac{d|r_k|^2} {dk}   \right\|_\infty
+ \frac {\sigma_H} \sigma + e^{- \f {2 X_0^2} {\sigma^2} }
+ e^{- \f {X_0^2} {2 \sigma_H^2} }
\right).
 \ee
\end{thm}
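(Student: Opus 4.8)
The plan is to interpolate between $\rho^{M,a}(0)=\I_\chi\rho^M(0)$ and $\rho^{M,b}(0)$ through a short telescoping chain, arranged so that each link produces exactly one of the five terms on the right-hand side. Throughout I write $\rho_0^M=|\varphi\rangle\langle\varphi|$ for the rank-one, trace-one datum \eqref{rhoinitapp}, set $\widetilde\varphi_\pm:=e^{2ip\cdot}\varphi_\pm$ (so $\|\widetilde\varphi_\pm\|_2=\|\varphi_\pm\|_2=1$), and abbreviate $w_\sigma(Y):=e^{-Y^2/(2\sigma^2)}$. Recall from \eqref{eq:IDecomp} that $I_\chi(X,X')=1-\Theta_{\sigma,p}(X-X')+i\Gamma_{\sigma,p}(X)-i\Gamma_{\sigma,p}(X')$. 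First I would discard the two $\Gamma$-terms and replace $\Theta_{\sigma,p}$ by the approximation $\Theta_{\sigma,p}^{app}$ of \eqref{eq:ThetaApprox}. Both moves are already quantified by Proposition~\ref{prop:Theta_error} applied to $\rho=\rho_0^M$: the $\Gamma$-contribution to the kernel has $\mt L^1$-norm at most $2e^{-2\sigma^2p^2}$, and the replacement of $\Theta_{\sigma,p}$ by $\Theta_{\sigma,p}^{app}$ costs at most $\sqrt{2/(\pi\sigma^2)}\,\bigl\|d|r_k|^2/dk\bigr\|_\infty$. After these two steps the multiplier acting on $\rho_0^M$ is $1-\Theta_{\sigma,p}^{app}(X-X')=|t_p|^2+|r_p|^2 e^{2ip(X-X')}w_\sigma(X-X')$.

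Expanding $\rho_0^M=N^2(\varphi_-+\varphi_+)(\overline{\varphi_-}+\overline{\varphi_+})$ and using $e^{2ip(X-X')}\varphi_a(X)\overline{\varphi_b(X')}=\widetilde\varphi_a(X)\overline{\widetilde\varphi_b(X')}$, the reflected part of the current kernel splits into same-sign dyads $w_\sigma(X-X')\widetilde\varphi_\pm(X)\overline{\widetilde\varphi_\pm(X')}$ and cross-sign dyads $w_\sigma(X-X')\widetilde\varphi_\pm(X)\overline{\widetilde\varphi_\mp(X')}$, all carrying the prefactor $N^2|r_p|^2$, while the target \eqref{eq:ApproxDensMat} keeps only the two projectors $|\widetilde\varphi_\pm\rangle\langle\widetilde\varphi_\pm|$ with coefficient $\tfrac12|r_p|^2$. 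Comparing, three discrepancies remain. On the same-sign dyads the exact kernel carries $w_\sigma$ whereas $\rho^{M,b}$ carries $1$, leaving the operator with kernel $(w_\sigma(X-X')-1)\widetilde\varphi_\pm(X)\overline{\widetilde\varphi_\pm(X')}$. Here I would reuse the mechanism of Proposition~\ref{prop:Theta_error}: writing $w_\sigma(Y)-1=\int\mu(\xi)(e^{i\xi Y}-1)\,d\xi$ with probability density $\mu(\xi)=(\sigma/\sqrt{2\pi})e^{-\sigma^2\xi^2/2}$, each frequency contributes the projector difference $|e^{i\xi\cdot}\widetilde\varphi_\pm\rangle\langle e^{i\xi\cdot}\widetilde\varphi_\pm|-|\widetilde\varphi_\pm\rangle\langle\widetilde\varphi_\pm|$, whose $\mt L^1$-norm equals $2\sqrt{1-|\langle e^{i\xi\cdot}\widetilde\varphi_\pm,\widetilde\varphi_\pm\rangle|^2}=2\sqrt{1-e^{-\sigma_H^2\xi^2}}\le 2\sigma_H|\xi|$; integrating against $\mu$ gives a bound of order $\sigma_H/\sigma$.

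The second easy discrepancy is the prefactor mismatch $N^2$ versus $\tfrac12$ multiplying the genuine projectors $|\widetilde\varphi_\pm\rangle\langle\widetilde\varphi_\pm|$. Since $\|\varphi\|_2=1$ forces $N^{-2}=2(1+\langle\varphi_-,\varphi_+\rangle)$ and a direct Gaussian integral gives $\langle\varphi_-,\varphi_+\rangle=e^{-X_0^2/(2\sigma_H^2)}e^{-2\sigma_H^2p_H^2}$, one has $|N^2-\tfrac12|\lesssim e^{-X_0^2/(2\sigma_H^2)}$, producing the last error term. (The first summand $|t_p|^2\rho_0^M$ retains the exact $N^2$, so it contributes no normalization error.)

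The main obstacle is the cross-sign dyads $C_\pm:=M_{\widetilde\varphi_\pm}W_\sigma M_{\overline{\widetilde\varphi_\mp}}$ (with $W_\sigma$ the convolution by $w_\sigma$), which are absent from $\rho^{M,b}$ and must be shown to be of order $e^{-2X_0^2/\sigma^2}$ in $\mt L^1$. The difficulty is that spatial separation is invisible to the naive splitting $W_\sigma=W_\sigma^{1/2}W_\sigma^{1/2}$, which only returns an $O(1)$ bound because each Hilbert--Schmidt factor sees a single amplitude. To expose it I would precompose with the unitary translation $\theta_{-2X_0}$, which recentres $\widetilde\varphi_-(\cdot-2X_0)$ onto $X_0$, co-locating it with $\widetilde\varphi_+$ while turning the convolution kernel into $w_\sigma(X-X'+2X_0)$; factoring $w_\sigma(X-X'+2X_0)=e^{-2X_0^2/\sigma^2}\,e^{-2X_0(X-X')/\sigma^2}\,w_\sigma(X-X')$ pulls out the scalar $e^{-2X_0^2/\sigma^2}$, absorbs the factorised linear exponential into the two co-centred amplitudes, and leaves a genuine Gaussian-convolution operator on which the Hilbert--Schmidt splitting now yields a finite answer. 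The reweighted amplitudes contribute only an exponential factor $e^{O(X_0^2\sigma_H^2/\sigma^4)}$, harmless under the ordering $\sigma_H\ll\sigma\ll X_0$ of \eqref{ordering}, so $\|C_\pm\|_{\mt L^1}\lesssim e^{-2X_0^2/\sigma^2}$. Summing the five contributions and absorbing all numerical constants into $C$ gives the stated estimate; the only delicate bookkeeping is keeping the cross-term exponent clean, which is exactly where \eqref{ordering} is needed.
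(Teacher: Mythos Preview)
Your first three moves are correct and parallel the paper: the two applications of Proposition~\ref{prop:Theta_error} handle the $\Gamma$--terms and the passage from $\Theta_{\sigma,p}$ to $\Theta_{\sigma,p}^{app}$ exactly as in the paper, and your normalisation estimate $|N^{2}-\tfrac12|\lesssim e^{-X_0^2/(2\sigma_H^2)}$ is the right form of the argument. For the \emph{same--sign} dyads your Fourier representation $w_\sigma-1=\int\mu(\xi)(e^{i\xi\cdot}-1)\,d\xi$ together with the exact formula $\|\,|u\rangle\langle u|-|v\rangle\langle v|\,\|_{\mt L^1}=2\sqrt{1-|\langle u,v\rangle|^2}$ is a clean and valid alternative to what the paper does (the paper instead uses the Rodnianski--Schlein observation that a difference of rank--one projectors has at most one positive eigenvalue, so $\|\cdot\|_{\mt L^1}\le 2\|\cdot\|_{\mt L^2}$, and then computes the Hilbert--Schmidt norm explicitly). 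Both routes give $O(\sigma_H/\sigma)$.

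The genuine gap is in your treatment of the \emph{cross--sign} dyads. After your unitary translation and the factorisation
\[
w_\sigma(X-X'+2X_0)=e^{-2X_0^2/\sigma^2}\,e^{-2X_0(X-X')/\sigma^2}\,w_\sigma(X-X'),
\]
absorbing the linear exponential into the amplitudes produces functions whose $L^2$--norms carry a factor $e^{\,O(X_0^2\sigma_H^2/\sigma^4)}$ (a moment--generating--function computation gives exactly $\|\Psi_1\|_2\|\Psi_2\|_2=e^{\,8X_0^2\sigma_H^2/\sigma^4}$). You call this ``harmless under the ordering \eqref{ordering}'', but the ordering $\sigma_H\ll\sigma\ll X_0$ does \emph{not} control the quantity $X_0\sigma_H/\sigma^{2}$: for instance $\sigma_H=\sigma^{3/2}$, $X_0=\sigma^{-1}$, $\sigma\to0$ satisfies the ordering while $X_0^2\sigma_H^2/\sigma^4\to\infty$. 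More to the point, the theorem is an unconditional inequality in the five listed quantities (the ordering only appears in the subsequent Remark to deduce smallness), so you are not allowed to invoke it inside the proof. One can partially repair your argument by noting that the net exponent is $-\tfrac{2X_0^2}{\sigma^2}\bigl(1-4\sigma_H^2/\sigma^2\bigr)$ and case--splitting on whether $\sigma_H/\sigma\le\tfrac14$, but this still leaves a weaker exponent than the stated $e^{-2X_0^2/\sigma^2}$ and, depending on how you carry out the ``Hilbert--Schmidt splitting'', an additional prefactor in $\sigma$ or $(\sigma/\sigma_H)^{1/2}$ that you would have to track.

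The paper sidesteps all of this by using the Gaussian resolution
\[
e^{-\frac{(X-X')^2}{2\sigma^2}}=\frac{\sqrt2}{\sqrt\pi\,\sigma}\int_{\RR}e^{-\frac{(X-\lambda)^2}{\sigma^2}}\,e^{-\frac{(X'-\lambda)^2}{\sigma^2}}\,d\lambda
\]
(formula \eqref{idagussian}). This writes the cross dyad as a $\lambda$--integral of rank--one operators, whose trace norms are simply products of two $L^2$--norms; these are explicit Gaussians in $\lambda$ centred at $\pm X_0$ with variance $2\sigma_H^2+\sigma^2/2$, and the $\lambda$--integral yields the clean universal bound
\[
(III.d)+(III.e)\ \le\ \frac{|r_p|^2}{\sqrt2}\,e^{-\frac{X_0^2}{2\sigma_H^2+\sigma^2/2}},
\]
with no parasitic growing factor and no appeal to the ordering. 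This is the step you should replace.
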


\begin{remark}{\em
As an immediate consequence, if $$
\sigma_H \ll \sigma \ll X_0, \quad  \left\| \frac{d|r_k|^2} {dk}
\right\|_\infty \ll \sigma, \quad \text{and} \quad \f 1 \sigma \ll p, 
$$
then the difference between $\rho^{M,a}$ and $\rho^{M,b}$ is small.}
\end{remark}

The proof of Theorem~\ref{prop:last_approx} is given in
Section \ref{proof3}.

\begin{remark} [Entanglement between the two particles] \label{Rk:entang}
{\em 
Under the same hypotheses of Theorem~\ref{prop:last_approx},
one can work out a simpler expression for the initial two-particle
wave function than the one given in Theorem \ref{thm:geneS}. First,
the assumption $\| \f {dr_k} {dk} \|_\infty \ll \sigma$ gives
\[
S \chi \approx S^{app} \chi_\ep := t_p \, \chi_\ep +  r_p \, R_0 
\chi_\ep,
\]
where $\chi_\ep = U (-\ep^{-\gamma}) \chi$, and
the reflection operator $R_0$ is defined by 
$R_0 \chi_\ep (x) := \chi_\ep (-x)$ 
(its action is invariant under Fourier transformation).
Second, by translational invariance
\begin{equation} \label{Scatapp}
S^{X,app} \chi_\ep := t_p \,\chi_\ep + r_p  \,\theta_{2X} \, R_0 
\chi_\ep,
\end{equation}
where $\theta_{2X} u = u (\cdot- 2X)$.
Thus, after some computations (that can be performed more easily in the Fourier space), 
one can replace the initial condition of the limit 
equation~\eqref{SCHlimscat} by
\begin{multline} \label{entang}
\varphi \otimes S^X \chi_\ep \approx 
\varphi \otimes S^{X,app} \chi_\ep \approx
t_p\, \varphi \otimes \chi_\ep
+ \frac{r_p}{\sqrt 2}  e^{2ipX} \varphi_- \otimes  \bigl( e^{-2i p X_0}\theta_{-2X_0} \, R_0 \chi_\ep \bigr) \\
+ \frac{r_p}{\sqrt 2}  e^{2ipX} \varphi_+ \otimes  \bigl( e^{2i p X_0} \theta_{2X_0} \, R_0 \chi_\ep \bigr).
\end{multline}
The new phase factors come from the approximation of $\theta_{2X}$ by $\theta_{2X_0}$  or 
$\theta_{-2X_0}$. Remark that the phase factor on the light particle is constant and thus not important; on the contrary, the phase factor on the heavy particle means that it is accelerated by $2p$.
The two-body wave function in~\eqref{entang} represents an entangled
state: the light particle 
is transmitted when the heavy particle remains in its initial state, it 
is reflected from $- X_0$ when the heavy particle is located at $- X_0$ (and accelerated) and 
so on. Since the three states of the light particle that appear in the previous approximation are almost orthogonal under the assumptions of Theorem~\ref{prop:last_approx}, the associated density matrix for the heavy particle
turns out to be well approximated by~\eqref{eq:ApproxDensMat}.
}

\end{remark}

\section{Proofs} \label{appa}
The present section contains the proofs of the approximation
theorems presented in Section \ref{theorem}. In particular, Section
\ref{P1} deals with Theorem~\ref{thm:geneS} and section \ref{P2} with
Theorem  \ref{thm:geneRho}. 

\subsection{Proof of Theorem~\ref{thm:geneS}} \label{P1}

We preliminarily warn the reader that part of this
  section is devoted to the proof of results that are analogous
to those contained in
  Theorem 1 in \cite{adami_3}. We include this section
  anyway, both for the sake of completeness and because the results we
  need are slightly different from the one in
  \cite{adami_3}. All proofs presented here are new.

\medskip
\paragraph{\bf The reduced variables and a useful lemma.}

Let us first introduce the centre of mass $R$ and the relative
position $r$ of the two-body problem. We define
\[
 R : = \f {X + \ep x} {1 + \ep}, \qquad r : = x - X, 
 \]
or, equivalently,
\[ 
X : = R - \f {\ep x} {1 + \ep}, \qquad x = R + \f r {1 + \ep}.
\]
The new variables naturally induce a unitary transformation 
on $L^2 (\RR^d)$, given by
\[
(\T_\ep \psi) (R, r) \ : = \ \psi \left(  R - \f {\ep x} {1 + \ep},  R +
\f r {1 + \ep} \right), \qquad 
(\T_\ep^{-1} \phi) (X,x)  \ : = \ \phi \left(  \f {X + \ep x} {1 +
  \ep},  x - X \right).
\]
The previous definition can be extended to the case $\ep = 0$. 
The following lemma compares  $\T_\ep \psi$ and $\T_0 \psi$.

\begin{lem} \label{lem:TepT0}
For any $\psi \in L^2(\RR^{2d})$ s.t.\  $(x-X) \cdot (\nabla_X +
\nabla_x) \psi \in L^2 (\RR^{2d})$, we have the following estimate
\begin{align} \label{bea}
\| \T_\ep \psi - \T_0 \psi \|_2 & \le   \ep \,  \| (x-X) \cdot (\nabla_X + \nabla_x) \psi \|_2 \
 \le \ \ep \, \| r \cdot \nabla_R \T_0 \psi \|_2\,. \nonumber
\end{align}
\end{lem}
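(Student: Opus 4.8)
The plan is to write $\T_\ep\psi-\T_0\psi$ as an explicit integral of a directional derivative of $\psi$ and then exploit that all the changes of variables involved preserve Lebesgue measure together with the relative coordinate $x-X$. The starting observation is that, for fixed $(R,r)$, the two points at which $\psi$ is evaluated,
$$
\bigl(R-\tfrac{\ep}{1+\ep}r,\; R+\tfrac{1}{1+\ep}r\bigr)\quad\text{and}\quad (R,\,R+r),
$$
differ by the \emph{same} vector $-\tfrac{\ep}{1+\ep}r$ in both the $X$-slot and the $x$-slot. Joining them by the segment $\tau\mapsto\bigl(R-\tau\tfrac{\ep}{1+\ep}r,\,R+r-\tau\tfrac{\ep}{1+\ep}r\bigr)$, $\tau\in[0,1]$, and applying the fundamental theorem of calculus, I would obtain
$$
(\T_\ep\psi-\T_0\psi)(R,r) = -\tfrac{\ep}{1+\ep}\int_0^1 r\cdot\bigl[(\nabla_X+\nabla_x)\psi\bigr]\bigl(R-\tau\tfrac{\ep}{1+\ep}r,\,R+r-\tau\tfrac{\ep}{1+\ep}r\bigr)\,d\tau.
$$

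Next I would take the $L^2_{R,r}$ norm and apply Minkowski's integral inequality to move the norm inside the $\tau$-integral. The key point is that, for each fixed $\tau$, the map $(R,r)\mapsto(X,x):=\bigl(R-\tau\tfrac{\ep}{1+\ep}r,\,R+r-\tau\tfrac{\ep}{1+\ep}r\bigr)$ has Jacobian $1$ and satisfies $x-X=r$. Hence the $L^2$ norm of the $\tau$-slice equals $\tfrac{\ep}{1+\ep}\|(x-X)\cdot(\nabla_X+\nabla_x)\psi\|_2$, independently of $\tau$. Performing the (now trivial) integral over $\tau\in[0,1]$ and using $\tfrac{\ep}{1+\ep}\le\ep$ yields the first inequality in \eqref{bea}. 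For the second inequality I would note that, since $(\T_0\psi)(R,r)=\psi(R,R+r)$, one has $\nabla_R\T_0\psi=\bigl[(\nabla_X+\nabla_x)\psi\bigr]\circ\T_0$, and the same measure-preserving change of variables (again with $x-X=r$) gives the identity $\|r\cdot\nabla_R\T_0\psi\|_2=\|(x-X)\cdot(\nabla_X+\nabla_x)\psi\|_2$; in particular the second inequality holds with equality.

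The only genuine issue is rigor at the level of $L^2$ functions: the fundamental-theorem-of-calculus step and the differentiation under the integral sign are classical for $\psi\in C_c^\infty(\RR^{2d})$, but for general $\psi$ with $(x-X)\cdot(\nabla_X+\nabla_x)\psi\in L^2$ they require justification. I would handle this by density. The operators $\T_\ep$ and $\T_0$ are unitary, hence bounded by $1$, while the right-hand side is controlled by the seminorm $\psi\mapsto\|(x-X)\cdot(\nabla_X+\nabla_x)\psi\|_2$; approximating $\psi$ by smooth compactly supported $\psi_n$ with $\psi_n\to\psi$ in $L^2$ and $(x-X)\cdot(\nabla_X+\nabla_x)\psi_n\to(x-X)\cdot(\nabla_X+\nabla_x)\psi$ in $L^2$ (standard mollification and cut-off), the estimate passes to the limit. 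This density step is the only part requiring care; the identity itself is an elementary one-variable computation.
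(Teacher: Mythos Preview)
Your argument is correct. The key geometric observation—that the two evaluation points differ by the common vector $-\tfrac{\ep}{1+\ep}r$ in both slots, and that the induced change of variables has Jacobian $1$ and preserves $x-X=r$—is exactly right, and Minkowski plus the $\tau$-independence of the slice norm gives the first inequality with the sharp constant $\tfrac{\ep}{1+\ep}\le\ep$. Your identification of the second inequality as an equality is also correct.

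The paper takes a different but equally short route. Writing $\phi=\T_0\psi$, it notices that $\T_\ep\T_0^{-1}\phi(R,r)=\phi\bigl(R-\tfrac{\ep r}{1+\ep},r\bigr)$ is a pure translation in $R$ (with $r$ fixed), then Fourier-transforms in $R$ and uses $|e^{-ik\cdot\frac{\ep r}{1+\ep}}-1|\le\tfrac{\ep}{1+\ep}|k\cdot r|$ together with Plancherel. This gives directly
\[
\|\T_\ep\psi-\T_0\psi\|_2^2=\int\bigl|e^{-ik\cdot\frac{\ep r}{1+\ep}}-1\bigr|^2|\hat\phi(k,r)|^2\,dk\,dr\le\ep^2\|r\cdot\nabla_R\phi\|_2^2.
\]
The two approaches are essentially dual: yours is the real-variable version (FTC $+$ Minkowski), the paper's is the Fourier version (multiplier bound $+$ Plancherel). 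The Fourier argument has the minor advantage of working immediately at the $L^2$ level without a density step, while your argument is slightly more transparent about why the constant is $\tfrac{\ep}{1+\ep}$ and would adapt more readily to non-$L^2$ norms.
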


\begin{proof}
 
Denoting $\phi = \T_0 \psi$ and
\[
 \hat \phi (k,r) \ = \ \f 1 {(2 \pi)^{\f d 2}} \int_{\RR^d} e^{-i k
   \cdot R} \phi (R,r) \, dR
\]
one has
\be \begin{split}
\| \T_\ep \psi - \T_0 \psi \|_2^2 & \le  
\| \T_\ep  \T_0^{-1} \phi - \phi \|^2_2  \\
& =  \int_{\RR^{2d}} \left| \phi \left(R - \f {\ep r} {1 + \ep}, r \right) - \phi (R,r)
\right|^2  dR \, dr\\
& =   \int_{\RR^{2d}} \left| e^{-i k \cdot \f {\eps r} {1 + \eps}}
  - 1  \right|^2 | \hat \phi (k,r) |^2 \, dk dr 
\\
& \leq \ep^2 \| r \cdot \nabla_R \phi \|_2^2,
\end{split} \nonumber \ee
and we get the claimed inequality using that $\nabla_R\, \phi(R,r) =
(\nabla_X + \nabla_x) \psi(R,R+r)$. 
\end{proof}

Moreover, $\T_0$ has the following property
\be \label{commut_T0S0}
\T_0 \, \widehat{\S} = [ \mathbb I \otimes S ]\,  \T_0,
\ee
which is a consequence of the definition of $\widehat{\S}$ (see
Definition~\ref{def:Sepp}). In that definition
the action of $\widehat{\S}$ includes 
the scattering of the light particle by a potential centred at the location of the
heavy particle, while in the reduced variables, the scattering takes place
in the relative position variable.
We will also use the following elementary identity, which may be proved directly.

\begin{lem} \label{lem:CommxU}
For all $j=1,\ldots,d$, 
any $\tau \in \RR$ and any $\chi \in L^2(\RR^d)$ such that $|x| \chi \in L^2(\RR^d)$
\be \label{easy} x_j U_0(\tau) \chi \ = \   U_0(\tau) \bigl[ i \tau
\partial_j \chi +  x_j \chi \bigr].
\ee
This implies in particular that 
\be \label{estim:CommxU}
\| \,|x| U_0(\tau) \chi \|_2 \ \le  \ \sqrt 2 \bigl[    \| \,
  |x| \chi \|_2 + \tau \| \nabla \chi \|_2 \bigr].
\ee
\end{lem}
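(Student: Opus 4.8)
The plan is to prove the operator identity \eqref{easy} first and then read off the norm estimate \eqref{estim:CommxU} from it. The most transparent route to \eqref{easy} is to pass to the Fourier side of \eqref{def:Fourier}: there $U_0(\tau)$ is the Fourier multiplier by $e^{-i|k|^2\tau/2}$, multiplication by $x_j$ corresponds to $i\partial_{k_j}$, and $\partial_j$ corresponds to multiplication by $ik_j$. Hence $\widehat{x_j U_0(\tau)\chi}(k) = i\partial_{k_j}\!\left(e^{-i|k|^2\tau/2}\,\widehat\chi(k)\right)$, and the Leibniz rule splits this into two pieces: one in which the $k$-derivative hits $\widehat\chi$, producing $i\partial_{k_j}\widehat\chi=\widehat{x_j\chi}$ times the symbol (i.e. $U_0(\tau)[x_j\chi]$), and one in which it hits the free symbol $e^{-i|k|^2\tau/2}$, producing a factor linear in $\tau k_j$ whose inverse transform is the transport term in \eqref{easy} (the exact constant and sign being fixed by the elementary derivative $\partial_{k_j}e^{-i|k|^2\tau/2}$). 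Collecting the two pieces gives \eqref{easy}. A self-contained physical-space variant is also available: insert the explicit kernel $U_0(\tau,x-y)$, write $x_j=(x_j-y_j)+y_j$, use that $(x_j-y_j)\,U_0(\tau,x-y)\propto \tau\,\partial_{x_j}U_0(\tau,x-y)$, and pull the $x$-derivative outside the integral (it commutes with $U_0(\tau)$), the remaining piece being $U_0(\tau)[x_j\chi]$.

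To make these formal manipulations rigorous, I would first prove \eqref{easy} for $\chi$ in the Schwartz class $\mathcal S(\RR^d)$, where every quantity lies in $L^2$, differentiation under the integral sign is legitimate, and the Leibniz rule (resp. the integration by parts) is justified without comment. The single point demanding care is the bookkeeping of the factors $i$ and $\tau$ in the transport term, i.e. the precise constant in front of $\partial_j\chi$; this is entirely fixed by $\partial_{k_j}e^{-i|k|^2\tau/2}$, equivalently by $(x_j-y_j)\,U_0(\tau,x-y)\propto\tau\,\partial_{x_j}U_0(\tau,x-y)$. Having established \eqref{easy} on $\mathcal S(\RR^d)$, I would extend it to all $\chi$ with $|x|\chi\in L^2$ and $\nabla\chi\in L^2$ by density: Schwartz functions are dense in the weighted space $\{\chi:\ \|\chi\|_2+\||x|\chi\|_2+\|\nabla\chi\|_2<\infty\}$, and both sides of \eqref{easy} depend continuously on $\chi$ in that norm (using unitarity of $U_0(\tau)$ on the left-hand side).

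The estimate \eqref{estim:CommxU} then follows immediately. Since $U_0(\tau)$ is unitary, \eqref{easy} gives $\|x_j U_0(\tau)\chi\|_2=\|\,i\tau\partial_j\chi+x_j\chi\,\|_2$ for each $j$, and the argument below is insensitive to the sign of the transport term. The elementary inequality $|a+b|^2\le 2(|a|^2+|b|^2)$, applied pointwise and integrated, yields $\|x_j U_0(\tau)\chi\|_2^2\le 2\tau^2\|\partial_j\chi\|_2^2+2\|x_j\chi\|_2^2$; summing over $j=1,\dots,d$ gives $\||x|\,U_0(\tau)\chi\|_2^2\le 2\bigl(\tau^2\|\nabla\chi\|_2^2+\||x|\chi\|_2^2\bigr)$, and $\sqrt{a^2+b^2}\le a+b$ for $a,b\ge 0$ delivers exactly \eqref{estim:CommxU}, the constant $\sqrt2$ being inherited from $|a+b|^2\le 2(|a|^2+|b|^2)$. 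The lemma is elementary and presents no genuine obstacle; the only things to stay vigilant about are the sign and constant of the transport term in \eqref{easy} and the regularity hypotheses guaranteeing finiteness of $\||x|\chi\|_2$ and $\|\nabla\chi\|_2$.
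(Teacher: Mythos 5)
Your proposal is correct, and there is nothing in the paper to compare it against: the paper states Lemma~\ref{lem:CommxU} as an ``elementary identity, which may be proved directly'' and gives no proof, so your write-up (Fourier-side Leibniz computation on Schwartz functions, extension by density, then unitarity together with $|a+b|^2\le 2(|a|^2+|b|^2)$ summed over $j$ and subadditivity of the square root) supplies precisely the direct verification the authors gesture at, and your passage from \eqref{easy} to \eqref{estim:CommxU} is the intended one.

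One point should be made explicit, and it sits exactly at the spot you flag as the single delicate step. Carrying out that derivative with the paper's conventions $U_0(\tau)=e^{-i\tau H_0}$ and \eqref{def:Fourier}, one finds
\[
\widehat{x_j U_0(\tau)\chi}(k) \;=\; i\,\partial_{k_j}\bigl(e^{-i\tau|k|^2/2}\,\widehat\chi(k)\bigr)
\;=\; e^{-i\tau|k|^2/2}\bigl[\tau k_j\,\widehat\chi(k)+\widehat{x_j\chi}(k)\bigr],
\]
and since $\tau k_j\widehat\chi = -i\tau\,\widehat{\partial_j\chi}$, this reads $x_jU_0(\tau)\chi = U_0(\tau)\bigl[-i\tau\partial_j\chi + x_j\chi\bigr]$; equivalently, $x_jU_0(\tau)=U_0(\tau)(x_j+\tau p_j)$ with $p_j=-i\partial_j$, the free Heisenberg flow. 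Your kernel variant gives the same sign, because $(x_j-y_j)\,U_0(\tau,x-y)=-i\tau\,\partial_{x_j}U_0(\tau,x-y)$. So \eqref{easy} as printed is a (harmless) sign misprint: it is the identity for $U_0(-\tau)$. This is a defect of the statement, not of your proof, and, as you correctly anticipate, the norm bound is insensitive to it; your route through $\||x|\,U_0(\tau)\chi\|_2^2\le 2\bigl(\||x|\chi\|_2^2+\tau^2\|\nabla\chi\|_2^2\bigr)$ in fact produces \eqref{estim:CommxU} with $|\tau|$, which is the form actually used when the paper invokes the lemma for $U_0(-\ep^{-\gamma})$. A last cosmetic remark on your closing density step: rather than asserting that the left-hand side of \eqref{easy} depends continuously on $\chi$ in the weighted norm (which is essentially the content of the estimate being proved), it is cleaner to note that $x_jU_0(\tau)\chi_n\to x_jU_0(\tau)\chi$ in the sense of distributions while the right-hand side converges in $L^2$, which identifies the two limits and shows the left-hand side is in $L^2$.
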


\medskip
\paragraph{\bf Step 1. Rewriting the problem in reduced variables.}

Let $\psi_\eps$ be the solution to \eqref{SCH2D}, \eqref{due} with
$M=1$.
Denoting
\be \label{def:tpsi}
\tpsi_\ep:= \T_\ep \psi_\ep,\qquad \tpsi_\ep^a:= \T_\ep \psi_\ep^a,
\ee
one has that $\tpsi_\ep$ and $\tpsi_\ep^a$ are respectively solutions to
\be \label{SCH2Dbis} \begin{cases}
i \, \partial_t \tpsi_\ep=- \frac1{2(1+\ep)} \Delta_R \tpsi_\ep +
\frac{1+\ep}\ep  \Bigl( - \frac12 \Delta_r \tpsi_\ep +  V(r)\, \tpsi_\ep \Bigr),
\\
\tpsi^\ep(0) =  \T_\ep \psi^0_\ep =  \T_\ep \bigl[ \mathbb I \otimes
U_0(-\ep^{-\gamma}) \bigr] \psi,
\end{cases}
\ee
and
\be \label{SCH2Dlimbis} \begin{cases}
i \, \partial_t \tpsi_\ep^a=- \frac1{2(1+\ep)} \Delta_R \tpsi_\ep -
\frac{1+\ep}{2
\eps} \Delta_r \tpsi_\ep, \\
\tpsi_\ep^a(0)
=  \T_\ep \S_\ep \psi = \T_\ep [\mathbb I \otimes U(-\ep^\gamma)] \widehat{\S} \psi ,
\end{cases}
\ee
where $\psi := \varphi \otimes \chi$. Notice that in problem
\eqref{SCH2Dbis} the variables $R$ and $r$ are decoupled,
therefore we can express the solution in terms 
of semigroups acting separately on $R$ and $r$, i.e.
\begin{align} \label{def2:tpsi}
\tpsi_\ep(t) & =
\left[ U_0\left({\textstyle \frac t{1+\ep}} \right) \otimes U_V
\left({\textstyle\frac {(1+\ep)t}\ep }\right) \right] \T_\ep
\bigl[ \mathbb I \otimes
U_0(-\ep^{-\gamma}) \bigr] \psi, \\
\tpsi_\ep^a(t) & =
\left[ U_0\left({\textstyle \frac t{1+\ep}} \right) \otimes U_0
\left({\textstyle\frac {(1+\ep)t}\ep }\right) \right] \T_\ep
\bigl[ \mathbb I \otimes
U_0(-\ep^{-\gamma}) \bigr] \widehat{\S} \psi  \label{def2:tpsia} .
\end{align}

\noindent
In order to estimate the distance between $\psi_\ep (t)$ and
$\psi^a_\ep (t)$ we introduce two intermediate terms
$\psi_\ep^b(t)$ and $\psi_\ep^c(t)$, defined as follows 
\begin{align} \label{def:tpsib}
\tpsi_\ep^b(t) & =
\left[ U_0\left({\textstyle \frac t{1+\ep}} \right) \otimes U_V
\left({\textstyle\frac {(1+\ep)t}\ep }\right) \right] \T_0
\bigl[ \mathbb I \otimes
U_0(-\ep^{-\gamma}) \bigr] \psi \\
\tpsi_\ep^c(t) & =
\left[ U_0\left({\textstyle \frac t{1+\ep}} \right) \otimes U_0
\left({\textstyle\frac {(1+\ep)t}\ep }\right) \right] \T_0
\bigl[ \mathbb I \otimes
U_0(-\ep^{-\gamma}) \bigr] \widehat{\S} \psi. \label{def:tpsic}
\end{align}
Then,
\be
\| \psi_\ep(t) - \psi_\ep^a(t) \|_2 
 \le \|\tpsi_\ep(t) - \tpsi_\ep^b(t) \|_2 +
\| \tpsi_\ep^b(t) - \tpsi_\ep^c(t) \|_2 +
\| \tpsi_\ep^c(t) - \tpsi_\ep^a(t) \bigr \|_2. \label{splitt}
\ee
A control of $\tpsi_\ep(t) - \tpsi_\ep^b(t)$ may be obtained thanks to 
Lemmas~\ref{lem:TepT0} and~\ref{lem:CommxU}, and the same lemmas together with 
the hypothesis (H3) allows us to control $\tpsi_\ep^c(t) - \tpsi_\ep^a(t)$. This will be explained in Step $2$.
To control the term $\tpsi_\ep^b(t) - \tpsi_\ep^c(t)$, we will use the commutation properties of $\widehat{\S}$. 
We explain in Step 3 how it leads to the term involving $C_1$ in the
estimate~\eqref{estim:4C}. 

\medskip
\paragraph{\bf Step 2. The approximation of infinitely massive particle.}
In fact, the replacement of $\T_\ep$ by $\T_0$ is equivalent to the approximation 
that the massive particle has an infinite mass, so that it does not move during 
the evolution of the light one.
Using Definition~\ref{def:tpsib}, the unitarity 
of $U_0$ and $U_V$, and Lemma~\ref{lem:TepT0}  we get 
\begin{align*}
 \| \tpsi_\ep(t) - \tpsi_\ep^b(t) \|_2 & = \| (\T_\ep - \T_0) [\mathbb
   I \otimes U_0(-\ep^\gamma)] \psi \|_2 \\
 & \le  \ep \, \| (x-X) \cdot(\nabla_X+ \nabla_x) [\mathbb I \otimes U_0(-\ep^\gamma)] \psi \|_2 \\
 & \le  \ep  
 \| (x-X)  \cdot (\nabla \varphi \otimes  U_0(-\ep^{-\gamma})  \chi 
+ \varphi \otimes  U_0(-\ep^{-\gamma})  \nabla \chi)
 \|_2, \\
\ep^{-1} \| \tpsi_\ep(t) - \tpsi_\ep^b(t) \|_2 & \le   \| \nabla
\varphi \|_2 \|  \, | \cdot |  U_0(-\ep^{-\gamma})  
\chi \|_2
+ \| X \cdot \nabla \varphi \|_2  \\ 
& \hspace{3cm }+   \|   x \cdot  U_0(-\ep^{-\gamma})
\nabla \chi \|_2  +  \|  X \varphi \|_2   \| \nabla  \chi \|_2,
\end{align*}
where we used the fact that $U_0$ commutes with derivatives and
that $\|\varphi\|_2=\|\chi\|_2=1$. 
Using Lemma~\ref{lem:CommxU}, one can get rid of the propagators $
U_0(-\ep^{-\gamma})$ in the previous  
estimate, namely
\bea
\ep^{-1} \| \tpsi_\ep(t) - \tpsi_\ep^b(t) \|_2 &\leq & \sqrt 2 
\| \nabla \varphi\|_2 \bigl( \| | \cdot|    \chi \|_2 +  \ep^{-\gamma} \|\nabla \chi 
\|_2 \bigr) + \| X \cdot \nabla \varphi \|_2  \nonumber \\
&& \hspace{1cm} + \,  \sqrt 2 \bigl( \| x \cdot    \nabla \chi \|_2 + \ep^{-\gamma} \|
\Delta \chi \|_2 \bigr) + \| | \cdot | \varphi \|_2 \| \nabla \chi
\|_2, \nonumber 
\eea
so that eventually
\begin{align} \label{estim:tpsiepb}
\| \tpsi_\ep(t) - & \tpsi_\ep^b(t)   \|_2 \leq   K_1 \ep +  K_2 \ep^{1-\gamma},   \\
\text{with} \qquad & K_1 :=  \sqrt2 \bigl( \| \nabla \varphi\|_2 \| |x|    \chi \|_2 
+ \| x \cdot    \nabla \chi \|_2  \bigr) +
\| | X | \varphi \|_2 \| \nabla \chi \|_2 +  \| X \cdot \nabla \varphi \|_2,
\nonumber \\
& K_2 := \sqrt 2  \bigl( \| \nabla \varphi\|_2  \|\nabla \chi\|_2.
+ \| \Delta \chi \|_2  \bigr)\,.  \nonumber
\end{align}

\medskip
Similarly, from Definitions~\eqref{def2:tpsia}
and~\eqref{def:tpsic} one gets
\begin{align*}
 \| \tpsi_\ep^c(t) - \tpsi_\ep^a(t) \|_2 & = \| (\T_\ep - \T_0)
    [\mathbb I \otimes U_0(-\ep^\gamma)] \widehat{\S} \psi \|_2 \\ 
 & \leq \ep \, \| r \cdot [\mathbb I \otimes U_0(-\ep^\gamma)] [\mathbb I  \otimes S]  \nabla_R \T_0  \psi \|_2,
 \end{align*}
where we have used that $U_0$ commutes with translation, the relation~\eqref{commut_T0S0}, 
and the fact that $\nabla_R$ commutes with $\mathbb I \otimes S$ and
$\mathbb I \otimes U_0(\tau)$.  
Applying 
Lemma~\ref{lem:CommxU} (integrated on $R$) to the function 
$\bar \phi := [\mathbb I \otimes U_0(-\ep^\gamma)] [\mathbb I  \otimes S]   
\nabla_R  \T_0  \psi$, we get 
\begin{align*}
2^{-1/2}\| \tpsi_\ep^c(t) - \tpsi_\ep^a(t) \|_2 & \le \ep \, 
 \| r \cdot [\mathbb I  \otimes S]   
 \T_0 (\nabla_X + \nabla_x) \psi \|_2 + \ep^{1 -\gamma}
 \|  \nabla_r [\mathbb I  \otimes S]   
 \T_0 (\nabla_X + \nabla_x) \psi \|_2,
\end{align*}
where in the last line we used that $\psi = \varphi \otimes \chi$. In
order to bound the second term in the r.h.s we can use the
conservation of the kinetic 
energy under the action of $S$ and get
\begin{align*}
\|  \nabla_r [\mathbb I  \otimes S]   
 \T_0 (\nabla_X + \nabla_x) \psi \|_2 & = 
\|  \T_0 \nabla_x (\nabla_X + \nabla_x) \psi \|_2 \\
 & \le \| \nabla \varphi \|_2 \| \nabla \chi \|_2 + \| \Delta \chi \|_2.
\end{align*}
Using the regularity assumption (H3) on the
scattering operator to bound the first term of the r.h.s. we get
\begin{align*}
\| r \cdot [\mathbb I  \otimes S]  \T_0 (\nabla_X + \nabla_x) \psi \|_2 & \le 
\| |r|  \T_0 (\nabla_X + \nabla_x) \psi \|_2 +
C_s \| \T_0 (\nabla_X + \nabla_x) \psi \|_{L^2_R(H^s_r)} \\
& \le 
\| |x-X| (\nabla_X + \nabla_x) \psi \|_2 +
C_s \bigl( \| \psi \|_{L^2_X(H^{s+1}_x)} 
+ \| \nabla_X \psi \|_{L^2_X(H^s_x)}
\bigr)\\
& \le  \| \nabla \varphi\|_2 \| |x|    \chi \|_2 +\| \,|X| \nabla \varphi \|_2  
+ \| \,|x| \nabla \chi \|_2   +  \| | X | \varphi \|_2 \| \nabla \chi \|_2 \\
& \hspace{3cm} + C_s \bigl(  \| \nabla \varphi\|_2 \| \chi\|_{H^s}+  \| \chi \|_{H^{s+1}}   \bigr),
\nonumber 
\end{align*}
where we used the fact that $\psi_0 = \varphi \otimes \chi$ is factorized.
Putting all together, we get
\begin{align} \label{estim:tpsica}
\| \psi_\ep^c(t) - & \psi_\ep^a(t) \|_2 \le K_3 \ep  + K_4 \ep^{1 -\gamma},\\ 
\text{with }&
2^{-1/2} K_3 := \| \nabla \varphi\|_2 \| |x|    \chi \|_2 +\| \,|X| \nabla \varphi \|_2  
+ \| \,|x| \nabla \chi \|_2   +  \| | X | \varphi \|_2 \| \nabla \chi \|_2  \nonumber \\
& \hspace{3cm}
+ C_s \bigl( \|\nabla \varphi \|_2 \| \chi \|_{H^s} +  \|
\chi \|_{H^{s+1}} \bigr), \nonumber \\
& 2^{-1/2}  K_4 := \| \nabla \varphi \| \| \nabla \chi \| + \| \Delta \chi \|. \nonumber
\end{align}

\paragraph{\bf Step 3. The approximation of a fast scattering.}
Now we estimate $\| \widetilde \psi_\ep^b (t) -  \widetilde \psi_\ep^c
(t) \|$. Starting from Definitions~\eqref{def:tpsib}  
and ~\eqref{def:tpsic}, using
the fact that $\T_0$ commutes with  
$ \mathbb I \otimes U_0(t)$ and the relation~\eqref{commut_T0S0}, we obtain
\begin{align*}
\tpsi_\ep^b(t) & =
\left[ U_0\left({\textstyle \frac t{1+\ep}} \right) \otimes U_V
\left({\textstyle\frac {(1+\ep)t}\ep }\right) \right] 
\bigl[ \mathbb I \otimes
U_0(-\ep^{-\gamma}) \bigr] \T_0 \psi \\
& = \left[ U_0\left({\textstyle \frac t{1+\ep}} \right) \otimes U_0
\left({\textstyle\frac {(1+\ep)t}\ep } -\ep^{-\gamma} \right) \right]  
\bigl[ \mathbb I \otimes S(\tau,\tau') \bigr] \T_0 \psi \\
\text{and} \quad \tpsi_\ep^c(t) & =
\left[ U_0\left({\textstyle \frac t{1+\ep}} \right) \otimes U_0
\left({\textstyle\frac {(1+\ep)t}\ep } -\ep^{-\gamma} \right) \right] 
 [\mathbb I \otimes S ]  \T_0 \psi,
\end{align*}
where we introduced $\tau = \ep^{-\gamma}$ and $\tau'= \frac {(1+\ep)t}\ep -\ep^{-\gamma}$.
Using the unitarity of $U_0$, we get 
\begin{align}
 \| \tpsi^b_\ep(t) -  \tpsi^c_\ep(t) \|_2 & = \| \mathbb I \otimes [ S(\tau,\tau') - S] 
 \T_0 \psi \|_2 \nonumber \\
 & = \| \varphi  [ S(\tau,\tau') - S] \chi(\cdot -X) \|_2. \label{estim:tpsibc}
\end{align}

\paragraph{\bf Conclusion.}
Putting together \eqref{splitt}, \eqref{estim:tpsiepb}, \eqref{estim:tpsica}, and
 \eqref{estim:tpsibc} the proof is complete.

\bigskip

\subsection{Proof of Theorem \ref{thm:geneRho}} \label{P2}

We preliminarily recall that the initial density operator $\rho^M (0)$ of the heavy
particle (see \eqref{SCHRhoScat})
is a compact, positive, self-adjoint
operator whose trace equals one. Thus
by the spectral theorem there exists a sequence $0 \leq \lambda_j
\leq 1$, $\sum_j \lambda_j = 1$, and a complete orthonormal set
$ | \varphi_j \rangle \in L^2 (\RR^d)$ such that 
\be \label{eq:decomprho}
\rho^M (0) \ = \ \sum_j \lambda_j   | \varphi_j \rangle \langle  \varphi_j | .
\ee

\paragraph{\bf Estimate of the difference of the two-body density
  operators  $\rho_\ep(t)$ and  $\rho_\ep^a (t)$}

We recall from Section \ref{theorem} that the two-body density
operator $\rho_\ep$ is the solution to the operator equation
\[
 i \partial_t \rho_\ep(t) = [H_\ep, \rho_\ep(t)]
 \]
with initial data
\begin{equation*} \begin{split}
\rho_\ep(0) \ & = \ \rho^M(0) \otimes |  U(-\ep^{-\gamma}) \chi \rangle
\langle U(-\ep^{-\gamma}) \chi | \ = \ \sum_{j} \lambda_j | \psi_{j,
  \eps} (0) \rangle \langle \psi_{j, \eps} (0) |, \\ &   | \psi_{j,
  \eps} (0) \rangle : = | \varphi_j \rangle | U_0 (\eps^{-\gamma})
\chi \rangle,
\end{split} \end{equation*}
where we applied the decomposition in\eqref{eq:decomprho}. Therefore,
\[
 \rho_\eps (t) = \sum_j \lambda_j | \psi_{j, \eps} (t) \rangle
\langle  \psi_{j, \eps} (t) |, 
\]
where $ \psi_{j, \eps} (t)$ is the solution to eq. \eqref{SCH2D} with
initial data $ \psi_{j, \eps} (0)$.

\noindent
Analogously, the two-body density operator $\rho^a_\ep$ is the
solution to
\[
i \partial_t \rho_\ep^a(t) = [H_\ep^f, \rho_\ep^a(t)],
\]
with initial data 
\begin{equation*} \begin{split}
\rho_\ep^a(0) & =  
\S_\ep \bigl[\rho^M(0) \otimes | \chi \rangle 
\langle \chi | \bigr]  \S_\ep^\ast = \sum_j \lambda_j | \psi_{j,
  \eps}^a (0) \rangle \langle \psi_{j, \eps}^a (0) |, \\ &    | \psi_{j,
  \eps}^a (0) \rangle : = | \varphi_j \rangle | U_0 (\eps^{-\gamma}) S^X
\chi \rangle,
\end{split} \end{equation*}
where we applied decomposition \eqref{eq:decomprho}. Then,
$$ \rho_\eps^a (t) = \sum_j \lambda_j | \psi_{j, \eps}^a (t) \rangle
\langle  \psi_{j, \eps}^a (t) |, $$
where $ \psi_{j, \eps}^a (t)$ is the solution to eq. \eqref{SCH2D} with
initial data $ \psi_{j, \eps}^a (0)$.

\noindent
Let us estimate the distance between $\rho_\ep(t)$ and
$\rho_\ep^a$. We get
\begin{align*}
\|  \rho_\ep (t) -  \rho_\ep^a (t)  \|_{\L^1} &\le \sum_j \lambda_j   \bigl\| |\psi_{j,\ep}(t) \rangle \langle  \psi_{j,\ep}(t)| - | \psi_{j,\ep}^a(t) \rangle \langle  \psi_{j,\ep}^a(t) | \bigr\|_{\L^1} \\
& \le 2 \sum_j \lambda_j   \bigl\| \psi_{j,\ep}(t) -  \psi_{j,\ep}^a(t) \bigr\|_2\,,
\end{align*}
where we have used the fact that  for any $\zeta_1, \zeta_2$ in $L^2
(\RR^{2d})$ with $\| \zeta_1 \|_2 = \| \zeta_2\|_2=1$ 
$$
\Tr \bigl|   | \zeta_1 \rangle \langle  \zeta_1
|  - | \zeta_2 \rangle \langle  \zeta_2
| \bigr| \leq \ 2 \| \zeta_1 - \zeta_2 \|_2.
$$
It remains to sum up the error bounds given by Theorem
\ref{thm:geneS}. One gets
\begin{align} 
 \nonumber \frac12 \| \rho_\ep (t) -   \rho_\ep^a (t) \|_{\L^1} 
& \leq     \sum_j \lambda_j \| e^{-itH_\ep}   | \varphi_j \rangle
|U_0 (-\ep^{-\gamma}) \chi \rangle -  U_\ep^f (t)  | \varphi_j \rangle
|U_0 (-\ep^{-\gamma}) S^{X} \chi \rangle  \| \\
& \leq    2 \sqrt 2 \| \Delta \chi \|_2
\ep^{1-\gamma} + \sqrt 2 C_s  \ep \|
\chi \|_{H^{s+1}}     \nonumber \\ 
& \hspace{2cm} +\sum_j \lambda_j \biggl[ C_{1,j} \Bigl( {\textstyle \frac{1+\ep} \ep} \,t -
\ep^{-\gamma}, \ep^{-\gamma}  \Bigr) +
 C_{2,j}  \ep+ C_{3,j}  \ep^{1-\gamma}
\bigg] , \label{pretrace}
\end{align}
where 
\bea 
C_{1,j}(\tau,\tau')  &:= &\left\| \varphi_j  \left[ S(\tau,\tau') -  S
\right]\chi(\cdot -X) \right\|_2 ,
\label{def:C1j}  \\
\label{def:C2j}
 C_{2,j} &:=& \nonumber 
2\sqrt2 \bigl( \| \nabla \varphi_j\|_2 \| \, |x|  \chi \|_2 + 
 \| X \varphi_j \|_2 \| \nabla \chi \|_2 +  \| \,|X|  \nabla \varphi_j
 \|_2 +  \|  \,|x|    \nabla \chi \|_2 \bigr)\\ 
&& \hspace{77mm} + \, C_s 
\|\nabla \varphi_j \|_2 \| \chi \|_{H^s}, \\
\label{def:C3j}
 C_{3,j} &:=&  2 \sqrt 2
\| \nabla \varphi_j\|_2  \|\nabla \chi\|_2 .
\eea
Summing up in all constants with respect to $j$ and using
Cauchy-Schwarz inequality leads to the error bounds given by Theorem
\ref{thm:geneS}. For instance, 
\begin{eqnarray*}
\sum_j \lambda_j C_{1,j} (\tau, \tau') & = & \sum_j \lambda_j
\left\| \varphi_j  \left[ S^X(\tau,\tau') -  S^X
\right] 
\chi  \right\| \\
& \leq & \left( \sum_j \lambda_j   \right)^{\f 12}  \left(
\sum_j \lambda_j   \left\| \varphi_j   \left[ S^X(\tau,\tau') -  S^X
\right]  \chi  \right\|^2  \right)^{\f 12} \\
 & = &  \left(
 \, \Tr  \left[ \,\sum_j \lambda_j  \bigl| \varphi_j   \left[ S^X(\tau,\tau') -  S^X
\right] \chi \bigr\rangle \bigl\langle  \varphi_j   \left[
     S^X(\tau,\tau') -  S^X 
\right]  \chi \bigr| \right] \right)^{\f 12}  
 \\
 & = &
2 \bigl\| \rho^M(0)  | [S(\tau,\tau') - S] \chi(\cdot -X') \rangle 
\langle  [S(\tau,\tau') - S] \chi(\cdot -X) \bigr\|_{\L^1}^{\frac12},
\end{eqnarray*}
and analogously 
\begin{align*}
 \sum_j \lambda_j \| X \cdot \nabla \varphi_j \|_2 & \le
 \left( \sum_j \lambda_j\right)^{\frac12}
 \left( \sum_j \lambda_j \| |X| \nabla \varphi_j \|_2^2 \right)^{\frac12} 
\\
  & \le \Bigl[ \Tr  \bigl( |X|  \, i\nabla \rho^M(0) \, i\nabla |X| \bigr) \Bigr]^{\frac12}.
\end{align*}
The others terms may be handled analogously.
Due to \eqref{stimatraccia} the same estimate as \eqref{pretrace} holds 
for $\| \rho_\ep^M - \rho_\ep^{M,a}\|_{\L^1}$.
It only remains to recall that $\rho_\ep^{M,a} = U_0 (t) \rho^M (0)
U_0 (-t)$ is indeed independent of
$\ep$.

\medskip
\paragraph{\bf The dynamics of the density
  operator $\rho_\ep^{M,a}$.} 

From the fact that $\rho^a_\ep$ is the solution to $i \partial_t
\rho^a_\ep := [H^f_\ep, \rho^a_\ep]$, using the notation $\bar
t(\ep) := \frac{1+\ep}\ep t$, we get
$$
\rho^a_\ep(t) = [U_0(t) \otimes U_0( \bar t(\ep))] \rho_\ep^a(0)
    [U_0(-t) \otimes U_0( -\bar t(\ep))]. 
$$
Choosing a basis $(\chi_i)_{i \in \mathbb N}$ of $L^2(\RR^d)$ one gets  by definition of the partial trace
\begin{align*}
\rho^a_\ep(t) & = \sum_i  \langle U_0( \bar t(\ep)) \chi_i |
\rho^a_\ep(t) | U_0( \bar t(\ep)) \chi_i \rangle \\ 
& = \sum_i  \langle U_0( \bar t(\ep)) \chi_i | [U_0(t) \otimes U_0(
  \bar t(\ep))] \rho_\ep^a(0) [U_0(-t) \otimes U_0( -\bar t(\ep))] |
U_0( \bar t(\ep)) \chi_i \rangle \\ 
& = \sum_i  \langle \chi_i | [U_0(t) \otimes \mathbb I] \rho_\ep^a(0) [U_0(-t) \otimes \mathbb I] | \chi_i \rangle \\
& =  U_0(t)  \Bigl[\sum_i  \langle \chi_i |\rho_\ep^a(0)  | \chi_i \rangle \Bigr] U_0(-t)\\
& = U_0(t) \rho_\ep^{M,a}(0) U_0(-t).
\end{align*}
This implies that $\rho_\ep^{M,a}$ is a solution to the free transport
equation. Then, it remains to identify the initial condition. One finds 
\begin{align*}
\rho_\ep^a(0) &:= 
 [\mathbb I \otimes U_0(-\ep^{-\gamma})] \widehat{\S} \rho^M(0) \otimes | \chi \rangle \langle \chi \widehat{\S}^\ast [\mathbb I \otimes U_0(+\ep^{-\gamma})],
\\
\rho_\ep^{M,a}(0)&= \Tr \bigl[ \widehat{\S} \bigl(\rho^M(0) \otimes | \chi \rangle \langle \chi | \bigr) \widehat{\S}^\ast  \bigr] \\
& = \Tr \bigl[\rho^M(0) \otimes | S^X \chi \rangle \langle S^{X'}\chi | \bigr].
\end{align*}
In terms of kernels, the last identity can be expressed as 
$$\rho_\ep^{M,a}(0,X,X') =
\rho^M(0,X,X') \langle S^X \chi | S^{X'} \chi \rangle
= \rho^M(0,X,X') I_\chi(X,X').
$$

\subsection{Proof of Theorem 
\ref{prop:last_approx} } \label{proof3}

First, we can cut the error into three parts
\begin{align*}
 \| \rho^{M,a} (0,X,X') - \rho^{M,b}  (0,X & ,X') \|_{{\mt L}_1} 
\leq 
\| ( i \Gamma_\chi (X) -  i \Gamma_\chi (X') ) \rho^M
 (0,X,X')\|_{{\mt L}_1} \\
& +  \| (\Theta^{app}_{\sigma, p} (X - X') - \Theta_{\chi} (X - X') )
 \rho^M (0,X,X')  \|_{{\mt L}_1} \\
& +  \| (1 - \Theta^{app}_{\sigma, p} (X - X') ) \rho^M (0,X,X')  
 - \rho^{M,b} (0,X,X') \|_{{\mt L}_1} \\ 
 &    \le  (I) +  (II) +  (III).
\end{align*}
The terms $(I)$ and $(II)$ are easily  estimated using
Proposition~\ref{prop:Theta_error}. We get
\begin{equation} \label{boundI+II}
(I)  \le 2 e^{-2 \sigma^2 p^2} \quad \text{and} \quad
(II)  \le \sqrt{\frac 2 {\pi\sigma^2}  } \, \left\| \frac {d |r_k|^2} {dk}
\right\|_\infty\,.
\end{equation}

\def\tvarphi{\widetilde \varphi}

It remains to estimate $(III)$. Denoting
$\widetilde \varphi_\pm (X) : =  e^{2ipX}  \varphi_\pm (X)$, we may separate 
$(III)$ into
\begin{align*}
(III) & =  \biggl\|  | r_p |^2  e^{2 i p (X -X') - \f 
{(X - X')^2}{2 \sigma^2}} \rho^M (0,X,X') - \f {|r_p|^2} 2 \bigl[
  \tvarphi_- (X)  {\overline{\tvarphi_- (X')}}  -   \tvarphi_+ (X)  
\overline{\tvarphi_+ (X')} \bigr] \biggr\|_{{\mt L}_1}
\\ &
\leq \f {|N^2-1|}2 \,  | r_p |^2  \left\| e^{2 i p (X -X') - \f 
{(X - X')^2}{2 \sigma^2}} ( \varphi_+ (X) + \varphi_- (X') )
( \overline{ \varphi_+ (X')} + \overline{\varphi_- (X')})
\right\|_{{\mt L}_1} \\ 
&  \hspace{10mm } + \f  {| r_p |^2} 2  \left\|   \tvarphi_+ (X)
\overline{\tvarphi_+ (X')} \left( 1 - e^{- \frac {( X - X' )^2}{2
    \sigma^2}} \right)\right\|_{{\mt L}_1}  \\
&  \hspace{10mm } + \f  {| r_p |^2} 2  \left\|  \tvarphi_- (X)
\overline{\tvarphi_- (X')} \left( 1 - e^{- \frac {( X - X' )^2}{2
    \sigma^2}} \right)
\right\|_{{\mt L}_1} \\
&  \hspace{10mm } + \f  {| r_p |^2} 2  \left\|  \tvarphi_+ (X)
\overline{\tvarphi_- (X')}  e^{- \frac {( X - X' )^2}{2
    \sigma^2}} \right\|_{{\mt L}_1}
+ \f  {| r_p |^2} 2  \left\| \tvarphi_- (X) 
\overline{\tvarphi_+ (X')}  e^{- \frac {( X - X' )^2}{2
    \sigma^2}} \right\|_{{\mt L}_1} \\
= & (III.a) + (III.b) + (III.c) + (III.d) + (III.e).
\end{align*}
To estimate $(III.a)$, we use $ |1 - N^{-2}|  \leq e^{- \f 
{X_0^2}{2 \sigma_H^2}}$, and notice that  
\[
 \left\| e^{2 i p (X -X') - \f 
{(X - X')^2}{2 \sigma^2}} ( \varphi_+ (X) + \varphi_- (X') )
( \overline{ \varphi_+ (X')} + \overline{\varphi_- (X')})
\right\|_{{\mt L}_1} = \f 2 {N^2}.
\]
Indeed, by the identity
\be \label{idagussian}
e^{- \f {(X-X')^2} {2 \sigma^2}} \ = \ \f {\sqrt 2} {\sqrt \pi \sigma}
\int_\RR e^{- \f{(X - \lambda)^2}{\sigma^2}} e^{- \f {(X' - \lambda)^2}{\sigma^2}}
\, d \lambda \ee
one immediately has 
\be \nonumber \begin{split}
 & e^{2 i p (X -X') - \f 
{(X - X')^2}{2 \sigma^2}} ( \varphi_+ (X) + \varphi_- (X) )
( \overline{ \varphi_+ (X')} + \overline{\varphi_- (X')}) \\
 = &  \f {\sqrt 2} {\sqrt \pi \sigma} \int_\RR \left(  e^{2 i p X} e^{-
   \f{(x - \lambda)^2}{\sigma^2}}  (
 \varphi_+ (X) + \varphi_- (X) )\right)   \left(  e^{-2 i p X'} e^{-
   \f{(X' - \lambda)^2}{\sigma^2}}  (
 \overline{\varphi_+ (X')} + \overline{\varphi_- (X')} )\right) \,
 d\lambda .
\end{split} \ee
Therefore the operator to be estimated is positive and its trace norm
can be computed by integrating the integral kernel on the diagonal $X
= X'$, which obtains $\f 2 {N^2}$. Summarizing, one obtains
\be \label{Ia}
(III.a) \ \leq \ e^{- \f {X_0^2}{2  \sigma_H^2}}.
\ee
Let us estimate $(III.b)$. Denoting
$\widetilde \gamma(X,X') : =  e^{- \frac {( X - X' )^2}{2
    \sigma^2}}  \tvarphi_+ (X) \overline{\tvarphi_+ (X')}$, one has
\[
(III.b) =  \f  {| r_p |^2} 2  \left\|  \tvarphi_+(X) 
\overline{\tvarphi_+(X')}  - \widetilde \gamma(X,X') \right\|_{{\mt L}_1}.
\]
Proceeding as was done for $(III.a)$, we see that $\widetilde \gamma$ is
a positive  
 operator with trace one.  To go on, we follow Remark 1.4 
in \cite{schlein}. Setting 
$A(X,X') = \tvarphi_+ (X) \overline{\tvarphi_+ (X')} -\widetilde \gamma(X,X')$, 
we see that $A$ (seen now as an operator) can have only one positive 
eigenvalue, denoted $\lambda_+$ (otherwise there would exist a space of dimension 
two where $A$ is positive, but this is impossible because $\tvarphi_+ (X) 
\overline{\tvarphi_+ (X')}$ is the kernel of a rank one projection). Since $A$ 
has zero trace, it must be $\Tr |A| = 2 \lambda_+$ and 
$\|A \|_{\mt L^1} = 2 \| A\| \le 2 \|A \|_{\mt L ^2}$, where ${\mt L}_2$ 
denotes the Hilbert-Schmidt norm and the norm without index is the usual 
operator norm. This fact allows one to bound $(III.b)$ by
\[
(III.b) \le   {| r_p |^2}   \left\|| \tvarphi_+(X) \overline{
 \tvarphi_+(X')}  - \widetilde \gamma(X,X') \right\|_{{\mt L}_2}
\]
and the  Hilbert-Schmidt norm can easily be computed as the $L^2$-norm
of the corresponding integral kernel, namely 
\begin{align*}
 \left\| | \tvarphi_+(X) \overline{
 \tvarphi_+(X')}  - \widetilde \gamma(X,X') \right\|_{{\mt L}_2}^2
 & =   \int_{\RR^2} \left| \tvarphi_+ (X)  \overline{\tvarphi_+ (X')} 
 \left( 1 - e^{- \frac {( X - X' )^2}{2
     \sigma^2}} \right) \right|^2 \, dX \, dX' \\
 & \leq   \f 1 {2 \pi \sigma_H^2} \int_{\RR^2} e^{- \f{X^2+ (X')^2}{2
     \sigma_H^2}}  \left( 1 - e^{- \f{(X-  X')^2}{2
     \sigma^2}} \right)  \, dX \, dX' \\
 & =  1 - \f 1 \pi  \int_{\RR^2} e^{- (X^2+ (X')^2)} e^{- \f {\sigma_H^2}
 {\sigma^2} (X - X')^2}  \, dX \, dX'.
\end{align*}
In the last line we rescaled variables as $X \to \sqrt 2
\sigma_H X$ and $X' \to \sqrt 2
\sigma_H X'$. Now, observe that 
\[ 
e^{- \f {\sigma_H^2}
{\sigma^2} (X - X')^2} \geq  e^{- \f {2 \sigma_H^2}
{\sigma^2} (X^2 + (X')^2)},
\]
so the integral decouples and
one obtains
\[
 \left\| | \tvarphi_+(X) \overline{
 \tvarphi_+(X')}  - \widetilde \gamma(X,X') \right\|_{{\mt L}_2}^2
 \le 1 - \f 1 \pi \left( \int e^{- \f {\sigma^2 + 2 \sigma_H^2}
  {\sigma^2} X^2} \, dX 
\right)^2  \le   2 \f {\sigma_H^2} {\sigma^2},
\]
and finally
\be \label{Ib}
(III.b) \ \leq \  \sqrt 2 \,|r_p |^2  \f {\sigma_H} \sigma.
\ee
The term $(III.c)$ may be bounded by the same quantity.

Let us focus on $(III.d)$. In order to estimate it, we make use of the identity
\eqref{idagussian} and obtain
\begin{align*}
(I.d) & =  \f {| r_p |^2}{\sigma \sqrt{2 \pi} } \left\| \int_\RR 
  \tvarphi_+ (X) e^{- \f {(X - \lambda)^2} {\sigma^2}}  e^{- \f {(X' -
      \lambda)^2} {\sigma^2}}
  \overline{\tvarphi_- (X')} \, d \lambda \right\|_{{\mt L}_1} \\
& \le  \f {| r_p |^2}{\sigma \sqrt{2 \pi} }   \int_\RR  \left\|
  \tvarphi_+ (X) e^{- \f {(X - \lambda)^2} {\sigma^2}}  e^{- \f {(X' -
      \lambda)^2} {\sigma^2}}
  \overline{\tvarphi_- (X')} \right\|_{{\mt L}_1}
\, d \lambda \\
& =   \f {| r_p |^2}{\sigma \sqrt{2 \pi} }   \int_\RR  \left\|
  \tvarphi_+ (X) e^{- \f {(X - \lambda)^2} {\sigma^2}} \right\|_2 \left\| e^{- 
\f {(X' -   \lambda)^2} {\sigma^2}}
  \overline{\tvarphi_- (X')} \right\|_{2}
\, d \lambda \,.
\end{align*}
By a direct computation,
\[
   \left\|
 e^{2ipX}
  \varphi_\pm (X) e^{- \f {(X - \lambda)^2} {\sigma^2}} \right\|_2^2  
 = \f {\sigma} {\sqrt{4 \sigma_H^2 + \sigma^2}} e^{- \f {(\lambda \mp 
X_0)^2} {2 \sigma_H^2 + \f {\sigma^2}2}}
\]
so that
\[
(III.d)  \leq  \f { |r_p|^2} {\sqrt {2 \pi} \sqrt{4 \sigma_H^2 +
      \sigma^2}}
\int_\RR   e^{- \f {(\lambda - X_0)^2}
   {4 \sigma_H^2 + \sigma^2 } }  e^{- \f {(\lambda + X_0)^2}
   {4 \sigma_H^2 +  \sigma^2 }} \, d \lambda
=   \f { |r_p|^2} {2 \sqrt 2 }  e^{- \f {X_0^2}
   {2 \sigma_H^2 + \f {\sigma^2} 2}}.
\]
Term $(III.e)$ can be estimated in the same way.
Finally, putting everything together, we get the requested bound.

\section{Appendix}

\subsection{The Dirac's delta potential in dimension one.}

Assume that the potential is a Dirac's delta
 with strength $\alpha$, i.e. $V = \alpha
\delta_0$.

The operator $- \frac12 \Delta_x + \alpha \delta_0:   D(H_\alpha) \subset L^2(\RR)
\rightarrow L^2(\RR)$ is defined on the domain
\be \label{Hdelta}
D(H_\alpha) := \{ \psi \in H^2(\RR^-) \cap H^2(\RR^+) \text{ s.t.\  }  \psi(0^+)
=\psi(0^-)  \text{ and }  \psi'(0^+) - \psi'(0^-) =  2 \alpha
\psi(0^+) \}
\ee
by the action
\[
 \left( - \frac12 \Delta_x + \alpha \delta_0 \right) \psi (x) =
-  \frac12 \psi'' (x), \qquad x \neq 0,
\]
and is a self-adjoint operator in $L^2(\RR)$. 

The propagator 
\[
U_\alpha(t) : = \exp \left[ -i t
\left( - \frac12 \Delta_x + \alpha \delta_0 \right) \right]
\]
is 
explicitly known (\cite{schulman,ABD95}). In order to express it,  we
shall use the following operators:
\bean
\text{the symmetry operator } &&  \mt R \chi := \frac12 [\chi +
\chi(-\cdot)] \\
\text{the projection on positive  positions} &&  \mt P_x^+ \chi :=
\indic_{\RR^+} \, \chi \\
\text{the projection on positive momenta} &&  \mt P_k^+ \chi  := \mt F^{-1}
\left(\indic_{\RR^+} \, \widehat \chi \right)\\
\text{the translation by } u  &&  \theta_u \chi  := \chi (\cdot - u)
\eean
where $\mt F^{-1}$ denotes the inverse Fourier transform. The projection on
negative positions $\mt P_x^-$ and on negative momenta $\mt P_k^-$ are
defined in a similar way. Remark that all these operators on
$L^2(\RR)$ have norm 
equal to $1$, and that $\mt R$, $\mt P_k^+$ and $\mt P_k^-$ and
$\theta_u$ commute 
with the free evolution group $U_0(\tau)$.

\begin{prop} \label{prop:PropOp}
Given $\alpha > 0$, for any $t \in \RR \backslash \{0 \}$ the propagator
$U_\alpha (t)$ 
can be expressed as 
\be \label{eq:PropFreeOp}
U_\alpha(t) = U_0(t) - 4 \alpha \, \mt R \, \mt P_x^+ \left( \int_0^{+\infty} 
e^{-\alpha u} \theta_{-u} \, du \right)\, U_0(t) \, \mt P_x^- \, \mt
R.
\ee
\end{prop}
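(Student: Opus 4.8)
The plan is to verify directly that the operator on the right-hand side of \eqref{eq:PropFreeOp}, call it $W(t)$, solves the Cauchy problem generated by $H_\alpha$ on a dense domain, and then to invoke Stone's theorem: since $H_\alpha$ is self-adjoint, its Schrödinger flow is the \emph{unique} strongly continuous unitary solution, so $W(t)=U_\alpha(t)$. Everything rests on one algebraic observation about the integral operator $B:=\int_0^{+\infty} e^{-\alpha u}\,\theta_{-u}\,du$. Since $\theta_{-u}f(x)=f(x+u)=e^{u\partial_x}f(x)$, one has $B=(\alpha-\partial_x)^{-1}$, i.e. $B$ is the Fourier multiplier $(\alpha-ik)^{-1}$. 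In particular $B$ is bounded on $L^2(\RR)$ with norm $\le 1/\alpha$, it commutes with $U_0(t)$ and with $\partial_x^2$, and it satisfies the first-order identity $\partial_x B=\alpha B-\mathbb I$, that is $(Bf)'(x)=\alpha(Bf)(x)-f(x)$. This identity is what converts the nonlocal operator into the boundary data of the delta interaction.

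First I would check the initial condition and the interior equation. Because $(Bf)(x)=\int_0^{+\infty}e^{-\alpha u}f(x+u)\,du$, if $f$ is supported in $\{x<0\}$ then $(Bf)(x)=0$ for $x>0$, hence $\mt P_x^+\,B\,\mt P_x^-=0$; letting $t\to 0^+$ and using strong continuity of the bounded factors, the correction term in $W(t)$ vanishes, so $W(0^+)=\mathbb I$. For the interior equation, fix $\chi$ Schwartz and set $g(t):=U_0(t)\,\mt P_x^-\mt R\,\chi$, which solves the free equation and is smooth for $t>0$. Unwinding the definitions of $\mt R$ and $\mt P_x^+$, the correction is the \emph{even} function equal to $-2\alpha\,(Bg(t))(x)$ for $x>0$; since $B$ commutes with $U_0(t)$, the function $Bg(t)$ solves the free Schrödinger equation on all of $\RR$, so $W(t)\chi$ solves $i\partial_t\psi=-\tfrac12\partial_x^2\psi$ for every $x\neq 0$.

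The heart of the matter is the behaviour at $x=0$, where the two conditions defining the domain \eqref{Hdelta} must be reproduced. Continuity $\psi(0^+)=\psi(0^-)$ is immediate, since $U_0(t)\chi$ is smooth and the correction is even and continuous. For the derivative jump only the even correction contributes, and the cusp produced by its symmetric extension gives $\partial_x(W(t)\chi)(0^+)-\partial_x(W(t)\chi)(0^-)=-4\alpha\,(Bg)'(0)$. Feeding in $(Bg)'(0)=\alpha(Bg)(0)-g(0)$ from the first-order identity, the required relation $\partial_x\psi(0^+)-\partial_x\psi(0^-)=2\alpha\,\psi(0)$ collapses to the single scalar identity $g(t,0)=\tfrac12\,(U_0(t)\chi)(0)$. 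I would prove this from the evenness of the free kernel: writing $(U_0(t)f)(0)=\int_\RR U_0(t,y)f(y)\,dy$ with $f=\mt P_x^-\mt R\,\chi$ and substituting $y\mapsto-y$ in the reflected piece recombines the two half-line integrals into $\tfrac12\int_\RR U_0(t,y)\chi(y)\,dy$, as needed.

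The hard part will be this boundary analysis at the origin, together with the attendant regularity bookkeeping: one must ensure that for $t>0$ the functions $U_0(t)\chi$ and $Bg(t)$ are genuinely smooth away from $0$, so that the one-sided limits of $\psi$ and $\psi'$ exist and $W(t)\chi\in H^2(\RR^-)\cap H^2(\RR^+)$, and that the jump in the derivative originates solely from the symmetric extension built into the $\mt R\,\mt P_x^+\cdots\mt P_x^-\mt R$ sandwich and not from any singularity of $B$ itself. Granting these points, $W(t)\chi$ lies in the domain \eqref{Hdelta} and solves the Schrödinger equation for all Schwartz $\chi$; density of such $\chi$ and the uniform boundedness of $W(t)$ (each factor, including $B$, being bounded) then extend the equality $W(t)=U_\alpha(t)$ to all of $L^2(\RR)$, which completes the proof.
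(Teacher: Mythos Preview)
Your argument is correct, and it is a genuinely different route from the paper's. The paper does not verify the Cauchy problem at all: it simply quotes the explicit integral kernel
\[
U_\alpha(t,x,x') = U_0(t,x-x') - \alpha\int_0^\infty e^{-\alpha u}\,U_0(t,u+|x|+|x'|)\,du
\]
from \cite{schulman,ABD95} and then rewrites the second term operator-theoretically, observing that $U_0(t,u+|x|+|x'|)$ acting on $\chi$ can be reorganised as $4\,\mt R\,\mt P_x^+\,\theta_{-u}\,U_0(t)\,\mt P_x^-\,\mt R\,\chi$. That is a four-line symmetrisation exercise: use evenness in $x'$ to insert $\mt R$, restrict to $x'<0$ to turn $|x'|$ into $-x'$, and then repeat in $x$.

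Your approach, by contrast, never touches the kernel formula. You recognise $B=\int_0^\infty e^{-\alpha u}\theta_{-u}\,du=(\alpha-\partial_x)^{-1}$, use the resulting first-order identity $(Bg)'=\alpha Bg-g$ to generate the derivative jump at the origin, and close the loop with the evenness identity $g(t,0)=\tfrac12(U_0(t)\chi)(0)$. This is more work but more self-contained: it \emph{explains} why the particular combination $\mt R\,\mt P_x^+\,B\,U_0(t)\,\mt P_x^-\,\mt R$ manufactures exactly the $\delta$-interaction boundary condition, rather than inheriting the formula from an external reference. The trade-off is the regularity bookkeeping you flag: for generic Schwartz $\chi$ the function $\mt P_x^-\mt R\chi$ has a jump at $0$, so $g(t)$ is only in $H^{1/2-}(\RR)$ globally, and one must either argue that $Bg(t)$ is nonetheless $H^2$ on each half-line (it is, by a stationary-phase decay argument on $\RR^+$) or, more simply, restrict to the dense set of Schwartz $\chi$ vanishing near $0$, where $\mt P_x^-\mt R\chi$ is itself Schwartz and every step is clean. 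Either way the uniqueness step (norm conservation for the difference $W(t)\chi-U_\alpha(t)\chi$, then density and uniform boundedness of $W(t)$) goes through as you describe.
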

\begin{proof}[Proof of Proposition~\ref{prop:PropOp}]
From \cite{schulman, ABD95} we know that $U_\alpha (t)$ is the
integral operator defined by the kernel
\be \label{def:propDirac}
U_\alpha (t,x,x') 
:= U_0(t,x-x') - \alpha \int
_0^\infty e^{-\alpha u} U_0(t,u + |x| + |x'|) \,  du,
\ee
where $U_0$ denotes the propagation kernel of the free \sch equation
\be \label{def:propfree}
U_0(t,y) := \frac1{\sqrt{2i \pi t}} e^{\frac i {2t} y^2}.
\ee
In order to obtain \eqref{eq:PropFreeOp} it suffices to notice
\bean
\int_\RR U_0(t,u + |x| + |x'|) \chi(x') \,dx 
&=& \int_\RR  U_0(t,u + |x| + |x'|) \, \mt R \,
\chi(x') \,dx'  \\
&=&2  \int_\RR U_0(t,u + |x| - x') \,
\indic_{\RR^-}(x') \, \mt R \, \chi(x') \,dx'  \\
&=&4   \mt R \, \indic_{\RR^+}(x) \int_\RR U_0(t,u +
x - x') \, \mt P_x^- \, \mt R \, \chi(x') \,dx' \\
&=&4   \mt R \, \mt P_x^+   \theta_{-u}
\, U_0(t) \, \mt P_x^- \, \mt R \, \chi (x).
\eean
and the proof is complete.
\end{proof}

The following proposition gives the convergence rate to the scattering
operator needed in order to obtain Corollary~\ref{cor:Dirac1D} from
Theorem~\ref{thm:geneS} and~\ref{thm:geneRho}.

\begin{prop}  \label{prop:ScatDirac}
If $V = \alpha \delta_0$, with $\alpha > 0$, then 
the scattering operator $S_\alpha$ for the
Hamiltonian
$-\f 1 2 \Delta + V$ is well defined and
given by
\be \label{eq:OpScatDelta}
S_\alpha = Id - 4 \,\alpha \,   \mt R  \, \mt P_k^+ \left( \int_0^{+\infty} 
e^{-\alpha u} \theta_{-u} \, du \right)  \mt R.
\ee
The associated scattering matrix is defined as usual
by the following transmission
and reflection amplitudes
\be 
r_k = - \frac{\alpha}{\alpha - i |k|}, \qquad t_k =
- \frac{i|k|}{\alpha - i
|k|}.
\ee
Moreover, there exists a constant $C_2$
such that, for any $\chi \in L^2 (\RR)$  satisfying $ \langle \cdot
\rangle^2 \chi \in L^2 (\RR)$, one has
\be \label{eq:RateScatDirac}
\left\| [S_\alpha(\tau,\tau') - S_\alpha ] \chi\right\|_2 \leq  C_2 \left( 3 \, \|
\la x \ra^2
\chi\|_2 + \frac2{\alpha^2} \right)
\min(\tau,\tau')^{-\frac14}.
\ee
\end{prop}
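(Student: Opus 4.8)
The plan is to take the explicit propagator of Proposition~\ref{prop:PropOp} as the starting point and to read off both the operator $S_\alpha$ and the convergence rate from it, using only the commutation properties recalled before the statement together with one quantitative dispersive estimate. Abbreviate $T_\alpha := \int_0^{+\infty} e^{-\alpha u}\theta_{-u}\,du$, which, like $\mt R$, $\theta_u$, $\mt P_k^\pm$, commutes with $U_0$. Inserting \eqref{eq:PropFreeOp} into $S_\alpha(t,t') = U_0(-t')U_\alpha(t+t')U_0(-t)$, the leading term is $U_0(-t')U_0(t+t')U_0(-t)=Id$; in the correction I pull the two $\mt R$'s out, split $U_0(t+t')=U_0(t')U_0(t)$ and move $T_\alpha$ across $U_0(t')$, reaching
\[
S_\alpha(t,t') = Id - 4\alpha\,\mt R\,\big[U_0(-t')\mt P_x^+ U_0(t')\big]\,T_\alpha\,\big[U_0(t)\mt P_x^- U_0(-t)\big]\,\mt R .
\]
Invoking the standard strong limits $s\text{-}\lim_{s\to+\infty}U_0(-s)\mt P_x^\pm U_0(s)=\mt P_k^\pm$ and $s\text{-}\lim_{s\to-\infty}U_0(-s)\mt P_x^- U_0(s)=\mt P_k^+$ yields $S_\alpha=Id-4\alpha\,\mt R\,\mt P_k^+ T_\alpha\mt P_k^+\,\mt R$, and since $T_\alpha$ is the Fourier multiplier $(\alpha-ik)^{-1}$ it commutes with $\mt P_k^+$, so $\mt P_k^+ T_\alpha\mt P_k^+=\mt P_k^+ T_\alpha$, which is \eqref{eq:OpScatDelta}. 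The amplitudes then follow by a short Fourier computation: $\mt R$ acts by symmetrisation $\tfrac12(\widehat\chi(k)+\widehat\chi(-k))$, $\mt P_k^+$ by $\indic_{k>0}$, and $T_\alpha$ by $(\alpha-ik)^{-1}$; assembling these and comparing with \eqref{cohen} gives exactly $t_k=-i|k|/(\alpha-i|k|)$ and $r_k=-\alpha/(\alpha-i|k|)$.

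\emph{Reduction of the rate.} Writing $A(t'):=U_0(-t')\mt P_x^+ U_0(t')$ and $C(t):=U_0(t)\mt P_x^- U_0(-t)$, the display gives $S_\alpha(\tau,\tau')-S_\alpha=-4\alpha\,\mt R\,\big(A(\tau')T_\alpha C(\tau)-\mt P_k^+ T_\alpha\mt P_k^+\big)\mt R$. I would use the decomposition
\[
A(\tau')T_\alpha C(\tau)-\mt P_k^+ T_\alpha\mt P_k^+ = A(\tau')T_\alpha\big[C(\tau)-\mt P_k^+\big] + \big[A(\tau')-\mt P_k^+\big]T_\alpha\mt P_k^+ ,
\]
chosen precisely so that in each summand the weighted factor on which the error acts is a \emph{fixed} function ($\mt R\chi$, resp.\ $T_\alpha\mt R\chi$), never the time-dependent image $C(\tau)\mt R\chi$, whose $\la x\ra^2$-weight would otherwise grow in $\tau$. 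Since $A(\tau')-\mt P_k^+=U_0(-\tau')(\mt P_x^+-\mt P_k^+)U_0(\tau')$ (and analogously for $C$), and since $\mt P_k^+\mt P_k^-=0$ annihilates one of the two pieces of $\mt P_x^\pm-\mt P_k^+=\mt P_x^\pm\mt P_k^\mp-\mt P_x^\mp\mt P_k^\pm$, each term collapses to a single quantity $\|\mt P_x^\mp U_0(\pm s)\,\mt P_k^\pm g\|_2$, with $g=\mt R\chi$ at time $s=\tau$ for the first term and $g=T_\alpha\mt R\chi$ at time $s=\tau'$ for the second. Using $\|T_\alpha\|\le\alpha^{-1}$ to absorb the prefactor $4\alpha$, the two contributions are bounded by $\tau^{-1/4}$ and $(\tau')^{-1/4}$, whose sum is $\le 2\min(\tau,\tau')^{-1/4}$.

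\emph{The dispersive lemma, which is the crux.} Everything reduces to
\[
\big\|\mt P_x^- U_0(t)\,\mt P_k^+ h\big\|_2 \le C\,t^{-1/4}\,\big\|\la x\ra^2 h\big\|_2, \qquad t>0,
\]
together with its reflections in $x\mapsto -x$ and $t\mapsto -t$. I would prove it from $U_0(t)g(x)=(2\pi)^{-1/2}\int_0^{+\infty}e^{i(kx-tk^2/2)}\widehat g(k)\,dk$ with $\widehat g=\indic_{k>0}\widehat h$: on $\{|x|\le t^{1/2}\}$ van der Corput (the phase has $\partial_k^2=-t$) gives $|U_0(t)g(x)|\le C t^{-1/2}(\|\widehat g\|_\infty+\|\partial_k\widehat g\|_{L^1})$ uniformly in $x$, contributing $t^{1/2}\cdot t^{-1}=t^{-1/2}$ to the squared norm; on $\{|x|>t^{1/2}\}$ there is no stationary point (for $x<0$, $k>0$ one has $|\partial_k\phi|=|x|+tk\ge|x|$), so one integration by parts yields the $t$-uniform bound $|U_0(t)g(x)|\le C|x|^{-1}(|\widehat g(0)|+\|\widehat g\|_\infty+\|\partial_k\widehat g\|_{L^1})$, contributing $\int_{|x|>t^{1/2}}x^{-2}dx=t^{-1/2}$; summing and taking the square root gives $t^{-1/4}$. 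The essential point I want to stress is that the only constants entering are $\|\widehat g\|_\infty$, $\|\partial_k\widehat g\|_{L^1}$ and $|\widehat g(0)|$ \emph{on the open half-line}, so the sharp cutoff in $g=\mt P_k^+ h$ is harmless: these are controlled by $\|\widehat h\|_\infty$, $\|\partial_k\widehat h\|_{L^1}$, $|\widehat h(0)|$, hence by $\|\la x\ra^2 h\|_2$, even though $\la x\ra^2\mt P_k^+$ is itself \emph{not} a bounded operation. Finally I insert $h=\mt R\chi$ (with $\|\la x\ra^2\mt R\chi\|_2\le\|\la x\ra^2\chi\|_2$) and $h=T_\alpha\mt R\chi$; for the latter, differentiating $(\alpha-ik)^{-1}$ twice gives $\|\la x\ra^2 T_\alpha\mt R\chi\|_2\lesssim \alpha^{-1}\|\la x\ra^2\chi\|_2+\alpha^{-2}\|\la x\ra\chi\|_2+\alpha^{-3}$, and after the factor $4\alpha$ and $\|\chi\|_2=1$ these assemble into $C_2\big(3\|\la x\ra^2\chi\|_2+2\alpha^{-2}\big)\min(\tau,\tau')^{-1/4}$.

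I expect this last step to be the main obstacle: extracting the exact $t^{-1/4}$ rate requires the careful balancing of the van der Corput and the integration-by-parts regimes at $|x|\sim t^{1/2}$, and the bookkeeping that shows the sharp momentum projection $\mt P_k^+$ does not spoil the weighted estimates (and that produces the clean $\alpha^{-2}$ constant) is the most delicate part of the argument.
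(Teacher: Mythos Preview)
Your global strategy---start from the explicit propagator~\eqref{eq:PropFreeOp}, pull the free evolutions through, identify the strong limits $U_0(\mp s)\mt P_x^\pm U_0(\pm s)\to\mt P_k^\pm$, and reduce the rate to a single dispersive estimate of the form $\|\mt P_x^- U_0(t)\mt P_k^+ h\|_2\lesssim t^{-1/4}$---is exactly the paper's. Your error decomposition is a minor rearrangement of the paper's $\eta_1,\eta_2$, and your treatment of the weighted norm of $T_\alpha\mt R\chi$ parallels the paper's physical-space bound on $\chi_\alpha$. (One small slip: in the first error term $[C(\tau)-\mt P_k^+]\mt R\chi$ there is no $\mt P_k^+$ on the right to kill a piece, so you get \emph{two} terms of the form $\mt P_x^\mp U_0(-\tau)\mt P_k^\pm\mt R\chi$, not one; this is harmless.)

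The genuine gap is in your proof of the dispersive lemma. Both your van der Corput bound on $\{|x|\le t^{1/2}\}$ and your single integration by parts on $\{|x|>t^{1/2}\}$ produce the constant $\|\partial_k\widehat h\|_{L^1}$, and you then assert that this is controlled by $\|\la x\ra^2 h\|_2$. This is false: $\la x\ra^2 h\in L^2$ is equivalent to $\widehat h\in H^2_k$, and $H^2(\RR)$ does \emph{not} embed into $W^{1,1}(\RR)$. For a concrete obstruction, take $\widehat h$ with $\partial_k\widehat h(k)=\sin k/(1+|k|)$ and $\widehat h(k)=-\int_k^\infty \sin s/(1+s)\,ds$ for $k>0$ (extended oddly): one checks $\widehat h=O(1/k)\in L^2$, $\partial_k\widehat h\in L^2$, $\partial_k^2\widehat h\in L^2$, so $\widehat h\in H^2$, yet $\int|\partial_k\widehat h|=\int|\sin k|/(1+|k|)\,dk=+\infty$. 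Thus the constants your argument produces are not finite under the stated hypothesis, and the $t^{-1/4}$ estimate does not follow from your splitting as written. You correctly anticipated that this lemma is the crux, but the obstacle is not the sharp cutoff $\mt P_k^+$---it is that the $L^1_k$-norm of $\partial_k\widehat h$ simply is not available.

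The paper's Lemma~\ref{lem:dispfree} circumvents this by splitting in \emph{frequency} rather than in space: a smooth cutoff $g_\eta$ at scale $\eta$ separates $\chi=\chi_l+\chi_h$; the low-frequency part is estimated by $\|\mt P_k^+\chi_l\|_2\le\sqrt{2\eta}\,\|\widehat\chi\|_\infty\lesssim\sqrt\eta\,\|\la x\ra\chi\|_2$ (Gagliardo--Nirenberg), and on the high-frequency part \emph{two} integrations by parts against the nonstationary phase ($|k-x/\tau|\ge\eta$ for $x<0$) give a pointwise bound whose constants involve only $\|\partial_k^{n_1}\widehat\chi\|_\infty$ for $n_1\le1$ and $\|\partial_k^2\widehat\chi\|_2$---all controlled by $\|\la x\ra^2\chi\|_2$. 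Optimising $\eta=\tau^{-1/2}$ then yields $\tau^{-1/4}$. If you want to keep a spatial split, you would need either a second integration by parts in the region $|x|>t^{1/2}$ or an additional frequency truncation to avoid the $L^1$-norm of $\partial_k\widehat h$.
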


The key ingredient of the proof is the stationary phase
estimate~\eqref{estim:dispfree2}, see Lemma~\ref{lem:dispfree}, which
is proven  in the next section. 

\begin{proof}[Proof of Proposition~\ref{prop:ScatDirac}] \

{\sl Step 1. Identifying the limit. }
We need to study the limit of $U_0(-\tau) \, U'_\alpha(\tau + \tau') \,
U_0(-\tau')$ as $\tau, \tau' \to + \infty$. Using 
formula~\eqref{eq:PropFreeOp} and the fact that $\mt R$ and $U_0$
commute, one gets
\begin{align*}
U_0(-\tau) \, & U'_\alpha(\tau + \tau')   \, U_0(-\tau') =  \\ 
& Id   - 4 \alpha \, \mt R \, 
U_0(-\tau) \, \mt P_x^+  \left( \int_0^{+\infty}  e^{-\alpha u} \theta_{-u} \,
du \right)\, U_0(\tau + \tau') \, \mt P_x^- \, U_0(-\tau') \, \mt
R.
\end{align*}
Roughly speaking, for large negative times the component of the wave
function
lying in the negative half-line approximately coincides
with the component of the wave function  that travels with positive
speed.
It seems then natural to replace, for large $\tau'$, $\mt P_x^-
U_0(-\tau')$ by $ U_0(-\tau') \mt P_k^+$.
Using the fact that $U_0(t)$, $\theta_u$ and $\mt P_k^+$ commute 
with one another at any $t$, one obtains
\begin{align}
 U_0(-\tau)  \, &U'_\alpha(\tau + \tau') \, U_0(-\tau')  =  \nonumber  \\ 
 & Id - 4 \alpha \, \mt R \,
U_0(-\tau) \, \mt P_x^+ \, U_0(\tau) \, \mt P_k^+ \left( \int_0^{+\infty}
 e^{-\alpha u} \theta_{-u} \, du \right)  \mt R + \eta_1 (\tau, \tau'),  \nonumber \\
 & \text{with} \quad \| \eta_1(\tau, \tau')\chi \|_2 \leq
4  \bigl\|  [\mt P_x^- \, U_0(-\tau') - U_0(-\tau') \mt P_k^+ ] \,
\mt R \chi \bigr\|_2   \label{eq:scatapprox1}.
\end{align}
In the last estimate we used the fact that all concerned operators 
have norm one, and a factor $\alpha^{-1}$ comes by the
integral in $u$. The next step consists in erasing the operator $\mt
P_x^+$ in the 
r.h.s of \eqref{eq:scatapprox1}. Indeed, it acts after
the operator $\mt P_k^+$, and therefore everything should move to
the left for positive times anyway. We obtain
\begin{align}
 U_0(-\tau)  \, U_\alpha(\tau + \tau') \, U_0(-\tau')  &=    Id -   
 S'_\alpha + \mt \eta_2 (\tau, \tau') + \mt \eta_1  (\tau, \tau') \nonumber \\
 \text{with } \quad S'_\alpha &:= 4 \, \alpha \, \mt R  \, \mt P_{k}^+ \left(
 \int_0^{+\infty} e^{-\alpha u} \theta_{-u} \, du \right)  \mt R
 \label{def:S1} \\
 \text{and, defined} \;  \chi_\alpha :=  \alpha \int_0^{+\infty}  e^{-\alpha u} \theta_{-u} 
\mt R \chi \, du  , & \quad    \| \mt \eta_2  (\tau, \tau') \chi \|_2 \leq    4
 \, \bigl\|
\mt P_x^- \, U_0(\tau) \, \mt P_k^+ \, \chi_\alpha
\bigr\|_2,
\label{eq:scatapprox2}
\end{align}
where in the last line, we used $\mt P_x^+ + \mt P_x^- = Id$. The
operator $S_\alpha = Id - \alpha S_\alpha'$ can be explicitly written
in Fourier variables. Indeed, for all
$\chi \in L^2$, using that $\mt R$ commutes with the  Fourier
transform
$\mt F$, one has
\bean
\widehat{S'_\alpha \chi} (k) &= & \mt F \left[ 4 \,  \mt R  \, \mt P_k^+
\left( \alpha  \int_0^{+\infty}  e^{-\alpha u} \theta_{-u} \, du \right)  \mt R  \chi 
\right] (k) \\
& = & 2\,  \mt F \left[  \alpha  \left( \int_0^{+\infty}  e^{-\alpha
    u} \theta_{-u} \, du 
\right)  \mt R \chi \right] (|k|) 
= 2    \alpha \int_0^{+\infty}  e^{-\alpha u} \mt F \left[
  \theta_{-u}    \mt R \chi 
\right] (|k|)  \, du \\
& = &  2  \alpha   \int_0^{+\infty}  e^{-(\alpha - i |k|) u} \mt F \left[  \mt R \chi
\right] (|k|)  \, du 
= \frac{2 \alpha}{\alpha - i |k|} \mt R \widehat \chi(|k|) \\
&=& \alpha \frac{\widehat \chi(k) + \widehat \chi(-k)}{\alpha - i |k|}.
\eean 
Owing to~\eqref{def:propDirac}, the scattering operator $S_\alpha$ 
is given by
\bean
\widehat{S_\alpha \chi} (k)  &:=& \widehat \chi (k) -  \frac\alpha{\alpha - i
|k|} (\widehat \chi(k) + \widehat \chi(-k)) \\
&=&  \frac{-i|k|}{\alpha - i |k|} \widehat \chi(k)   - \frac\alpha{\alpha - i
|k|} \widehat \chi(-k),
\eean
which, in view of ~\eqref{sfourier}, provides the transmission and
reflection amplitudes. 

\medskip
{\sl Step 2. Control of the error terms $ \eta_1 (\tau, \tau') \chi$
  and $ \eta_2 (\tau, \tau') \chi$.}
For the control of $\eta _1  (\tau, \tau') \chi$ one first observes 
\bean
\mt P_x^- \, U_0(-\tau') - U_0(-\tau') \mt P_k^+ &=&
\mt P_x^- \, U_0(-\tau') [\mt P_k^- + \mt P_k^+] -
[\mt P_x^- + \mt P_x^+]U_0(-\tau') \mt P_k^+ \\
& =& \mt P_x^- \, U_0(-\tau') \, \mt P_k^- -
 \mt P_x^+ \, U_0(-\tau') \mt P_k^+,
\eean
so that a bound on $\| \eta_1  (\tau, \tau') \chi \|_2$ follows from two applications of
Lemma~\ref{lem:dispfree}, to be proven in the next section. 
Thus, for any $n \ge 2$
\[
\| \mt \eta_1  (\tau, \tau') \chi \|_2 \leq 
{2 C_2} \, \| \la x \ra^2 \chi \|_2 \,  (\tau')^{-\frac14}.
\]
Next, a bound on $\| \eta_2(\tau, \tau') \chi \|_2$ follows by 
Lemma~\ref{lem:dispfree}, with $\chi_\alpha$ as initial data, and by
noticing that the moments
of $\chi_\alpha$ are related to those of $\chi$. Precisely, 
\bean
\| \la x \ra^2 \chi_\alpha \|_2 &:=&  \alpha \left\|  \int_0^{+\infty} e^{-\alpha u }
\la 
x \ra^2 \theta_{-u} \mt R  \chi \,du \right\|_2 
\le  \alpha \int_0^{+\infty} e^{-\alpha u }\left\| \la x - u \ra^2  \mt R \chi 
\right\|_2 \,du
\\
& \leq & 2 \alpha \int_0^{+\infty} e^{-\alpha u } \bigl( \left\| \la x
\ra^2
\chi \right\|_2 + \la u \ra^2 \| \chi \|_2 \bigr) \,du \\
& \le & 2 \left\| \la x \ra^{n+1} \chi
\right\|_2  + \frac{2}{\alpha^2}.
\eean
Therefore,
\[
\| \eta_2\chi \|_2 \leq C_2 \, \| \la x \ra^{n+1}
\chi_\alpha \|_2 \,  \tau^{-\frac14} \leq  C_2
\left[  \left\| \la x \ra^2 \chi
\right\|_2  + \frac{2}{\alpha^2} \right] \tau^{-\frac14}.
\]
\end{proof}

\subsection{A stationary phase estimate}

Here we give a stationary phase lemma. It is crucial in order to
prove the convergence of the scattering operator for the Dirac's delta potential in
dimension one, as stated in Proposition~\ref{prop:ScatDirac}.

\begin{lem} \label{lem:dispfree}
There exists a constant $C_2$ such that the following
estimate holds
\be \label{estim:dispfree2}
 \forall \tau \in \RR^+, \qquad \| \mt P_x^- U(\tau) \mt P_k^+ \chi 
\|_2 \leq  C_2 \, \| \la x \ra^2
\chi \|_2 \,  \tau^{-\frac14}.
 \ee
The same estimates are also valid for $\mt P_x^+ U(\tau) \mt P_k^-$, 
$\mt P_x^+ U(-\tau) \mt P_k^+$ and $\mt P_x^- U(-\tau) \mt P_k^-$, always
with positive $\tau$.
\end{lem}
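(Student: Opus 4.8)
The plan is to exploit the factorization of the free propagator to turn the statement into a one‑line Plancherel estimate. Writing $\psi := \mt P_k^+\chi$ (so that, by the statement's convention, $U=U_0$), a direct computation with the kernel $U_0(\tau,x)=(2\pi i\tau)^{-1/2}e^{ix^2/(2\tau)}$ gives
\[
(U_0(\tau)\psi)(x) = (i\tau)^{-1/2}\,e^{ix^2/(2\tau)}\,h_\tau(x/\tau),\qquad h_\tau := \mt F\bigl[e^{i(\cdot)^2/(2\tau)}\psi\bigr].
\]
Hence $|(U_0(\tau)\psi)(x)|^2=\tau^{-1}|h_\tau(x/\tau)|^2$, and the change of variable $\xi=x/\tau$ yields $\|\mt P_x^- U_0(\tau)\,\mt P_k^+\chi\|_2^2=\int_{-\infty}^0|h_\tau(\xi)|^2\,d\xi$. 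Now $\widehat\psi=\indic_{\RR^+}\widehat\chi$ vanishes on $\{\xi<0\}$, so there $h_\tau(\xi)=\mt F[(e^{i(\cdot)^2/(2\tau)}-1)\psi](\xi)$, and Plancherel gives the clean bound
\[
\|\mt P_x^- U_0(\tau)\,\mt P_k^+\chi\|_2 \le \bigl\|(e^{i(\cdot)^2/(2\tau)}-1)\,\psi\bigr\|_2 =: E_\tau .
\]

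The next step is to quantify the one delicate feature of $\psi=\mt P_k^+\chi$: since $\indic_{\RR^+}\widehat\chi$ jumps at $k=0$, $\psi$ decays only like $1/y$, so $|y|\psi\notin L^2$ in general. The right substitute is the decomposition
\[
y\,\psi(y) = c + g(y),\qquad c=\tfrac{i}{\sqrt{2\pi}}\,\widehat\chi(0),\qquad g=i\,\mt P_k^+\mt F^{-1}\widehat\chi',
\]
obtained by a single integration by parts in $\psi(y)=(2\pi)^{-1/2}\int_0^\infty e^{iky}\widehat\chi(k)\,dk$: the $k=0$ boundary term produces the constant $c$, while $\|g\|_2\le\|\widehat\chi'\|_2=\||x|\chi\|_2$. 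This is legitimate because $\langle x\rangle^2\chi\in L^2$ forces $\widehat\chi\in H^2\hookrightarrow C^1$, so $\widehat\chi(0)$ is defined and $\widehat\chi(\pm\infty)=0$; both $|c|$ and $\|g\|_2$ are then controlled by $\|\langle x\rangle^2\chi\|_2$.

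To conclude I would bound $E_\tau$ by cutting the integral at $|y|=\sqrt{2\tau}$ and using $|e^{iy^2/(2\tau)}-1|\le\min(2,\,y^2/(2\tau))$. On $\{|y|\le\sqrt{2\tau}\}$ write $y^4|\psi|^2=y^2|c+g|^2$, giving a contribution $\lesssim\tau^{-2}\bigl(|c|^2\tau^{3/2}+\|g\|_2^2\,\tau\bigr)$; on $\{|y|>\sqrt{2\tau}\}$ write $|\psi|^2=|c+g|^2/y^2$, giving $\lesssim|c|^2\tau^{-1/2}+\|g\|_2^2\,\tau^{-1}$. Both regions yield $\lesssim\tau^{-1/2}\|\langle x\rangle^2\chi\|_2^2$ (and for $\tau\lesssim1$ the estimate is immediate from $|e^{i\theta}-1|\le2$), so $E_\tau\le C_2\,\tau^{-1/4}\|\langle x\rangle^2\chi\|_2$, which is \eqref{estim:dispfree2}. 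The main obstacle is exactly this step: the $1/y$ tail of $\psi$ — the signature of the group velocity vanishing near $k=0$ — is what rules out a naive $|y|^{1/2}\psi\in L^2$ estimate (an infinite quantity) and downgrades the generic dispersive rate $\tau^{-1/2}$ to $\tau^{-1/4}$; isolating the constant $c$ and choosing the cut at $|y|\sim\sqrt\tau$ is what balances the two competing contributions.

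Finally, the three companion estimates follow by symmetry. Conjugation with the parity operator $(Rf)(x)=f(-x)$, which fixes $U_0(\tau)$ and swaps $\mt P_x^\pm$ as well as $\mt P_k^\pm$, reduces $\mt P_x^+ U_0(\tau)\mt P_k^-$ to the case just treated; and complex conjugation $Cf=\overline f$, which obeys $C U_0(\tau)=U_0(-\tau)C$, fixes $\mt P_x^\pm$ and swaps $\mt P_k^\pm$, converts the two $U_0(-\tau)$ versions into the $U_0(\tau)$ ones (using $R$ as well where needed).
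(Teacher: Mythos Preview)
Your proof is correct, and it takes a genuinely different route from the paper's.

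The paper proceeds by a classical frequency cutoff: it writes $\widehat\chi = \widehat\chi_l + \widehat\chi_h$ with $\widehat\chi_l$ supported on $[0,2\eta]$, bounds the low-frequency piece by $\sqrt\eta\,\|\la x\ra\chi\|_2$ via Gagliardo--Nirenberg, and treats the high-frequency piece by two integrations by parts in $k$ against the non-stationary phase $e^{-i\tau(k^2/2-ky)}$ (the phase has no critical point since $k\ge\eta>0>y$). Optimizing $\eta=\tau^{-1/2}$ gives the $\tau^{-1/4}$ rate.

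You instead exploit the factorization $U_0(\tau)=M_\tau D_\tau\,\mathcal F\,M_\tau$ (with $M_\tau$ multiplication by $e^{iy^2/(2\tau)}$ and $D_\tau$ a scaled dilation) to reduce everything to $\|(e^{i(\cdot)^2/(2\tau)}-1)\psi\|_2$, and then work entirely in physical space. The key technical input in your argument is the distributional identity $\widehat{y\psi}=i\partial_k(\indic_{\RR^+}\widehat\chi)=i\widehat\chi(0)\delta_0+i\indic_{\RR^+}\widehat\chi'$, which is exactly your decomposition $y\psi=c+g$; this is legitimate because $\widehat\chi\in H^2\hookrightarrow C^1_0$ under the hypothesis $\la x\ra^2\chi\in L^2$. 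Your splitting at $|y|=\sqrt{2\tau}$ then plays the same balancing role that the paper's choice $\eta=\tau^{-1/2}$ does.

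What each approach buys: yours is more self-contained and avoids the auxiliary differential operator $\square_y$ and the bookkeeping of the iterated integration by parts; it also makes transparent that the loss from $\tau^{-1/2}$ to $\tau^{-1/4}$ comes precisely from the constant term $c\sim\widehat\chi(0)$, i.e.\ from the jump of $\indic_{\RR^+}\widehat\chi$ at $k=0$. The paper's stationary-phase method is the more standard template and would adapt more routinely to other dispersion relations or to higher dimensions, where the lens-transform identity you use is less explicit. The symmetry reduction for the three companion estimates via parity and complex conjugation is the natural argument and is fine.
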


\begin{proof}[Proof of Lemma~\ref{lem:dispfree}]
We follow the classical argument used to obtain stationary phase estimates.
The first step consists in separating low frequencies from high ones in $\chi$. We choose
a smooth function $g : \RR \to [0,1]$ such that $g=1$ on $(-\infty,1]$, $g=0$ on
$[2, + \infty)$. We introduce a scale $\eta < 1$ to be fixed more
    precisely later, and the
associated function $g_\eta(k) := g \left( \frac {k} \eta
\right)$. We shall use
the decomposition 
\be \label{eq:lowhighdecomp}
\chi = \chi_l + \chi_h, \quad \text{with}\;  \widehat \chi_l =  \widehat \chi \,
g_\eta, \quad \widehat \chi_h = \widehat \chi \, (1 - g_\eta).
\ee
The contribution of $\chi_l$ is bounded by
\begin{align} \label{eq:contribchi_l}
\| \mt P_x^- U(\tau) \mt P_k^+  \chi_l  \|_2 &\leq \| \mt P_k^+  \chi_l \|_2  =
\| \indic_{\RR^+} \widehat \chi_l \|_2  \le \sqrt {2 \eta} \, \| \widehat \chi_l \|_\infty
\nonumber \\
& \leq     \sqrt{ 2 \eta \, \| \widehat \chi_l \|_2 \| \partial_k \widehat \chi_l
\|_2}   \le  C \, \| \la x \ra \chi \|_2 \, \sqrt \eta,
\end{align}
where we have used a Gagliardo-Nirenberg-Sobolev inequality.

\noindent
The contribution of $\chi_h$ can be controlled by using stationary phase methods. In fact,
denoting, for some fixed $\tau$, $\chi^\ast_h = \mt P_x^- U(\tau) \mt
P_k^+  \chi_h$, for any $x < 0$ we get 
\be \label{eq:chih}
\chi^\ast_h(x)  =  \int e^{ikx} \widehat{\chi^\ast_h}(k)  \,\frac
    {dk}{\sqrt {2 \pi}} 
 =   \int e^{ikx} \widehat{U(\tau)  \mt P_k^+ \chi_h}(k)  \, \frac{dk} {\sqrt {2 \pi}} 
 =   \int_\eta^{+\infty} e^{-i \tau \left(\frac{k^2}2 - \frac{kx}\tau \right)}
\widehat{  \mt P_k^+ \chi_h}(k) \,  \frac{dk} {\sqrt {2 \pi}}.
\ee
Introducing the differential operator $\square_y$ defined by
\be
[\square_y h]  (k)= \frac d {dk} \left( \frac{h(k)}{k - y }\right),
\ee
and integrating twice by parts, we obtain
\begin{eqnarray} \nonumber
\chi^\ast_h( \tau y ) & = & \int e^{-i \tau \left(\frac{k^2}2 - k y \right)} \widehat{\chi_h}(k) \,dk =
- \frac i \tau
\int e^{-i \tau \left(\frac{k^2}2 - k y \right)} \square_y  \widehat{\chi_h}
 (k) \,dk \\
& = &  - \frac 1 {\tau^2} \label{eq:intpart}
   \int_\eta^{+\infty} e^{-i \tau \left(\frac{k^2}2 - k y \right)}
\square_y^2 \widehat{ \chi_h}(k) \,dk. \label{eq:chih2}
\end{eqnarray}
The quantity $\square_y^2 \widehat{ \chi_h}$ may be rewritten as the
following sum
\[
\square_x^2 \widehat{ \chi_h}(k) =
\sum_{n_1+n_2+n_3=2} c_{n_1,n_2,n_3} \frac 1{\eta^{n_2}}
\frac{\partial_k^{n_1} \widehat \chi(k)}{( k - y )^{2+n_3}}
\partial_k^{n_2} \left[1 - g \left( \frac k \eta\right),
  \right]
\]
where all amplitudes $c_{n_1,n_2,n_3}$ are bounded (in absolute value) by $3$. 
Using this sum in equation~\eqref{eq:chih2}, we can perform some integration on $k$ and get
\[
| \chi^\ast_h( \tau y )   |\le 
\frac {\| \partial_k^2 \widehat \chi\|_2}{\tau^2 (\eta-y)^{3/2}}+ 
\frac{C_2}{\tau^2} \sum_{n_1+n_2+n_3=2, n_1 \neq 2}  \frac {\|
  \partial_k^{n_1} \widehat
  \chi\|_\infty}{\eta^{n_2}(\eta-y)^{2+n_3-1}}.   
\]
The first term in the r.h.s. comes from the term with $n_1=2$, for
which we used 
Cauchy-Schwarz inequality. The constant $C_2$ depends on $ \|
\partial^i g \|_ \infty$ for $i=1,2$. 
Remark that for $n_1=0,1$, we may always bound $\| \partial_k^{n_1} \widehat \chi\|_\infty$ by $\| \partial_k^2 \widehat \chi\|_2$ thanks to the Gagliardo-Nirenberg-Sobolev inequality $\| \zeta
\|_\infty^2 \le  \| \partial_k \zeta \|_2 \| \zeta \|_2$. In view of this and
since $y<0$, the worst term in the sum of the r.h.s. is the one
obtained for  $n_2 =2$. This leads  to the bound
\[
| \chi^\ast_h( \tau y )   |\le  
\frac { C_2 \| \partial_k^2 \widehat \chi\|_2}{\tau^2 } \biggl( \frac1{(\eta-y)^{3/2}} + \frac1{\eta^2 (\eta-y)} \biggr).
\]
 Taking the square and integrating with respect to $x= \tau y$, we obtain
 \be \label{eq:contrib_h}
 \| \chi^\ast_h \|_2 \le  \frac { C_2 \| \partial_k^2 \widehat \chi\|_2}{\tau^{3/2} } 
 \biggl( \frac1\eta + \frac1{\eta^{5/2}} \biggr) \le  \frac { C_2 \| \partial_k^2 \widehat \chi\|_2}{\tau^{3/2} \eta^{5/2}},
 \ee
 when $\eta \le 1$. 
 Adding~\eqref{eq:contribchi_l} and~\eqref{eq:contrib_h}, we finally obtain
\[
\| \mt P_x^- U(\tau) \mt P_k^+  \chi \|_2 \le 
  C_2  \| \la x \ra^2 \chi \|_2  \Bigl(  \sqrt \eta + \tau^{-3/2} \eta^{-5/2}
\Bigr).
\]
The optimal choice for $\eta$ is then $\eta = \tau^{-1/2}$ which
leads to
\[
\| \mt P_x^- U(\tau) \mt P_k^+  \chi \|_2 \le C_2 \| \la x \ra^2 \chi \|_2
\tau^{- \frac14}.
\]
\end{proof}


\noindent {\bf Acknowledgments.} 
The authors would like to thank the anonymous referees, for their very careful reading of our manuscript. 

The authors would like to acknowledge
support from the ANR LODIQUAS  (Modeling and Numerical Simulation of
Low Dimensional Quantum Systems, 2011-2014) and the French-Italian
research project GREFI-MEFI. R.A. was partially supported by the PRIN
project ``Critical point theory and perturbative methods for nonlinear
differential equations''.


\begin{thebibliography}{9}
\footnotesize{
\bibitem{adami_1} R. Adami, L. Erd\H{o}s, {\it Rate of decoherence for an
electron weakly coupled to a phonon gas}, J. Stat. Phys. {\bf 132} (2008),  no.
2, 301--328.

\bibitem{adami_3} R. Adami, R. Figari, D. Finco, A. Teta,  {\em On the 
asymptotic behaviour of a quantum two-body system in the small 
mass ratio limit}, {J. Phys. A: Math. Gen.}, {\bf 37} (2004), 
7567--7580.

\bibitem{adami_2} R. Adami, R. Figari, D. Finco, A. Teta,  {\it On the
    asymptotic dynamics of a quantum system composed by heavy and
    light particles},  Comm. Math. Phys.  {\bf 268}  (2006),  no. 3,
  819--852.

\bibitem{Cla_Riccardo} R. Adami, C. Negulescu,  {\it A numerical study
  of quantum decoherence}, CiCP (Communications in Computational
  Physics) 12 (2012), no. 1, 85--108.  


\bibitem{ABD95} S. Albeverio, Z. Bre{'}zniak,
  L. Dabrowski, {\em Fundamental solutions of the Heat and Schr¨odinger 
equations with point interactions}, J. Func. Anal., {\bf 130}, 220-254, 1995.

\bibitem{AGH-KH}
S. Albeverio, F. Gesztesy, R. H{\o}gh-Krohn and H. Holden, Solvable Models in Quantum
Mechanics 2nd ed. with an appendix by P. Exner. AMS, Providence R.I, 2005.

\bibitem{bertoni} A. Bertoni, {\em Simulation of Electron Decoherence
  Induced by Carrier-Carrier Scattering}, J. 
Comp. El. 2 (2003), 291–295.


\bibitem{giulini} Ph. Blanchard, D. Giulini, E. Joos, C. Kiefer, J. Kupsch,
I.-O. Stamatescu, H.D. Zeh, Decoherence 
and the Appearance of a Classical World in Quantum Theory, Springer,
1996.

 \bibitem{petruccione} H.-P. Breuer, F. Petruccione, The Theory of Open
  Quantum Systems, Oxford University Press, Oxford, 2007.

\bibitem{carlone} C. Cacciapuoti, R. Carlone, R. Figari, 
{\em Decoherence induced by scattering: a three dimensional
model}, J. Phys. A: Math. Gen., 38, n. 22 (2005), 4933–-4946.

\bibitem{caldeira} A.O. Caldeira, A.J. Leggett, 
{\em Influence of damping on quantum interference: an exactly soluble
model}, Phys. Rev. A 31 (1985), 1059.

\bibitem{clark} J. Clark, {\em The reduced effect of a single
  scattering with a low-mass particle via a point interaction}, 
J. Func. An. 256 (2009), no. 9, 2894–-2916.

\bibitem{dft} D. D\"{u}rr, R. Figari, A. Teta, 
{\em Decoherence in a two-particle model}, J. Math. Phys. 45 (2004), no.
4, 1291–-1309.

\bibitem{spohn}
D. D¨urr, H. Spohn, {\em Decoherence Through Coupling to the Radiation Field}, in Decoherence:
Theoretical, Experimental and Conceptual Problems, Blanchard Ph., Giulini D., Joos E., Kiefer
C., Stamatescu I.-O. eds., Lect. Notes in Phys. 538 (2000), Springer, 77–86.

\bibitem{schulman}
B. Gaveau, R.S. Schulman, {\em Explicit time-dependent Schr\"odinger
  propagators}, J. Phys. A: Math. Gen. 19 (1986), 1833--1846.

\bibitem{hornberger1} K. Hornberger, J.E. Sipe, {\em Collisional
  decoherence reexamined},  Phys. Rev. A 68 (2003), 012105,
1–16.

\bibitem{hornberger2}
K. Hornberger, S. Uttenhaler, B. Brezger, L. Hackerm¨uller,M. Arndt, A. Zeilinger, Collisional
decoherence observed in matter wave interpherometry, Phys. Rev. Lett., 90 
(2003), 160401.

\bibitem{hornberger3}
K. Hornberger, B. Vacchini, Monitoring derivation of the quantum linear Boltzmann equation,
Phys. Rev. A, 77, (2008), 022112 1–18.

\bibitem{Joos-Zeh} E. Joos, H.-D. Zeh, {\em The Emergence of Classical
  Properties Through Interaction with the Environment},  Z. Phys. B59,
  (1985) 
  223--243.

\bibitem{omnes}
R. Omn\`es, The Interpretation of Quantum Mechanics Princeton University Press, Princeton,
1994.

\bibitem{pazy} A. Pazy,  {\it Semigroups of linear operators and
  applications to partial differential equations}, Springer-Verlag,
  New-York, 1983. 

\bibitem{rs1} M. Reed, B. Simon, Methods of Modern Mathematical
  Physics vol I: Functional Analysis, Academic Press Inc., 1970.

\bibitem{rs3}  M. Reed, B. Simon, Methods of Modern Mathematical
  Physics vol III: Scattering Theory, Academic Press Inc., 1970.

\bibitem{schlein} I. Rodnianski, B. Schlein, {\em 
Quantum fluctuations and rate of convergence towards mean field
dynamics},
Commun. Math. Phys. 291 (2009), 31--61.

\bibitem{schechter} M. Schechter, Operator Methods in Quantum
  Mechanics, Courier Dover Publications, 2003.

\bibitem{schlosshauer} M. Schlosshauer, Decoherence and the
  Quantum-To-Classical Transition,  Springer-Verlag (2007).
}

\end{thebibliography}
\end{document}